\newtheorem{theorem}{Theorem}[section]
\newtheorem{proposition}[theorem]{Proposition}
\newtheorem{lemma}[theorem]{Lemma}
\newcommand{\CC}{\mathbb{C}}
\newcommand{\sF}{\mathcal{F}}
\newcommand{\sL}{\mathcal{L}}
\newcommand{\sP}{\mathcal{P}}
\newcommand{\ZZ}{\mathbb{Z}}
\newcommand{\RR}{\mathbb{R}}
\newcommand{\DD}{\mathbb{D}}
\newcommand{\HH}{\mathbb{H}}
\newcommand{\sinc}{\text{sinc}}
\newcommand{\sgn}{\text{sgn}}
\newcommand{\loc}{\text{loc}}
\renewcommand{\Im}{\mathrm{Im}}
\def\s{{{x}}}
\def\z{{{u}}}
\def\Z{{{W}}}
\def\m{{{m}}}
\def\ts{{\widetilde\s}}
\let\vv\v
\let\vold=\v
\def\v{{{s}}}
\def\u{{{y}}}
\def\xx{{{v}}}
\def\k{{{k}}}
\newcommand{\kk}{K}
\def\XX{{F^\nu_0(\RR)}}
\def\Cm{C^{{m}}}
\def\Const{{C\,}}
\def\epsilon{{\varepsilon}}
\def\si{{\sigma}}
\def\a{{w_{0,\beta,\infty}^\pm}}
\def\b{{d_a^\pm}}
\begin{document}

\title{Travelling heteroclinic waves in a Frenkel-Kontorova chain with anharmonic on-site potential} \author{Boris Buffoni,
  Hartmut Schwetlick and Johannes Zimmer\thanks{University of Bath, Department of Mathematical Sciences, Bath BA2 7AY, United Kingdom, \texttt{zimmer@maths.bath.ac.uk}}}

\date{}

\maketitle

\begin{abstract}
 The Frenkel-Kontorova model for dislocation dynamics from 1938 is given by a chain of atoms, where neighbouring atoms interact
  through a linear spring and are exposed to a smooth periodic on-site potential. A dislocation moving with constant speed
  corresponds to a heteroclinic travelling wave, making a transition from one well of the on-site potential to another. The
  ensuing system is nonlocal, nonlinear and nonconvex. We present an existence result for a class of smooth nonconvex on-site
  potentials. Previous results in mathematics and mechanics have been limited to on-site potentials with harmonic wells. To
  overcome this restriction, we propose a novel approach: we first develop a global centre manifold theory for anharmonic wave
  trains, then parametrise the centre manifold to obtain asymptotically correct approximations to the solution sought, and finally
  obtain the heteroclinic wave via a fixed point argument.  

  \noindent Mathematics Subject Classification: 37K60, 34C37, 35B20, 58F03, 70H05
\end{abstract}

\section{Introduction}
\label{section: particular} 

In 1938, Frenkel and Kontorova~\cite{Frenkel1939a} proposed a model for plastic deformations and twinning, given by an infinite
chain of nonlinear oscillators linearly coupled to their nearest neighbours,
\begin{equation}
  \label{eq:fk-orig-IVP}
   \ddot \upsilon_j(t) = \gamma\left[ (\upsilon_{j+1}(t) - \upsilon_j(t)) - (\upsilon_j(t) - \upsilon_{j-1}(t))\right] - 
  g'(\upsilon_j(t)).
\end{equation}
These are Newton's equation of motion for atom $j \in \ZZ$ with mass $1$; $\gamma$ is the elastic modulus of the elastics springs
and $g$ is smooth and periodic.

Travelling waves are particularly simple forms of coherent motion; here they are of the form $\upsilon_j(t)=u(j-ct)$ with some
travelling wave profile $u$. Equation~\eqref{eq:fk-orig-IVP} written in travelling wave coordinates $\s := j -ct$, with $c$ being
the wave speed, is
\begin{equation}
  \label{eq:fk-orig}
  c^2u''(\s)- \gamma \Delta_D u(\s) + g'(u(\s)) =0,
\end{equation}
where $\Delta_D$ is the discrete Laplacian
\begin{equation}
  \label{eq:lap}
  (\Delta_D u)(\s):=u(\s+1)-2u(\s)+u(\s-1).
\end{equation}
In the original paper~\cite{Frenkel1939a}, the force of the on-site potential is (in suitable units) $g'(u) =\sin(2 \pi u)$.

Equation~\eqref{eq:fk-orig} is an advance-delay differential-difference equation of Hamiltonian nature, nonlocal and
nonlinear. Proving the existence of small solutions to~\eqref{eq:fk-orig} has been a major challenge, accomplished only in 2000 in
the seminal paper by Iooss and Kirchg\"assner~\cite{Iooss2000a}. They establish the existence of \emph{small} amplitude solutions,
under the convexity assumption $g''(0) > 0$.  In particular, Iooss and Kirchg\"assner prove the existence of nanopterons, that is,
localised waves which are superimposed to a periodic (``phonon'') wave train. Another remarkable result is the existence of
breathers (spatially localised time-periodic solutions) by MacKay and Aubry~\cite{MacKay1994a}. There is a wealth of studies of
Frenkel-Kontorova models. We refer the reader to the monograph by Braun~\cite{Braun1998a} and only mention more recent results for
sliding states by Qin for a forced Frenkel-Kontorova chain, both with and without damping~\cite{Qin2010a,Qin2012a} and periodic
travelling waves (wave trains) by Fe{\vv{c}}kan and Rothos~\cite{Feckan2007a}.

The mathematical theory of existence of travelling wave dislocation as originally posed in~\cite{Frenkel1939a}, however, is still
largely open. One reason is that dislocations are large solutions, making the transition from one well of $g$ to another, and
therefore experience the nonconvexity of on-site potential. We highlight a few results for the analysis of travelling dislocations
for the chain~\eqref{eq:fk-orig}.  An early study is that of Frank and van der Merwe~\cite{Frank1950a}, where the continuum
approximation of~\eqref{eq:fk-orig}, the sine-Gordon equation, is analysed. Rigorously, the dangers of relying on the PDE
counterpart of a lattice equation were realised decades later (though Schr\"odinger pointed out this difference in his ingenious
analysis~\cite{Schrodinger1914a}). In particular, Iooss and Kirchg\"assner~\cite{Iooss2000a} prove the existence of infinitely
many types of travelling waves which do not persist in the continuum approximation. Friesecke and Wattis~\cite{Friesecke1994a}
study the Fermi-Pasta-Ulam-Tsingou (FPUT) chain (nonlinear interaction between nearest neighbour atoms and $g\equiv0$) and obtain
the remarkable result that in a spatially discrete setting, solitary waves exist quite generically, not just for integrable
systems (such as the so-called Toda lattice). Recently, explicit solitary waves have also been constructed for the FPUT chain with
piecewise quadratic potential~\cite{Truskinovsky2014a}.

The analysis of dislocation solutions to the lattice equation~\eqref{eq:fk-orig} relies in previous work on the assumption that
$g$ is piecewise quadratic; then the force $g'$ in~\eqref{eq:fk-orig} is piecewise linear and Fourier methods can be applied. We
refer the reader to Atkinson and Cabrera~\cite{Atkinson1965a} (note that some findings of that paper have been questioned in the
literature~\cite{Earmme1977a}) and extensive work by Truskinovsky and collaborators, both for the Fermi-Pasta-Ulam-Tsingou chain
with piecewise quadratic interaction~\cite{Truskinovsky2005a} and the Frenkel-Kontorova model~\cite{Kresse2003a}. Kresse and
Truskinovsky have also studied the case of an on-site potential with different moduli (second derivatives at the
minima)~\cite{Kresse2004a}. Slepyan has made a number of important contributions, for
example~\cite{Slepyan2005a,Slepyan2005b}. Flytzanis, Crowley, and Celli~\cite{Flytzanis1977a} apply Fourier techniques to a
problem where the potential consists of three parabolas, the middle one being concave.

To the best of our knowledge, the original problem of a dislocations exposed to an anharmonic on-site potential has only been
amenable to careful numerical investigation~\cite{Savin2000a}. Obviously, mathematically, the use of Fourier tools as for the
results discussed in the previous paragraph is no longer possible. Physically, the introduction of such a nonlinearity changes the
nature of the system fundamentally, as modes can now mix. The physical interpretation of the result presented in this paper is
that despite this change, solutions exist and, remarkably, can be obtained via a perturbation argument from the degenerate case of
piecewise quadratic wells (the degeneracy manifests itself in a twofold way, firstly in the prevention of mode mixing and secondly
in a singularity of the force $g'$ at the dislocation line; the existence result presented here holds for the physically realistic
case of smooth forces). We develop what seems to be a novel approach to prove existence for systems with small nonlinearities (in
our case $g$ being anharmonic but close to a piecewise harmonic potential; the reason for this restriction is that we apply a
fixed point argument). We first obtain a detailed understanding of wave trains in the anharmonic (but near harmonic) wells of the
on-site potential; this is obtained by a global centre manifold description, much in the spirit of the local analysis of Iooss and
Kirchg\"assner~\cite{Iooss2000a}. Unlike them, we do not perform a normal form analysis but instead construct a parametrisation of
the centre manifold. From this knowledge, it is possible to construct a one-parameter family $w_\beta$, $\beta\in[-1,1]$, of
approximate (asymptotically correct, as $\s \to \pm \infty$) heteroclinic solutions of the Frenkel-Kontorova travelling wave
equation.  This step can be seen as a homotopy method from solutions or approximate solutions to the problem with piecewise
quadratic wells (the homotopy parameter being $\epsilon\geq 0$ in Theorem \ref{theo:main}, even if in the end we do not rely on
continuity with respect to $\epsilon$, but rather on the smallness of $\epsilon>0$).  This is an unconventional step in the sense
that we do not attempt to find a homotopy between solutions to a family of problems, but only between approximate solutions. In a
final step, the heteroclinic travelling wave solution is obtained from the approximate solutions via a topological fixed point
argument. A key property is that the family $w_\beta$ satisfies a transversality condition with respect to $\beta$ (see~\eqref{eq:
  K_0}).  This method, in a much simpler setting in which centre manifold theory is not used, was developed
in~\cite{Buffoni2017a}.

On an abstract level, the approach developed here allows for the passage from a linear problem to a moderately nonlinear one. We
remark that the analysis of the linear problem (\eqref{eq:fk-orig} with piecewise quadratic on-site potential) is challenging in
its own right, and has been solved mathematically by a de-singularisation of the Fourier image of the
solution~\cite{Schwetlick2009a,Kreiner2011a}. While the detailed arguments we give below are admittedly rather technical, the
method developed here might also be useful for the numerical computation of solutions to such nonlinear problems; indeed, the
centre manifold approach can guide the implementation of a path-following technique, while the fixed point argument can for
example translate into a gradient descent method.

We have chosen the original Frenkel-Kontorova equation~\eqref{eq:fk-orig-IVP} but remark that some of the references given above
study a modified model, with an added force. There are also extensions to higher space dimensions, for
example~\cite{Srolovitz1986a}. The methodology of this paper should in principle apply to these problems as well.

The result proved here covers cases of~\eqref{eq:fk-orig} with $g'$ anharmonic, periodic and $C^\infty$. As our argument is
perturbative in nature, it is not clear whether the particular choice of a trigonometric potential made in~\cite{Frenkel1939a} is
covered; we have no explicit control over the range of perturbations covered. However, while the choice of a trigonometric
function as made by Frenkel and Kontorova~\cite{Frenkel1939a} is natural, there is no intrinsic reason to prefer such an on-site
potential. Here, we make for simplicity the choice $\gamma=1$ and place two neighbouring minima of $g$ at $\pm 1$.

At the end of Section~\ref{sec:Setting-main-result}, we give a plan of the paper, summarising the required steps and linking them
to the relevant sections. Throughout the paper, $C$ is a constant that may change from line to line; however, $C$ is independent
of $u$ and of small enough $\epsilon$.

\section{Setting and main result}
\label{sec:Setting-main-result}

We can assume that in travelling wave coordinates, the dislocation line is at the origin $\s=0$; then all atoms in the left
half-line are in one well of the on-site potential $g$ and all atoms in the right half-plane are in the neighbouring well on the
right.  It is no loss of generality to consider an on-site potential $g$ with two wells, rather than a periodic one.  Indeed, the
solutions we obtain for a two-well potential are also solutions for the same equation with a periodic potential.  This is implied
by our approach to obtain a special two-well solution as a sum of an associated particular solution and a corrector, both being
uniformly bounded. Since upper and lower bounds on the solution are available, the solution will also solve the problem for a
periodic potential built by extension from the two-well potential.  We thus show the existence of heteroclinic waves for
\begin{equation}
  \label{eq:main}
  c^2u''(\s)-\Delta_D u(\s)+\alpha u(\s)-\alpha \psi'(u(\s)) =0, 
\end{equation}
where $\psi'$ is a perturbation of the sign function. We note that~\eqref{eq:main} is~\eqref{eq:fk-orig} with the choice
\begin{equation}
  \label{eq:g}
  g(u) := \frac \alpha 2 (u^2+1) - \alpha \psi(u). 
\end{equation}
This choice is made since $g$ is a perturbation of the piecewise quadratic on-site potential 
\begin{equation*}
  g_0 (u) := \frac \alpha 2 \min\left((u+1)^2,(u-1)^2\right)   
\end{equation*}
(see Fig.~\ref{fig:g}); indeed, the force associated with $g_0$ is $g_0'(u) = \alpha u - \alpha \sgn (u)$, while the force
associated with $g$ is $g'(u) = \alpha u - \alpha \psi' (u)$.  We point out that $g$ (and hence $g_0$, with the choice $\psi'(u) =
\sgn(u)$) is a double-well potential, mimicking two neighbouring wells of the trigonometric potential proposed by Frenkel and
Kontorova. Precise assumptions on $\psi$ are stated in Theorem~\ref{theo:main} below.
\begin{figure}
  \begin{center}
    \includegraphics[scale=0.5]{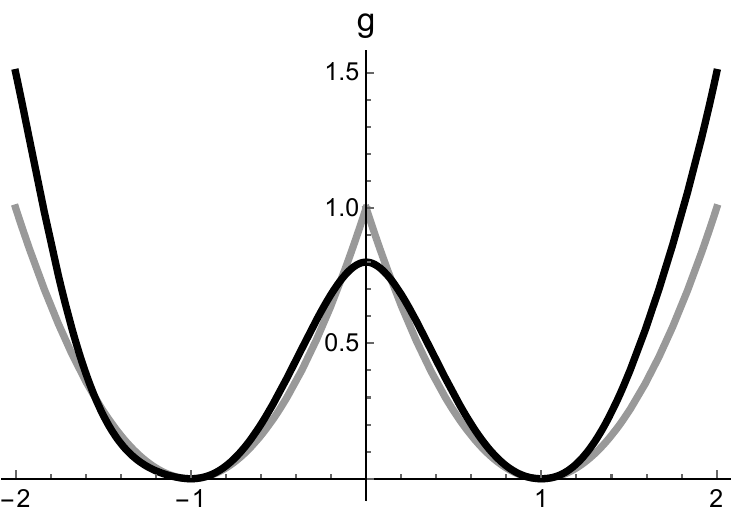}
  \end{center}
  \caption{Schematic plot of an anharmonic on-site potential $g$ as studied in this article (black); the piecewise harmonic
    potential $g_0$ is plotted for comparison (grey). The results presented in this article also hold for smooth periodic
    extensions of potentials $g$ as shown here.}
  \label{fig:g}
\end{figure}

To motivate some assumptions in the main theorem, we briefly inspect the linear part of~\eqref{eq:main},
\begin{equation}
  \label{eq:L}
  u\rightarrow Lu:=c^2u''-\Delta_D u+\alpha u. 
\end{equation}
In Fourier space, $L$ is written as 
\begin{equation*}
  -c^2\k^2+2(1-\cos \k)+\alpha=
  -c^2\k^2+4\sin^2(\k/2)+\alpha
  =:D(\k), 
\end{equation*}
where $D$ is the \emph{dispersion function}. Let $\alpha$ be given by
\begin{equation}
  \label{eq:alpha}
  \alpha := c^2\left(\frac\pi 2\right)^2 - 2; 
\end{equation}
this choice was also made in~\cite{Kreiner2011a}. Then trivially
\begin{equation}
  \label{eq:k0}
  \k_0 := \frac \pi 2
\end{equation}
is one root of $D$ and $-\k_0$ is another. Furthermore, for $c=1$, $D'(\k) =-2c^2\k+2\sin\k$ vanishes only at $\k=0$. Thus, if $c$
is sufficiently close to $1$ (we will only consider the case where additionally $c \leq 1$), then $D$ vanishes exactly at $\k_0$
and $-\k_0$. This is the key property of $D$ used in this paper.

The main result of the paper is the following theorem.
\begin{theorem}
  \label{theo:main}
  We consider the equation~\eqref{eq:main},  
  \begin{equation*}
    c^2u''-\Delta_D u+\alpha u-\alpha \psi'(u) =0
  \end{equation*}
  on $\RR$, where $\Delta_D$ is the discrete Laplacian defined in~\eqref{eq:lap}. For small $\epsilon\in(0,1/2)$, the
  on-site potential $\psi_\epsilon$ is assumed to be an even function $\psi=\psi_\epsilon\in C^\infty(\RR,\RR)$
  satisfying the following conditions. Let
  \begin{equation}
    \label{eq:psi2prime}
    |\psi_\epsilon''(u)|\leq 2\epsilon^{-1} \text{ for } |u| <\epsilon,    
  \end{equation}
  and, for $|u|\geq \epsilon$,
  \begin{equation}
    \label{eq:psiprimeoutside}
    |\psi_\epsilon'(u)-\sgn(u)|< \Const \epsilon,
  \end{equation}
  and, again for $|u|\geq \epsilon$,
  \begin{equation}
    \label{eq: cdn on psi} 
    |\psi_\epsilon''(u)|<\Const \epsilon,~~
    |\psi_\epsilon'''(u)|<\Const \epsilon,~~
    |\psi_\epsilon^{(4)}(u)|<\Const \epsilon,~~
    |\psi_\epsilon^{(5)}(u)|<\Const  \epsilon
  \end{equation}
  (there is no condition on $\psi'''(u)$ for $|u|<\epsilon$).  
  We also assume that $\psi'_\epsilon(u)-u$ vanishes at $u=1$ and $u=-1$ (for small $\epsilon>0$).

  Let $\k_0$ be given by~\eqref{eq:k0} and $\alpha$ be given by~\eqref{eq:alpha}. If $\epsilon>0$ is small enough, then there
  exists a range of velocities $c\leq 1$ close to $1$ such that for these velocities, there exists a heteroclinic solution
  to~\eqref{eq:main}. Here heteroclinic means that the asymptotic state near $-\infty$ is in one well of the on-site potential
  while the state near $+\infty$ is in the other.
\end{theorem}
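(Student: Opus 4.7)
The strategy follows the outline in the introduction: combine a global centre manifold description of wave trains in each well with a fixed point argument for the heteroclinic corrector, and use a topological argument in the parameter $\beta$ to overcome the obstruction from the neutral Fourier modes at $\pm\k_0$.

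First, I would analyse the linear operator $L$ from~\eqref{eq:L}. Because $D(\pm \k_0)=0$ are simple roots and, for $c$ close to $1$ with $c\le 1$, these are the only real zeros of $D$, Fourier inversion yields a right inverse of $L$ modulo the two-dimensional span of $\{e^{\pm i\k_0\s}\}$. In suitable function spaces (incorporating the de-singularisation used in~\cite{Schwetlick2009a,Kreiner2011a}), one obtains a bounded solution operator for $Lu=f$ whenever $f$ is transverse to the centre modes. Next, recentring at each well minimum by $u=\pm 1+\tilde u$, the nonlinearity $\psi'(\pm 1+\tilde u)-\sgn(\pm 1+\tilde u)$ and its first few derivatives are globally of order $\epsilon$ by~\eqref{eq:psiprimeoutside}--\eqref{eq: cdn on psi}, while $\psi''$ is only bounded by $\epsilon^{-1}$ in a window of width $\epsilon$ around the singular set. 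Viewing the advance--delay equation as an infinite-dimensional dynamical system in the spirit of~\cite{Iooss2000a}, the centre subspace is two-dimensional and spanned by $e^{\pm i\k_0\s}$. The uniform smallness of the nonlinearity should allow the centre manifold to be constructed globally rather than merely locally, with an explicit parametrisation; this produces a family of bounded wave train solutions in each well.

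Using these asymptotic wave trains I would build a one-parameter family $w_\beta$, $\beta\in[-1,1]$, of approximate heteroclinic profiles, gluing a left-well wave train on $(-\infty,0]$ to a right-well wave train on $[0,\infty)$ via a smooth cutoff near $\s=0$, with the asymptotic amplitudes and phases indexed by $\beta$ so that $w_\beta$ connects the two wells and is asymptotically correct at $\pm\infty$. By construction the residual
\[
R(\beta):=c^2 w_\beta''-\Delta_D w_\beta+\alpha w_\beta-\alpha\psi'(w_\beta)
\]
is small, localised near $\s=0$, and depends smoothly on $\beta$, with a non-degenerate derivative in the direction of the expected obstruction; this is the transversality condition~\eqref{eq: K_0}. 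The heteroclinic wave is then sought as $u=w_\beta+v$, where the corrector satisfies
\[
Lv=\alpha\bigl[\psi'(w_\beta+v)-\psi'(w_\beta)\bigr]-R(\beta).
\]
Outside the $\epsilon$-window the right-hand side is Lipschitz in $v$ with small constant by~\eqref{eq: cdn on psi}; inside, the $\epsilon^{-1}$ blow-up of $\psi''$ is compensated by the fact that $w_\beta$ traverses those levels only briefly. Applying the de-singularised inverse of $L$ to the component of the right-hand side orthogonal to the centre modes yields a contraction (or a Schauder) fixed point $v=v(\beta)$; the projection of the residual onto the centre modes is a finite-dimensional function of $\beta$, and the transversality condition forces it to have a zero by an intermediate-value or degree argument, producing a genuine heteroclinic solution.

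The main obstacle, as I see it, is the interplay between the neutral modes of $L$ and the fact that the anharmonic nonlinearity mixes them: one can no longer expect the corrector to decay at infinity, so the function-space framework must accommodate the asymptotic wave trains prescribed by $w_\beta$, and the inverse of $L$ must be compatible with this prescription. Controlling the $\epsilon$-window where $\psi''$ is large while keeping all constants uniform in $\epsilon$ and in $c$ close to $1$, and matching the asymptotic phases across $\s=0$ well enough that the centre-mode obstruction depends genuinely non-trivially on $\beta$, is the chief technical difficulty.
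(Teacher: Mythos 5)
There is a genuine gap in your construction of the approximate profiles $w_\beta$. You propose to glue a left-well wave train on $(-\infty,0]$ to a right-well wave train on $[0,\infty)$ via a smooth cutoff near $\s=0$, and you then assert that the residual $R(\beta)=c^2w_\beta''-\Delta_D w_\beta+\alpha w_\beta-\alpha\psi'(w_\beta)$ is small and localised near $\s=0$. That assertion is exactly where the argument fails: the equation is genuinely nonlinear and nonconvex in the $\epsilon$-window around the sign change, and a cutoff-spliced pair of far-field wave trains has no reason to approximately satisfy it there. The advance--delay terms sample both branches across $[-1,1]$, so the mismatch across the cutoff region is $O(1)$ in $\epsilon$, not small. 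Nothing in the proposal controls this.

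The paper's key device, which is absent from your proposal, is to build the transition through the nonconvex core out of the \emph{exact} heteroclinic solution $u_p$ of the degenerate piecewise-quadratic equation~\eqref{eq:kz} from~\cite{Kreiner2011a}. Near $\s=0$ the approximate profile is literally $u_p$ (after a time reparametrisation adapted to the $\epsilon$-dependent period), and since $\psi'_\epsilon$ differs from $\operatorname{sgn}$ by $O(\epsilon)$ outside a window of width $O(\epsilon)$ where $\psi'_\epsilon$ is still uniformly bounded, the residual near the origin is controlled (claim~\ref{it:behaviour3} of Proposition~\ref{prop: behaviour}). Away from the origin the paper does not splice separate wave trains either: it transports $w_{0,\beta}=u_p+B\beta u_o$ pointwise onto the $\epsilon$-dependent centre manifold via the parametrisation $H_1$ of Theorem~\ref{thm:def of H1}, so that the asymptotics become exact wave trains for~\eqref{eq:main} without any gluing. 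Your proposal also leaves the phase-matching across $\s=0$ as a difficulty; the paper resolves it structurally by working throughout with odd functions (odd $u_p$, odd $u_o$, $H_1(\cdot,\cdot)$ odd in its first argument, odd corrector space $E^\nu_{1,odd}(\RR)$), so that no phase shift between the two asymptotic wave trains can arise and the corrector can indeed decay exponentially. Finally, the contraction option you float would not go through uniformly in $\epsilon$ because of the $\epsilon^{-1}$ size of $\psi''$ in the transition window; the paper uses Schauder together with a preliminary linear inversion $\Xi_{\widetilde\beta}$ to absorb the $\alpha\psi''(w_\beta)r$ term.
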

The assumption that $\psi'_\epsilon(u)-u$ vanishes at $u=1$ and $u=-1$ could be relaxed; it has been added to simplify the
arguments. Physically, this assumption means that $u=1$ and $u=-1$ are equilibria (that is, constant solutions
to~\eqref{eq:main}).

The proof of Theorem \ref{theo:main} is given in Section~\ref{sec:Exist-heter-conn}, using results of
Sections~\ref{sec:Construction-asymp} and~\ref{sec:Prop-family-wbeta}. Since the proof is convoluted and technical, we give here
an outline. To formulate the sequence of steps, we first introduce some notation. We begin by defining exponentially weighted
function spaces as in~\cite{Iooss2000a}.  For $\nu\in \RR$, $\m\in\{0,1,2,\ldots\}$ and a Banach space $X$, we denote by
$E^\nu_\m(X)$ the Banach space of functions $f\in \Cm(\RR,X)$ such that
\begin{equation} 
\label{eq:Eknu}
  ||f||_{E^\nu_\m(X)}:=\max_{0\leq j\leq \m}
  ||e^{-\nu |\cdot|} f^{(j)}||_{L^\infty(\RR,X)}<\infty.
\end{equation}

For $X=\RR$, continuous functions which decay exponentially at $\pm \infty$ are contained in the spaces $E^\nu_0(X)$, for some
negative $\nu<0$.  We also require analogous function spaces where the $\m$th derivative is not continuous, but only in
$L^\infty_{loc}(\RR)$. So let $F^\nu_\m(X)$ the Banach space of functions $f\in W^{m,\infty}(\RR,X)$ such that
\begin{equation} 
\label{eq:Fknu}
  ||f||_{F^\nu_\m(X)}:=\max_{0\leq j\leq \m}
  ||e^{-\nu |\cdot|} f^{(j)}||_{L^\infty(\RR,X)}<\infty.
\end{equation}

If, in the definitions above, the function $f$ is only required to be defined on an open subset $A\subset \RR$, we shall write
$E^\nu_m(A,X)$ and $F^\nu_m(A,X)$ respectively, where $L^\infty(\RR,X)$ is replaced by $L^\infty(A,X)$ in these definitions.
 
\emph{Step 1: Special (degenerate) case, $\epsilon=0$.} In the limit case $\epsilon=0$, $\psi$ is not smooth at $0$
by~\eqref{eq:psiprimeoutside}, as $\psi'(\s) = \sgn(\s)$; the choice of $\psi(0)$ is immaterial. Also, $\psi$ satisfies
$\psi'(\pm 1)=\pm 1$ and $\psi''(u)=0$ on $(-\infty,0)$ and on $(0,\infty)$.  We use an existence result~\cite{Kreiner2011a} for
heteroclinic odd solutions $u_p\in H^2_{\loc}(\RR)$ for the special case $\psi' = \sgn$ in~\eqref{eq:main},
\begin{equation}
  \label{eq:kz}
  c^2u''-\Delta_D u+\alpha u-\alpha \text{sgn}(u)=0
\end{equation}
on $\RR$. The parameters $\alpha, k_0$ and $c$ are as in Theorem~\ref{theo:main}.

For $|\lambda|<1$ and $\theta\in[0,2\pi)$, trivially $1+\lambda\sin(\k_0\s+\theta)$ is a solution to~\eqref{eq:kz} on $[1,\infty)$
and $-1+\lambda\sin(\k_0\s-\theta)$ is a solution on $(-\infty,-1]$. From~\cite{Kreiner2011a}, we will use that there exists a
function $u_p$ that solves~\eqref{eq:kz} and satisfies
\begin{equation}
  \label{eq:jodel}
  \lim_{\s\rightarrow \pm\infty}\left(u_p(\s)\mp 1-\lambda\sin(k_0\s\pm \theta)\right)=0
\end{equation}
for some $\lambda$ and $\theta$. 

The core argument for the next Steps 2--4 is to build a particular family of approximate solutions
$w_\beta \in W^{2,\infty}(\RR)$, which are $C^1$ as a function of $\beta\in[-1,1]$, and asymptotically, as $\s \to \pm \infty$,
they approximate a heteroclinic travelling wave solution. However, they are allowed to be far from a solution near the
dislocation, $\s = 0$.  Step 2 provides such a family for $\epsilon =0$, Step 3 extends this existence result for the case
$\epsilon >0$ we are interested in, and Step 4 uses this family of approximate solutions to obtain an exact solution.

\emph{Step 2: ``Almost solution'' family for the special (degenerate) case, $\epsilon=0$.}  We will to construct a particular
family of functions $[-1,1]\ni \beta\rightarrow w_{0,\beta}\in W^{2,\infty}(\RR) \cap C^2(\RR\backslash\{0\})$, which are odd in
$\s$ and such that $\beta\rightarrow w_{0,\beta}$ is $C^1$ in $\beta$. Further, the $w_{0,\beta}$ asymptotically, as
$\s \to \pm \infty$, converge to a heteroclinic travelling wave solution.  We do not require them to be close to a solution near
the dislocation, $\s = 0$.

For $\epsilon=0$, such a family $w_{0,\beta}$ is obtained by choosing
\begin{equation}
  \label{eq: def of wob}
  w_{0,\beta} =u_p+B\beta u_o\, 
\end{equation}
for some small constant $B>0$, where $u_p\in W^{2,\infty}(\RR) \cap C^2(\RR\backslash\{0\})$ is the particular odd solution
to~\eqref{eq:kz} of~\cite{Kreiner2011a} discussed in Step 1; the odd function $u_o\in C^{4}(\RR)$ vanishes in a neighbourhood of
$0$ and satisfies for some $\nu<0$
\begin{equation}
  \label{eq: bound u_o}
  u_o-u_{o,\infty}^\pm\in E^\nu_{4}(\RR\setminus[-1,1],\RR)
  \text{ with }
  u_{o,\infty}^\pm(\s):=\text{sgn}(\s)\cos(k_0\s).
\end{equation}
As in Step 1, there is no work to be done; indeed, the existence of such a function $u_o$ is, as in~\cite{Buffoni2017a}, obvious:
choose any odd smooth $u_0$ that vanishes in a neighbourhood of $0$ and that is equal to $\sgn(x)\cos(\k_0\s)$ outside another,
larger, neighbourhood.

\emph{Step 3: ``Almost solution'' family for $\epsilon>0$.}  As indicated before, we shall build from $w_{0,\beta}$ a particular
family of functions $[-1,1]\ni \beta\rightarrow w_\beta\in W^{2,\infty}(\RR) \cap C^2(\RR\backslash\{0\})$ which are odd in $\s$
and such that $[-1,1]\ni \beta\rightarrow w_\beta\in W^{2,\infty}(\RR)$ is $C^1$ in $\beta$, for $\epsilon>0$ small
enough. Additionally, $w_{\beta}$ will satisfy $\text{sgn}(w_\beta(\s))=\text{sgn}(\s)$ on $\RR$, $w'_\beta(0)>0$, $w_\beta(\s)$
will tend to a positive periodic solution $w_{\beta,\infty}^+$ to the equation $Lv-\alpha \psi'(v)=0$ as $\s\rightarrow \infty$
and $w_\beta(\s)$ will tend to a negative periodic solution $w_{\beta,\infty}^-$ as $\s\rightarrow -\infty$. For simplicity, we do
not note explicitly the dependence of $w_\beta$ on $\epsilon$.

For small $\epsilon>0$, to obtain $w_\beta$ from $w_{0,\beta}$, we use, with some modifications, the centre manifold theory
developed by Iooss and Kirchg\"assner~\cite{Iooss2000a}, in our case applied near the constant solutions $\pm 1$. Moreover
$w_\beta=u_p$ in a neighbourhood of $0$ independent of $\beta\in[-1,1]$ and small $\epsilon>0$. This centre manifold argument is
presented in Section~\ref{sec:Construction-asymp}. We do not perform a normal form reduction as Iooss and Kirchg\"assner, but
instead parametrise the centre manifold and obtain a ``homotopy'' that allows us to construct approximate solutions $w_\beta$.

\emph{Step 4. Existence proof.} We shall then study the existence of $\beta\in[-1,1]$ and a ``corrector function'' $r$ in an
appropriate space of bounded functions such that $w_\beta-r$ is a solution to the equation
\begin{equation}
  \label{eq:r}
  c^2\Big(w_\beta-r\Big)''
  -\Delta_D \Big(w_\beta -r\Big)
  \\+\alpha\Big(w_\beta-r\Big)
  -\alpha\psi'\Big(w_\beta-r\Big) 
  =0.
\end{equation}

The outline of the remaining arguments is as follows. In Section~~\ref{sec:Construction-asymp}, we will prove the existence of the
family $w_\beta$ of ``approximate'' solutions used in Step 3; the argument relies on centre manifold theory. Properties of this
family of functions are established in Section~\ref{sec:Prop-family-wbeta}. Section~\ref{sec:Exist-heter-conn} contains the fixed
point argument used in Step 4 and thus finishes the proof.

\section{Construction of asymptotic wave trains}
\label{sec:Construction-asymp}

Since the first two steps of the proof strategy outlined in the previous section entirely rely on existing results, we now focus
on Step 3.  Specifically, we construct a family $w_\beta$ of wave trains which have asymptotically the correct behaviour, in the
sense that they solve~\eqref{eq:main} as $\s\to\pm\infty$. This is the key step in the argument, as the anharmonicity of the wells
of the on-site potential is now crucial. We use centre manifold theory. Note that $w_\beta$ are only approximate solutions
to~\eqref{eq:main}; for $\beta\in\{-1,1\}$ they will typically differ significantly from solutions near the dislocation site
$\s=0$, and be only asymptotically correct for large values of $|\s|$ (see the third part of Proposition~\ref{prop: behaviour}).

In Section~\ref{sec:Exist-heter-conn}, we prove the existence of a corrector $r$ such that $w_\beta - r$ solves~\eqref{eq:main}
or, equivalently,~\eqref{eq:r}. We remark that the symmetry of the problem is important here. In essence, $w_\beta$ glues together
two wave trains, one as $\s\to-\infty$ oscillating in the well of $\psi$ centred at $-1$, and one as $\s\to\infty$ oscillating in
the well centred at $1$.

Let us now state the main result of this section, the proof of which is postponed to the end
(Subsection~\ref{sec:Proof-Theor-H1}). Throughout this section, the standing assumptions are those made in
Theorem~\ref{theo:main}.

The main aim is to prove the existence of the ``approximate'' solutions $w_\beta$ proposed in Step 3 in the previous section.
This is a nontrivial problem, as we require these functions to be asymptotic to periodic solutions (Item~\ref{it:H1-asym} in the
following theorem). The results establishes the existence of a function $H_1$ which maps $w_{0,\beta}$ and its derivative to
$w_\beta$, in a pointwise manner. We recall the definition~\eqref{eq: def of wob} of $w_{0,\beta}$, and that $u_p$ is the solution
to~\eqref{eq:kz}.
\begin{theorem}
  \label{thm:def of H1}
  For all $\epsilon>0$ small enough, there exists $H_1\in C^{4}(\RR^2)$ and a period map
  $\widetilde \sP\in C^{4} ([-1,1],(0,\infty))$, both depending on $\epsilon$, such that
  \begin{enumerate}
  \item \label{it:def-H1:1} $H_1(\z,\xx)-\z\rightarrow 0$ in $W^{4,\infty}(\RR^2)$ as $\epsilon\rightarrow 0$.
  \item \label{it:def-H1:2} $H_1(\z,\xx)$ is odd in $\z$.
  \item \label{it:def-H1:3} $H_1(\z,\xx)=\z$ on $(-\epsilon_0/2,\epsilon_0/2)\times \RR$ for some $\epsilon_0>0$
    independent of $\epsilon$.
  \item \label{it:H1-asym} With $\ts:=\dfrac{2\pi\s}{\widetilde \sP(\beta)k_0}$, the function
    $w_\beta(\s):=H_1\Big(w_{0,\beta}(\ts), w_{0,\beta}'(\ts)\Big)$ is asymptotic to a positive periodic solution
    to~\eqref{eq:main} of period $\widetilde \sP(\beta)$ as $\s\rightarrow +\infty$.
  \item \label{it:H1-sign} Furthermore, $\text{sgn}(w_\beta(\s))=\text{sgn}(\s)$ for all $\s\in\RR$,
    $w_\beta(\s)=u_p(\ts)$ in a neighbourhood of $0$ independent of $\beta\in[-1,1]$ and small $\epsilon>0$, and
    $w'_\beta(0)=\dfrac{2\pi}{\widetilde \sP(\beta)k_0}u'_p(0)>0$.
  \item \label{it:H1-P} $\widetilde \sP(\beta)\rightarrow 2\pi/k_0$ and
    $\frac d{d\beta}\widetilde \sP(\beta)\rightarrow 0$ uniformly in $\beta\in[-1,1]$ as $\epsilon\rightarrow 0$.
  \end{enumerate}
\end{theorem}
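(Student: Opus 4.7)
The plan is to apply centre manifold theory around the equilibrium $u\equiv 1$ of~\eqref{eq:main} (and exploit the $u\mapsto-u$ equivariance for $u\equiv-1$), in the spirit of the spatial-dynamics approach of Iooss and Kirchg\"assner, and to read $H_1$ off the resulting chart. First, I would recast~\eqref{eq:main} as a first-order evolution equation in $\s$ on a history Banach space $\mathcal Y=C^0([-1,1])\times\RR$, writing $U'(\s)=\mathcal L U+\mathcal N_\epsilon(U)$ after translation by $u=1$. The linear part $\mathcal L$ is $\epsilon$-independent; by~\eqref{eq:alpha} and the dispersion analysis preceding Theorem~\ref{theo:main}, its spectrum on the imaginary axis consists of exactly the simple eigenvalues $\pm ik_0$. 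The nonlinearity $\mathcal N_\epsilon$ collects $\alpha u-\alpha\psi'(u)$ and tends to $0$ in $C^5$ on bounded sets as $\epsilon\to 0$ by~\eqref{eq:psiprimeoutside}--\eqref{eq: cdn on psi}; I would smoothly cut it off outside an $\epsilon$-independent ball large enough to contain the asymptotic states of $w_{0,\beta}$ for all $\beta\in[-1,1]$. The Vanderbauwhede--Iooss centre manifold theorem for ill-posed spatial dynamics then supplies a global $C^4$ invariant manifold $\mathcal M^+=\{U_c+h(U_c):U_c\in\mathcal Y_c\}$ with $\|h\|_{C^4}\le C\epsilon$.

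I then parametrise $\mathcal Y_c\simeq\RR^2$ by coordinates $(\tilde a,\tilde b)$ normalised so that at $\epsilon=0$ the corresponding element of $\mathcal M^+$ is the affine solution $1+\tilde a\cos(k_0\s)+\tilde b\sin(k_0\s)$. The basepoint value of the element of $\mathcal M^+$ with centre coordinates $(\tilde a,\tilde b)$ is then $1+\tilde a+h_0(\tilde a,\tilde b)$, where $h_0$ is the first coordinate of $h$ evaluated at $\s=0$. Writing $\z=1+\tilde a$ and $\xx=k_0\tilde b$ (the linear value/derivative at the basepoint), I set $H_1(\z,\xx):=\z+h_0(\z-1,\xx/k_0)$ for $\z$ near $1$; extend by $H_1(\z,\xx):=-H_1(-\z,-\xx)$ for $\z$ near $-1$ (legitimate because $\psi$ even makes~\eqref{eq:main} equivariant under $u\mapsto-u$, sending $\mathcal M^+$ to a reflected $\mathcal M^-$); set $H_1(\z,\xx):=\z$ for $|\z|<\epsilon_0/2$; and interpolate smoothly on the two transition strips. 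Items~\ref{it:def-H1:1}--\ref{it:def-H1:3} are then built in, with $\epsilon_0$ independent of $\epsilon$ thanks to the uniform $C^4$-smallness of $h_0$.

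For the periodic-orbit claims, I would analyse the reduced flow on $\mathcal M^+$. This planar flow inherits reversibility under $\s\to-\s$ and the (spatial) Hamiltonian structure of~\eqref{eq:main}, so it is reversible and area-preserving on $\mathcal Y_c\simeq\RR^2$, and for $\epsilon=0$ is pure rotation at angular frequency $k_0$. A Lyapunov centre / perturbation argument yields a one-parameter family of closed orbits of amplitude-dependent period $\sP^\ast(\rho)$, with $\sP^\ast(\rho)\to 2\pi/k_0$ and $d\sP^\ast/d\rho\to 0$ as $\epsilon\to 0$. For each $\beta\in[-1,1]$, the asymptotics~\eqref{eq:jodel}--\eqref{eq: bound u_o} of $w_{0,\beta}$ pin down a unique amplitude $\rho(\beta)$, and I set $\widetilde\sP(\beta):=\sP^\ast(\rho(\beta))$; item~\ref{it:H1-P} follows by the chain rule. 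The rescaling $\ts=2\pi\s/(\widetilde\sP(\beta)k_0)$ is chosen precisely so that, in action-angle coordinates on $\mathcal M^+$, the pair $(w_{0,\beta}(\ts),w_{0,\beta}'(\ts))$ traces asymptotically the correct orbit of the reduced ODE; hence $w_\beta(\s)$ is asymptotic to the associated periodic solution of~\eqref{eq:main}, giving item~\ref{it:H1-asym}. Item~\ref{it:H1-sign} follows from the central plateau $H_1=\z$ combined with $u_p(0)=0$ and with $u_o$ vanishing near $0$ (forcing $w_\beta(\s)=u_p(\ts)$ on a neighbourhood of $0$, with the derivative formula by the chain rule), and the sign property $\operatorname{sgn}(w_\beta(\s))=\operatorname{sgn}(\s)$ at larger $|\s|$ follows from the smallness of oscillation amplitudes ($|\lambda|+B|\beta|<1$) keeping $w_\beta$ on the correct side of $0$. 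The hardest step will be balancing the cutoff of $\mathcal N_\epsilon$ so that the bound $\|h\|_{C^4}\le C\epsilon$ holds on an $\epsilon$-independent ball large enough to encompass both the family of asymptotic circles for all $\beta$ and the central plateau; a subsidiary subtlety is ensuring that the linear-sinusoid parametrisation of orbits coming from $w_{0,\beta}$ matches, through the rescaling $\ts$, the action-angle parametrisation of the reduced nonlinear flow, so that $w_\beta$ is truly asymptotic to a periodic solution rather than merely close to one.
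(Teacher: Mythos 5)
Your overall strategy matches the paper's: centre manifold reduction following Iooss--Kirchg\"assner via Vanderbauwhede--Iooss around the equilibrium $1$, reflection to $-1$, an $\epsilon$-independent cutoff, a period map, and asymptotics from the exponential convergence of $u_p$ and $u_o$. However your explicit definition of $H_1$ is wrong in a way that breaks item~\ref{it:H1-asym}, and what you flag at the end as a ``subsidiary subtlety'' is in fact the crux. You set $H_1(\z,\xx):=\z+h_0(\z-1,\xx/k_0)$, i.e.\ the first component of the graph map $U_c\mapsto U_c+h(U_c)$ evaluated at the centre-space point with Cartesian coordinates $(\z-1,\xx/k_0)$. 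Feed in the circle $\z=1+a\cos\theta$, $\xx=-ak_0\sin\theta$ with $\theta=2\pi\s/\sP_a$. The result is $\sP_a$-periodic, but it is the first component of the point on $\mathcal{M}^+$ lying \emph{above} the circle of radius $a$ traversed at constant angular speed. The trajectory $V_c(\s)=P_1G(a,\s)$ of the reduced ODE on $\DD_c$ starting from $(a,0)$ is \emph{not} that circle at that speed: the reduced flow is a reversible perturbation of a linear rotation, so the closed orbit is only $O(\|h\|)$-close to the circle and is traversed at a non-constant angular speed. Consequently your composition is $O(\|h\|)=O(\epsilon)$-close to a solution of~\eqref{eq:main} but is not a solution, and the residual $Lw_\beta-\alpha\psi'(w_\beta)$ would then fail to decay exponentially at $\pm\infty$ -- which is exactly what Proposition~\ref{prop: behaviour} (claims~\ref{it:behaviour1} and~\ref{it:behaviour3}) and the whole corrector argument in Section~\ref{sec:Exist-heter-conn} require.

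The paper's $H_1$ is a genuinely different object. Proposition~\ref{prop: G} defines $G(a,t)$ as the actual trajectory on the centre manifold through $U_a(0)+h(U_a(0))$, so $t\mapsto G(a,t)$ is an \emph{exact} $\sP_a$-periodic solution of~\eqref{eq: transformed equation}. Proposition~\ref{prop:function H} then sets $H(\z,\xx):=G\bigl(a(\z,\xx),\,\sP_{a}k_0\widetilde t(\z,\xx)/(2\pi)\bigr)$ with $(a,\widetilde t)$ the polar coordinates of $(\z-1,\xx/k_0)$; when you plug in the circle at constant speed in $\ts$ you recover exactly $G(a,\s)$. So the correct $H_1$ is the graph function \emph{precomposed} with the diffeomorphism between the polar chart and the nonlinear reduced orbit (a near-identity map, but not the identity), and that precomposition is precisely what your Cartesian-graph definition omits. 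Invoking ``action-angle coordinates'' does not rescue the definition as written, because $(w_{0,\beta}(\ts),w_{0,\beta}'(\ts))$ lives in the original $(\z,\xx)$ chart, not in an action-angle chart; you would still have to supply the explicit change of variables and prove the composition solves~\eqref{eq:main}, which is the content of Propositions~\ref{prop: G} and~\ref{prop:function H}. Two smaller remarks: the paper only uses reversibility of the reduced planar flow (a reversible Lyapunov-centre picture encoded in Proposition~\ref{prop: G}); your claim that the reduced flow is also area-preserving, inherited from a Hamiltonian structure of the nonlocal advance-delay equation, is a nontrivial assertion you would need to justify and it is not needed. Also, for a $C^4$ centre manifold one needs $V\in C^5$, consistent with the hypotheses up to $\psi_\epsilon^{(5)}$ in~\eqref{eq: cdn on psi}, not $C^5$ convergence of the nonlinearity as you wrote.
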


\emph{Remark.} Additional assumptions on higher order derivatives of the map $u\rightarrow \psi_\epsilon(u)$ for
$|u|\geq \epsilon$ would allow higher-order convergence in claim~\ref{it:def-H1:1}.  Note, however, that the third part implies
that $H_1$ is not only $C^4$ but even smooth on $(-\epsilon_0/2,\epsilon_0/2)\times\RR$.

The proof of Theorem~\ref{thm:def of H1} is given in Subsection~\ref{sec:Proof-Theor-H1}.

\subsection{Centre manifold analysis}
\label{sec:Centre-manif-analysi}

As preparation, we perform a centre manifold analysis, following closely~\cite{Iooss2000a} (as excellent other source on the
centre manifold approach for lattice systems, we refer the reader to~\cite{James2003a}). Small modifications are required, since
the analysis in~\cite{Iooss2000a} is local, while we need a global result, as dislocation waves have large oscillations. 

To relate easily to the notation of~\cite{Iooss2000a}, we write the on-site potential $g$ of~\eqref{eq:g} as $g(u) = \alpha V(u)$,
hence $V(u) = \frac 1 2 u^2 - \psi(u)$. As explained after~\eqref{eq:g}, for the special (limit) case $\psi'(u) = \sgn(u)$, the
on-site potential becomes
\begin{equation*}
  g(u) = \alpha V(u)=g_0(u)=\frac \alpha 2 \min\left((u+1)^2,(u-1)^2\right) =\frac \alpha 2 (u^2+1)-\alpha |u|,
\end{equation*} 
a primitive of $\alpha u - \alpha \sgn (u)$. Obviously, by adding a constant, the equilibrium $u=1$ can be shifted to $u=0$ (and
so can the equilibrium $u=-1$).  In a neighbourhood of $0$, the (shifted) potential becomes simply
$\alpha V(u)=\frac \alpha 2 u^2$ and thus $u-V'(u)=0$. An analogous remark also holds for $\psi_\epsilon$: the corresponding
on-site potential can be chosen to be
\begin{equation*}
  \alpha V(u)
  =\frac \alpha 2 (u^2+1)-\alpha\psi_\epsilon(u)
  =\frac \alpha 2 \min\left((u+1)^2,(u-1)^2\right)
  +\alpha\left(|u|-\psi_\epsilon(u)\right).
\end{equation*}
By shifting the equilibrium $u=1$ to $u=0$, the (shifted) potential becomes
\begin{equation*}
  \alpha V(u)
  =\frac \alpha 2 u^2
  +\alpha\left(|1+u|-\psi_\epsilon(1+u)\right)
\end{equation*}
in a neighbourhood of $0$. The corresponding expression $u-V'(u)$ becomes
\begin{equation*}
  \psi'_\epsilon(1+u)-\sgn(1+u); 
\end{equation*}
it is under control in a neighbourhood of $0$ (in the shifted potential) when $\epsilon\rightarrow 0$,
by~\eqref{eq:psiprimeoutside}. In what follows, we have in mind such a situation in which $u-V'(u)$ is under control in a
neighbourhood of $0$. In particular, as in~\cite{Iooss2000a}, we can assume that (after a shift) there is an equilibrium at $0$.

In~\cite{Iooss2000a}, the governing equation is written as
\begin{equation}
  \label{eq:IK}
  \ddot \z(t)+\tau^2 V'(\z(t))=\gamma \tau^2[\z(t-1)-2\z(t)+\z(t+1)],
\end{equation}
where $\z(t)$ stands for $u(\s)$, and $\tau,\gamma>0$ are given by
\begin{equation}
  \label{eq:taugamma}
\tau^2:=\alpha/c^2,
\qquad
\gamma:=1/\alpha.
\end{equation} 
We follow the notation of~\cite{Iooss2000a} for a while, partially since it is convenient to have a potential $V$ with a minimum
at the origin. Later we will translate the results to our setting and notation, and thus transplant the results to the two
different equilibria $\pm 1$. The potential $V$ is assumed to be of class $C^{\m+1}$ for some $\m\geq 1$ (later, we will only
require $\m =4$). Equation~\eqref{eq:IK} is then rewritten as
\begin{equation}
  \label{eq: transformed equation}
  \partial_tU=L_{\gamma,\tau}U+M_\tau(U)
\end{equation}
with 
\begin{equation*}
  U(t)(\v)=(\z(t),\xx(t),\Z(t,\v))^T,~~\Z(t,0)=\z(t),~~\v\in[-1,1],
\end{equation*}
\begin{equation*}L_{\gamma,\tau}=
  \begin{pmatrix}
    0&1&0\\
    -\tau^2(1+2\gamma)&0&\gamma\tau^2(\delta^1+\delta^{-1})
    \\0&0&\partial_\v
  \end{pmatrix},
\end{equation*}
where $\delta^{\pm 1}$ stands for the evaluation at $s=\pm 1$, and
\begin{equation*}
  M_\tau(U)=\tau^2(0,\z-V'(\z),0)^T~~\text{ with }~~
  U=(\z,\xx,\Z(\cdot))^T.
\end{equation*}
Note that $u-V'(u)$ vanishes at $u=0$.

As in~\cite{Iooss2000a}, let $\HH$ and $\DD$ be Banach spaces for $U=(\z,\xx,\Z(\cdot))^T$,
\begin{align*}
  \HH &:=\RR^2\times C[-1,1] \\
\intertext{and} 
  \DD &:=\{U =(\z,\xx,\Z(\cdot))^T \in \RR^2\times C^1[-1,1]: \Z(0)=\z\},
\end{align*}
equipped with the usual maximum norms, respectively.  The map $ M_\tau$ is $\Cm(\DD,\DD)$.

Let the reflection $S$ in $\HH$ be defined by
\begin{equation*}
  S(\z,\xx,\Z)^T:=(\z,-\xx,\Z\circ \rho)^T,~\text{with}~\rho(\v):=-\v,
\end{equation*}
and note that $L_{\gamma,\tau}$ and $M_\tau$ anticommute with $S$ (``reversibility'').

We denote by $\Delta_0$ the set of pairs $(\gamma,\tau)$ such that the part of the spectrum of $L_{\gamma,\tau}$ that lies in
$i\RR$ contains only one pair of simple eigenvalues (they have to sum up to $0$, thanks to reversibility).

In our setting, $(\gamma,\tau)\in \Delta_0$, since for $c \leq 1$ and close to $1$ the dispersion function $D$ has exactly two
roots, as shown in Section~\ref{sec:Setting-main-result}, and the roots are not degenerate. We denote by $P_1$ the projection onto
the two-dimensional eigenspace related to the two eigenvalues in $i\RR$ and set $Q_h:=I-P_1$ (more precisely, the kernel of
$P_1$ is the linear space defined by equations~\eqref{eq: G_0} and~\eqref{eq: G_1} below for fixed $t$).

Iooss and Kirchg\"assner refer to Theorem~3 in~\cite{Vanderbauwhede1992a} to prove their theorem about the existence of a
\emph{local} centre manifold. That is, under different conditions on $\mathrm{id}-V'$ from those in Theorem~\ref{theo:vdw} below,
there is a neighbourhood $\Omega$ of $0$ in $\DD$ such that the result holds for $\tilde U_c\colon\RR\rightarrow \Omega_c$ rather
than $\DD_c$ in claim~\ref{it-vdw1} and $\tilde U \colon \RR\rightarrow\Omega$ in claim~\ref{it-vdw2}. Instead, by referring to
Theorem~2 in~\cite{Vanderbauwhede1992a}, one gets in the same way the following theorem (\emph{global} centre manifold)\footnote{
  We apply Theorem~2 in~\cite{Vanderbauwhede1992a} when $g\in \Cm_b(X;Y)$ (with the notations $g,X,Y$ as
  in~\cite{Vanderbauwhede1992a}), which makes the proof in~\cite{Vanderbauwhede1992a} shorter. Also, still with the notations
  of~\cite{Vanderbauwhede1992a}, $Y=X$ in our setting.  The assumptions of Theorem~2 in~\cite{Vanderbauwhede1992a} are checked for
  completeness in Appendix~\ref{sec:Appl-cent-manif}, following the ideas in~\cite{Iooss2000a}.  }.

\begin{theorem}
  \label{theo:vdw}
  Given $(\gamma,\tau)\in \Delta_0$, assume that $\mathrm{id}-V' \in \Cm_b(\RR)$ (the function and its $m$ first derivatives are
  bounded), $\mathrm{id}-V'$ is Lipschitz continuous and that the Lipschitz constant is small enough (in a way that can depend in
  particular on $\m$).  Then there exists a mapping $h\in \Cm_b(\DD_c,\DD_h)$, where $\DD_c:=P_1\DD$ and $\DD_h:=Q_h\DD$, and a
  constant $p_0>0$ such that the following is true.
  \begin{enumerate}
  \item \label{it-vdw1} If $\widetilde U_{c}\colon \RR\rightarrow \DD_{c}$ is a solution of~\eqref{eq: reduced
      transformed equation},
    \begin{equation}
      \label{eq: reduced transformed equation}
      \partial_t U_c=L_{\gamma,\tau}U_c+P_1M_\tau[U_c+h(U_c)],
    \end{equation}
    then $\widetilde U=\widetilde U_c+h(\widetilde U_c)$ solves~\eqref{eq: transformed equation}.
  \item \label{it-vdw2} If $\widetilde U$ solves \eqref{eq: transformed equation} for all $t\in \RR$ and
    $||e^{-\eta|\cdot|}\widetilde U||_{L^\infty(\RR)}<\infty$ for some $\eta$ in $(0,p_0)$, then
    \begin{equation*}
      \widetilde U_h(t)=h(\widetilde U_c(t)),~~t\in\RR,
    \end{equation*}
    holds with $\widetilde U_c=P_1 \widetilde U$ and $\widetilde U_h=Q_h \widetilde U$, and $\widetilde U_c(t)$
    solves~\eqref{eq: reduced transformed equation}.
\end{enumerate}
\end{theorem}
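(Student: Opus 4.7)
The plan is to reduce Theorem~\ref{theo:vdw} to a direct invocation of Theorem~2 of~\cite{Vanderbauwhede1992a}, applied to the abstract evolution equation~\eqref{eq: transformed equation} in the Banach spaces $\HH$ and $\DD$. Crucially, this is the \emph{global} version (as opposed to the local Theorem~3 used by Iooss and Kirchg\"assner~\cite{Iooss2000a}): it produces a centre manifold as a global graph over the entire centre subspace, rather than only over a neighbourhood of the equilibrium, at the price of demanding a \emph{global} smallness assumption on the Lipschitz constant of the nonlinearity. To invoke that theorem, three ingredients need to be verified: a spectral decomposition of $L_{\gamma,\tau}$ exhibiting the correct centre/hyperbolic splitting, resolvent-type estimates for the hyperbolic part, and appropriate global smoothness and smallness of the nonlinear term $M_\tau$.

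First I would establish the spectral decomposition. Because $(\gamma,\tau)\in\Delta_0$, the part of $\sigma(L_{\gamma,\tau})$ lying on $i\RR$ consists of exactly one pair of simple imaginary eigenvalues, and the rest of the spectrum lies in $\{\lambda\in\CC : |\operatorname{Re}\lambda|\geq p_0\}$ for some $p_0>0$. This yields well-defined spectral projections $P_1$ onto the two-dimensional centre subspace $\DD_c=P_1\DD$ and $Q_h=I-P_1$ onto $\DD_h$. Next I would verify resolvent estimates on $\DD_h$ of the form $\|(\lambda-Q_hL_{\gamma,\tau})^{-1}\|\leq C/(1+|\lambda|)$ for $\lambda$ in a suitable strip around $i\RR$, which are precisely the estimates required by Vanderbauwhede's hypotheses and furnish the exponential dichotomy of the hyperbolic part. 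These are essentially the estimates used in~\cite{Iooss2000a} and transfer with only notational changes; the excerpt defers them to Appendix~\ref{sec:Appl-cent-manif}.

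Second, the nonlinearity $M_\tau(U)=\tau^2(0,\z-V'(\z),0)^T$ has to belong to $\Cm_b(\DD,\DD)$ with a globally small Lipschitz constant; this is exactly the point of the standing hypothesis on $V$. Since $\mathrm{id}-V'\in \Cm(\RR)$ is globally Lipschitz with small Lipschitz constant, and its derivatives up to order $\m$ are bounded, the dependence of $M_\tau$ only on the $\z$-coordinate through a superposition operator transfers these properties directly into the functional-analytic setting; the fixed factor $\tau^2$ is harmless. The word ``small'' in the hypothesis is calibrated precisely to match the smallness required by Theorem~2 of~\cite{Vanderbauwhede1992a}, which depends on the spectral gap $p_0$ and on $\m$.

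With these three ingredients in place, Theorem~2 of~\cite{Vanderbauwhede1992a} directly produces the mapping $h\in \Cm_b(\DD_c,\DD_h)$, with $p_0$ being the spectral gap established in the first step; conclusions~\ref{it-vdw1} and~\ref{it-vdw2} are then exactly the two statements of that theorem, translated into the present notation. The reversibility under $S$ plays no role in the existence proof itself, although it is used elsewhere in the paper. The main obstacle is the resolvent estimate on $\DD_h$, which must handle the unbounded advance-delay coupling $\gamma\tau^2(\delta^1+\delta^{-1})$ from the first two coordinates to the boundary traces of the $\Z$-component; this is precisely the technical calculation that is borrowed from~\cite{Iooss2000a} and reproduced in Appendix~\ref{sec:Appl-cent-manif}.
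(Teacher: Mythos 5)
Your proposal matches the paper's argument: the theorem is obtained by invoking Theorem~2 of~\cite{Vanderbauwhede1992a} (the global centre manifold theorem) rather than Theorem~3, with the spectral gap giving $p_0$, the resolvent-type estimates carried over from~\cite{Iooss2000a} (verified in Appendix~\ref{sec:Appl-cent-manif}), and the global Lipschitz smallness of $\mathrm{id}-V'$ supplying the required smallness of the nonlinearity. This is exactly the route taken in the paper, and you correctly observe that reversibility is not needed for the existence of $h$ itself.
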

This global aspect is relevant to our setting, since the centre manifold theory will be applied in large neighbourhoods of the
equilibria $1$ and $-1$ (but small enough to exclude a small neighbourhood of the origin, where the convexity of the on-site
potential fails). In Theorem~\ref{theo:vdw}, it is not necessary to suppose that the origin is an equilibrium. However, as we
assumed for simplicity that $\psi'_\epsilon-u$ vanishes at $u=1$ and $u=-1$, it turns out that the additional property $V'(0)=0$
holds. Moreover, if $M_\tau=0$, then $h=0$ and $U=0$ is obviously the unique equilibrium.

Inspecting the proof of Theorem~2 in~\cite{Vanderbauwhede1992a}, one sees that the norm of $h\in \Cm_b(\DD_c,\DD_h)$ tends to $0$
when the norm of $M_\tau \in C_b^m(\DD,\DD )$ tends to $0$\footnote{ As explained on bottom of page 131 in~\cite{Iooss2000a}, the
  derivatives of $\Psi$ (in the notations of~\cite{Iooss2000a}) can be calculated by formal differentiations of the identity~(11)
  in~\cite{Iooss2000a}.  This gives estimates of the norms of the derivatives of $\Psi-I$ (acting between appropriate Banach
  spaces depending on the order of differentiation) in terms of the norms of the derivatives of $g$ (still in the notations
  of~\cite{Iooss2000a}).}.  Moreover, $h$ commutes with the reversibility operator $S$, so that the reduced equation~\eqref{eq:
  reduced transformed equation} is reversible (as explained at the end of Section~2.2 in~\cite{Vanderbauwhede1992a}). In
particular, it is easily checked that $P_1S=SP_1$ on $\HH$ (see Lemma~2 in~\cite{Iooss2000a} where $P_1$ is explicitly given).
Hence, for $U_c\in \DD_c$, we have $S(U_c+h(U_c))=U_c+h(U_c)$ exactly when $S U_c=U_c$.

The two-dimensional space $\DD_c$ is given by
\begin{equation*}
  \DD_{c}=\{(\delta_1,\delta_2 \k_0,\delta_1\cos(\k_0\cdot)+\delta_2\sin(\k_0\cdot))
  :\delta_1,\delta_2\in \RR\},
\end{equation*}
where $\k_0>0$ is such that $\pm i \k_0$ is in the spectrum of $L_{\gamma,\tau}$ (and there are no other purely imaginary values
in the spectrum).  A simple computation shows that $i\k$ is an eigenvalue with $\k\in \RR\backslash\{0\}$ if and only if
$-\tau^2(1+2\gamma)+\gamma\tau^22\cos(\k)=-\k^2$. When $M_\tau=0$ (and thus $h=0$), the two-dimensional linear space $\DD_{c}$ is
filled by $0$ and the orbits of a smooth one-parameter family of reversible periodic solution
\begin{equation}
  \label{eq:ua}
  t\rightarrow U_a(t) :=(a\cos(\k_0 t),-ak_0\sin(\k_0t),a\cos(\k_0(t+\cdot))\,),
\end{equation}
with $a>0$ being the amplitude.  So, in essence the centre space is parametrised by the amplitude $a$. Each of these periodic
solutions meets the \emph{reversibility line}
\begin{equation*}
  \{\z_c\in \DD_{c}: S\z_c=\z_c\}
  =\{(\delta_1,0,\delta_1\cos(\k_0\cdot)):\delta_1\in \RR\}
\end{equation*} 
twice in one of its periods (at $t=0$ and $t=\pi/k_0$ in the period $[0,2\pi/k_0)$). The intersection with the reversibility line
is transverse: at $t=0$ (say)
\begin{multline*}
  L_{\gamma,\tau}(a,0,a\cos(\k_0\cdot))
  =(0,-a\k_0^2,-a\k_0\sin(\k_0\cdot))
  \\ \neq (0,a\k_0^2,a\k_0\sin(\k_0\cdot))=
  SL_{\gamma,\tau}(a,0,a\cos(\k_0\cdot))
\end{multline*}
for all $a>0$.  Let us restrict the amplitude parameter $a$ to any fixed compact interval $[a_1,a_2]\subset(0,\infty)$ with
$a_1<a_2$.

Iooss and Kirchg\"assner proceed by carrying out a normal form analysis. We proceed differently and give a parametrisation of the
centre manifold, with the amplitude $a$ and the time $t$ being the parameters.

\begin{proposition}
  \label{prop: G}
  For $(a,t)\in [a_1,a_2]\times \RR$, define $G(a,t)\in\DD$ as the value at time $t$ of the solution on the centre manifold that
  starts at time $0$ at $U_a(0)+h(U_a(0))$, with $h$ given by Theorem~\ref{theo:vdw} and $U_a$ as in~\eqref{eq:ua}.  Then $G$ is
  of class $\Cm$ and, when $M_\tau$ tends to $0$ in $\Cm_b(\DD,\DD)$, the map $(a,t)\rightarrow G(a,t)-U_a(t)$ tends to $0$ in
  $\Cm_b([a_1,a_2]\times [-t_1,t_1])$ for all $t_1>0$.

  Moreover, for all $a\in[a_1,a_2]$, $t\rightarrow G(a,t)$ is a reversible periodic solution to~\eqref{eq: transformed equation}.
  The corresponding period $\sP_a$ is a $\Cm$-function of $a\in[a_1,a_2]$ and, when $M_\tau$ tends to $0$ in $\Cm_b(\DD,\DD)$, the
  map $a\rightarrow \sP_a$ tends to the constant map $2\pi/k_0$ in $\Cm_b [a_1,a_2] =\Cm [a_1,a_2] $.
\end{proposition}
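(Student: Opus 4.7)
The plan is to work on the two-dimensional centre manifold, on which Theorem~\ref{theo:vdw} expresses the full solution as $\widetilde U(t) = V(t) + h(V(t))$, where $V \colon \RR \to \DD_c$ solves the reduced equation~\eqref{eq: reduced transformed equation} with $V(0) = U_a(0)$. Since $\DD_c$ is two-dimensional and the reduced vector field $F_h(U_c) := L_{\gamma,\tau} U_c + P_1 M_\tau[U_c + h(U_c)]$ is $C^m$, standard smooth dependence on initial data gives $(a,t) \mapsto V(t)$ of class $C^m$, and composition with $h \in C^m_b$ then yields $G \in C^m$. For the convergence claim, I would use the remark following Theorem~\ref{theo:vdw} that $\|h\|_{C^m_b} \to 0$ whenever $\|M_\tau\|_{C^m_b} \to 0$, which is the case as $\psi_\epsilon \to \operatorname{sgn}$ on $|u|\geq \epsilon$ by~\eqref{eq:psi2prime}--\eqref{eq: cdn on psi}. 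In this limit $F_h \to L_{\gamma,\tau}|_{\DD_c}$ in $C^m_b$ on bounded subsets of $\DD_c$; applying Gr\"onwall-type estimates iterated through the variational equations to the difference between $V$ and its unperturbed counterpart (which is precisely $U_a(t)$) gives $V(t) \to U_a(t)$ in $C^m([a_1,a_2]\times[-t_1,t_1])$. Combined with $h(V(t)) \to 0$ in $C^m$, this yields the stated convergence of $G - U_a$.

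For reversibility and periodicity, note first that $U_a(0) \in \operatorname{Fix}(S)$ and that $h$ commutes with $S$, so $U_a(0) + h(U_a(0)) \in \operatorname{Fix}(S)$; together with the $S$-reversibility of the reduced equation (since $L_{\gamma,\tau}$ and $M_\tau$ anticommute with $S$), this yields $V(-t) = S V(t)$, so $G(a,\cdot)$ is reversible. To obtain the period, parametrise $\DD_c$ by $(\delta_1,\delta_2)$ as in the excerpt, so that the reversibility line becomes $\{\delta_2 = 0\}$; in the unperturbed case the $\delta_2$-component of $V(t)$ is $-a\sin(k_0 t)$, which vanishes transversely at $t=\pi/k_0$ with derivative $ak_0 > 0$, which is exactly the transversality already established in the excerpt. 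The implicit function theorem, applied to $\delta_2(a,t;h) = 0$ with $h$ as a Banach-space parameter, then produces a unique $C^m$ function $a \mapsto T(a)$ near $\pi/k_0$, depending $C^m_b$-continuously on small $h$, such that $V(T(a)) \in \operatorname{Fix}(S)\cap\DD_c$. Applying the reversibility argument once more to the time-shifted solution $s \mapsto V(T(a)+s)$ gives $V(T(a)-s) = S V(T(a)+s)$; combining with $V(-t)=SV(t)$ forces $V(s+2T(a)) = V(s)$. Thus $\sP_a := 2T(a)$ is a $C^m$ function of $a$, and $\sP_a \to 2\pi/k_0$ in $C^m[a_1,a_2]$ as $h \to 0$.

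The most delicate step is handling $h$ as an infinite-dimensional parameter in the implicit function theorem: $F_h$ depends nonlinearly on $h \in C^m_b(\DD_c,\DD_h)$ through the composition $U_c \mapsto M_\tau[U_c + h(U_c)]$, but since $M_\tau$ is $C^m$ and $h$ is small in $C^m_b$, this dependence is itself $C^m_b$-smooth, so all IFT estimates can be made uniform as $h \to 0$. Once this is in place, every assertion of the proposition -- smoothness of $G$, $C^m_b$-convergence to $U_a$, reversibility, periodicity, smoothness of $\sP_a$, and convergence of $\sP_a$ to $2\pi/k_0$ -- follows from the corresponding finite-dimensional statement for the reduced flow on $\DD_c$.
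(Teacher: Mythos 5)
Your argument is correct and reaches the same conclusions, but several technical steps are executed differently from the paper, and it is worth comparing. For the $C^m_b$ convergence of $G-U_a$ you use the direct route: variational equations plus Gr\"onwall, driven by the convergence $F_h\to L_{\gamma,\tau}|_{\DD_c}$ in $C^m_b$. The paper avoids making the Banach-space dependence on $h$ explicit by instead arguing by contradiction and constructing a $C^m$-interpolation $\widetilde h(\cdot;\mu)$ of a hypothetical badly behaved sequence $h_n$, then appealing to $C^m$ dependence on the one-dimensional parameter $\mu$ and continuity at $\mu=0$. Both arguments ultimately rest on smooth dependence of solutions of finite-dimensional ODEs on data; yours is the more standard and transparent phrasing, the paper's cleverly sidesteps quantifying how the flow depends on the infinite-dimensional object $h$.

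For the period, the paper applies the implicit function theorem to the scalar equation $E_2P_1G(a,\sP_a)=0$ directly at the full period $t=\sP_a$, with $\frac{d}{dt}E_2P_1G(a,t)|_{t=\sP_a}=-ak_0^2\neq 0$ when $h=0$. You instead locate the half-period crossing $T(a)\approx\pi/k_0$ and then explicitly invoke the reversibility $V(-t)=SV(t)$, re-applied to the time-shifted solution, to deduce $\sP_a=2T(a)$ is in fact a period. This step is left implicit in the paper (which only asserts that the orbit meets $\operatorname{Fix}(S)\cap\DD_c$ twice per period, as noted before Proposition~\ref{prop: G}), so you are supplying a justification that the paper elides. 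One small inaccuracy: the transversality you cite from the excerpt is established at $t=0$, not at $t=\pi/k_0$; the same computation works at $\pi/k_0$ but the sign of the derivative flips, so "exactly the transversality already established'' is a slight overstatement. Finally, your closing remark on uniform IFT estimates as $h\to 0$ is the right concern; the paper handles that uniformity again via the implicit function theorem applied with the interpolated family, which is consistent with your treatment.
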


\proof The fact that $G$ is $\Cm$ with respect to $(a,t)$ relies on standard results on dependence of solutions with respect to
parameters in finite dimensional dynamical systems (see, e.g., the remarks at the end of Chapter~I, Section~7
in~\cite{Coddington1955a}). The dynamics on the centre manifold is indeed finite dimensional, see~\eqref{eq: reduced transformed
  equation}.

To show that $G(a,t)-U_a(t)$ tends to zero as $M_\tau$ tends to zero, we argue by contradiction. Suppose that
$||M_{\tau,n}||_{\Cm_b(\DD,\DD)} \leq 2^{-n^2}$ and $||h_n||_{\Cm_b(\DD_c,\DD_h)} \leq 2^{-n^2}$ for all $n\geq 0$, with $h_n$
given by Theorem~\ref{theo:vdw} applied to $M_{\tau,n}$, while the corresponding $G_n(a,t)-U_a(t)$ does not tend to $0$ in the
sense above.  Introduce a $\Cm$-interpolation $\widetilde M_\tau(\cdot;\mu)$ of the sequence $\{M_{\tau,n}\}_{n\geq 0}$ such that
$0\leq \mu\leq 1$, $\widetilde M_\tau(\cdot;0)=0$ and $\widetilde M_\tau(\cdot;2^{-n})=M_{\tau,n}$ for all $n\geq 0$. In the same
way, introduce a $\Cm$-interpolation $\widetilde h(\cdot;\mu)$ of the sequence $\{h_n\}_{n\geq 0}$ such that $0\leq \mu\leq 1$,
$\widetilde h(\cdot;0)=0$ and $\widetilde h(\cdot;2^{-n})=h_n$ for all $n\geq 0$. For $\mu$ different from $0$ and not of the form
$2^{-n}$, we shall only consider the reduced equation \eqref{eq: reduced transformed equation} without needing the full equation
\eqref{eq: transformed equation} and therefore we do not need to ensure that $\widetilde h(\cdot,\mu)$ corresponds to
$\widetilde M_{\tau}(\cdot,\mu)$.

For $(a,t)\in [a_1,a_2]\times \RR$ and $\mu\in[0,1]$, define $\widetilde G(a,t;\mu)\in\DD$ as the value at time $t$ of the
solution of \eqref{eq: reduced transformed equation} that starts at time $0$ at $U_a(0)+\widetilde h(U_a(0);\mu)$.  Then
$\widetilde G$ is a $\Cm$-interpolation of the sequence $\{G_n\}_{n\geq 0}$ such that $\widetilde G(a,t;0)=U_a(t)$ and
$\widetilde G(\cdot,\cdot;2^{-n})=G_n$ for all $n\geq 0$.  As $\widetilde G$ and all its derivatives up to order $\m$ are
continuous at any $(a,t,\mu)$ with $\mu=0$, $\widetilde G(\cdots;2^{-n})$ converges to $\widetilde G(\cdots;0)$ in
$\Cm_b([a_1,a_2]\times[-t_1,t_1])$ for all $t_1>0$. This is a contradiction, as we have supposed \emph{ad absurdum} that
$G_n(a,t)-U_a(t)$ does not tend to $0$.

The function $t\rightarrow G(a,t)$ is reversible and periodic of half-period $\sP_a/2>0$ exactly when $SG(a,t)-G(a,t)$ vanishes at
$t=\sP_a/2$ without vanishing on $(0,\sP_a/2)$.  In this case, $\sP_a/2$ and $\sP_a$ satisfy the equation
$E_2P_1G(a,\sP_a/2)=E_2P_1G(a,\sP_a)=0$, where $E_2$ is the projection on the second real component of a vector in $\DD$ (and we
recall that $P_1$ is the projection on $\DD_c$). If $M_\tau=0$, then $\sP_a=2\pi/k_0$ for $a\neq 0$
\begin{equation*}
  \frac{d}{dt}E_2P_1G(a,t)|_{t=\sP_a/2}=a\k_0^2\neq 0
  ~~\text{ and }~~\frac{d}{dt}E_2P_1G(a,t)|_{t=\sP_a}=-a\k_0^2\neq 0. 
\end{equation*}
Hence, if $M_\tau\in \Cm_b(\DD,\DD)$ is small enough, $t\rightarrow G(a,t)$
is a reversible periodic orbit,
\begin{equation*}
  \frac{d}{dt}E_2P_1G(a,t)|_{t=\sP_a/2}\neq 0,~~
  \frac{d}{dt}E_2P_1G(a,t)|_{t=\sP_a}\neq 0 
\end{equation*}
for all $a\in[a_1,a_2]$ and, by the implicit function theorem, $\sP_a$ is a $\Cm$-function of $a$. When $M_\tau=0$, $\sP_a$ is
equal to $2\pi/k_0$.  Hence, still by the implicit function theorem, the map $a\rightarrow \sP_a$ tends to the constant map
$2\pi/k_0$ in $\Cm_b[a_1,a_2]$ when $M_\tau$ tends to $0$ in $\Cm_b(\DD,\DD)$.  \qed

We now make the main step in establishing the existence of the function $H_1$ in Theorem~\ref{thm:def of H1}. We remark that first
component $H_1$ of the function $H$ discussed in the following proposition will (with minimal modifications summarised in
Proposition~\ref{prop: function H around 1}) be a restriction of the function $H_1$ of Theorem~\ref{thm:def of H1}.

\begin{proposition}
  \label{prop:function H}
  Let $0<a_1<a_2$. There exists a $\Cm_b$ map
  \begin{equation*}
    (\z,\xx)\rightarrow H(\z,\xx)= (H_1(\z,\xx),H_2(\z,\xx),H_3(\z,\xx))     \in \DD
  \end{equation*} 
  defined for pairs $(\z,\xx)\in \RR^2$ which satisfy $\displaystyle a_1\leq \sqrt{\z^2+k_0^{-2}\xx^2}\leq a_2$, with the
  following properties.  If $a\in[a_1,a_2]$, the set
  \begin{equation*}
    \left\{\Big(a\cos(k_0t),-ak_0\sin(k_0t)\Big):t\in \RR\right\}
  \end{equation*}
  belongs to the domain of $H$.  The map
  \begin{equation*}t\mapsto H\Big(a\cos(2\pi t/\sP_a), -ak_0\sin(2\pi t/\sP_a)\Big)
  \end{equation*}
  is a $\sP_a$-periodic and reversible solution to~\eqref{eq: transformed equation} (or, equivalently, its first component
  solves~\eqref{eq:IK}) on the centre manifold. When $||M_\tau||_{\Cm_b(\DD,\DD)}$ tends to $0$, the map $(H_1,H_2)$ tends to the
  identity map in the $\Cm$-norm (on the domain of $H$), $\sP_a\rightarrow 2\pi/k_0$ and $\frac d{da}\sP_a\rightarrow 0$ uniformly
  in $a\in[a_1,a_2]$.
\end{proposition}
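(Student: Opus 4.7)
The plan is to realise $H$ as a reparametrisation of the $\Cm$ family of centre-manifold orbits $t\mapsto G(a,t)$ supplied by Proposition~\ref{prop: G}, using the coordinates $(\z,\xx)$ themselves as labels. The guiding observation is that at $h\equiv 0$ each centre orbit traces, in the $(\z,\xx)$-plane, the ellipse $\{(a\cos(k_0 t),-ak_0\sin(k_0 t)):t\in\RR\}$, so $a:=\sqrt{\z^2+k_0^{-2}\xx^2}$ is the natural amplitude and $\tau$, determined by $\cos\tau=\z/a$ and $\sin\tau=-\xx/(ak_0)$, is the natural angle. To keep the construction consistent when $\sP_a\neq 2\pi/k_0$, I rescale time so that one $2\pi$-cycle of $\tau$ corresponds to one orbit period $\sP_a$ of $G(a,\cdot)$. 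Concretely, for every $(\z,\xx)\in\RR^2$ with $a_1\leq a\leq a_2$ I set
\begin{equation*}
  H(\z,\xx):=G\!\left(a,\tfrac{\sP_a\tau}{2\pi}\right),\qquad a=\sqrt{\z^2+k_0^{-2}\xx^2},\quad e^{i\tau}=\frac{\z-i\xx/k_0}{a}.
\end{equation*}

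Since $G(a,\cdot)$ has period $\sP_a$, the map $\tau\mapsto G(a,\sP_a\tau/(2\pi))$ is $2\pi$-periodic, so $H$ is well-defined in spite of the multi-valuedness of $\tau$. On the compact annulus, $a$ is bounded away from $0$, hence $(\z,\xx)\mapsto(a,\cos\tau,\sin\tau)=(a,\z/a,-\xx/(ak_0))$ is $C^\infty$. Any $2\pi$-periodic function of class $\Cm$ in $\tau$ factors as a $\Cm$ function of $(\cos\tau,\sin\tau)$; combining this with the regularity of $G\in\Cm$ and $a\mapsto\sP_a\in\Cm[a_1,a_2]$ from Proposition~\ref{prop: G} yields $H\in\Cm_b$ on its (compact) domain. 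This time rescaling is the one genuinely delicate point of the construction: without it, feeding the multi-valued $\tau$ directly to $G$ would give an ill-defined or non-smooth map, and the factor $\sP_a/(2\pi/k_0)$ is exactly what restores $2\pi$-periodicity in $\tau$.

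For the orbit property, fix $t\in\RR$ and $a\in[a_1,a_2]$ and put $(\z,\xx):=(a\cos(2\pi t/\sP_a),-ak_0\sin(2\pi t/\sP_a))$. By construction the amplitude of this pair is $a$ and its angle is $\tau=2\pi t/\sP_a$, so $H(\z,\xx)=G(a,t)$; by Proposition~\ref{prop: G} this is a $\sP_a$-periodic reversible solution of~\eqref{eq: transformed equation}, whose first component consequently solves~\eqref{eq:IK}. In the degenerate case $h\equiv 0$ one has $G(a,t)=U_a(t)$ and $\sP_a=2\pi/k_0$, so
\begin{equation*}
  H(\z,\xx)=U_a(\tau/k_0)=\bigl(a\cos\tau,\,-ak_0\sin\tau,\,a\cos(\tau+k_0\cdot)\bigr)=\bigl(\z,\xx,\,a\cos(\tau+k_0\cdot)\bigr),
\end{equation*}
giving $(H_1,H_2)\equiv\mathrm{id}$. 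The $\Cm$-convergences $G(a,\cdot)\to U_a(\cdot)$ and $\sP_a\to 2\pi/k_0$ from Proposition~\ref{prop: G} then propagate through the chain rule to $(H_1,H_2)\to\mathrm{id}$ in $\Cm$, and give $\sP_a\to 2\pi/k_0$ together with $d\sP_a/da\to 0$ uniformly on $[a_1,a_2]$ as $\|h\|_{\Cm_b(\DD_c,\DD_h)}\to 0$.
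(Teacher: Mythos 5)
Your proposal is correct and follows essentially the same route as the paper: you parametrise by the amplitude $a=\sqrt{\z^2+k_0^{-2}\xx^2}$ and a rescaled time/angle, set $H(\z,\xx)=G(a,\sP_a\tau/(2\pi))$ (which is exactly the paper's $\widetilde H(a,\widetilde t)$ with $\widetilde t=\tau/k_0$), and then read off periodicity, reversibility, regularity, and the $\Cm$-convergence statements from Proposition~\ref{prop: G}. The only difference is cosmetic: you spell out the well-definedness/smoothness of the angular reparametrisation a bit more explicitly than the paper does.
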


\proof Let $0<a_1<a_2$. By Proposition~\ref{prop: G}, for $a\in[a_1,a_2]$, the function $t\rightarrow G(a,t) \in \DD$ is a
reversible periodic solution to~\eqref{eq: transformed equation} with period $\sP_a>0$; it can be parametrised by $a$ and
\begin{equation*}
  \widetilde t:=2\pi t/(\sP_ak_0)\in \RR.
\end{equation*} 
In the variable $\widetilde t$, the period is independent of $a$ and equal to $2\pi/k_0$.  Hence, we obtain a parametrisation of a
compact piece of the centre manifold
\begin{equation*}
  (a,\widetilde t)\rightarrow  \widetilde H(a,\widetilde t)=
  (\widetilde H_1(a,\widetilde t),\widetilde H_2(a,\widetilde t),
  \widetilde H_3(a,\widetilde t))
  := G(a,t)\in \DD
\end{equation*} 
for $a_1\leq a\leq a_2$ and $\widetilde t\in \RR$, which is $2\pi/k_0$-periodic and reversible in $\widetilde t$, i.e.,
$\widetilde H(a,-\widetilde t)= S\widetilde H(a,\widetilde t)$.  This piece of centre manifold is invariant and $a\times \RR$ is
sent to a reversible periodic solution, up to a linear reparametrisation.  When $||M_\tau||_{\Cm_b(\DD,\DD)}$ is small, the map is
near the map
\begin{equation*}
  (a,\widetilde t)\rightarrow U_a(\widetilde t)=(a\cos(k_0 \widetilde t),
  -ak_0\sin(k_0 \widetilde t),a\cos(k_0 \widetilde t+\cdot))
\end{equation*}
and actually equal to this map when $M_\tau=0$.

By Proposition~\ref{prop: G}, the map $a\rightarrow \sP_a$ tends to the constant map $2\pi/k_0$ in $\Cm_b[a_1,a_2]$ when $M_\tau$
tends to $0$ in $\Cm_b(\DD,\DD)$.  Moreover, $\frac d{da}\sP_a=0$ when $M_\tau=0$ (because the period is constant, equal to
$2\pi/k_0$) and therefore $\frac d{da}\sP_a\rightarrow 0$ uniformly in $a\in[a_1,a_2]$ as
$||M_\tau||_{\Cm_b(\DD,\DD)}\rightarrow 0$.

Given $(\z,\xx,\Z)\in \DD_{c}$ in the range of $U_a$, $a$ can be recovered by the formula
\begin{equation}
  \label{eq:a}
  a=a(\z,\xx)=\sqrt{\z^2+\k_0^{-2}\xx^2}
\end{equation} 
and, modulo $2\pi/k_0$, $\widetilde t=\widetilde t (\z,\xx)$ can be recovered from
\begin{equation*}
  (\cos (\k_0\widetilde t),\sin(\k_0\widetilde t))=(a^{-1}\z,-a^{-1}\k_0^{-1}\xx).
\end{equation*}
This gives the desired map $(\z,\xx)\rightarrow H(\z,\xx):=\widetilde H(a(\z,\xx),\widetilde t(\z,\xx))$.  \qed

We now return to our initial notation. Let us focus on the well around $1$. To apply the centre manifold theorem with order of
differentiability $\m=4$, we redefine $\psi=\psi_\epsilon$ on $(-\infty,\epsilon)$ so that $|\psi'_\epsilon(u)-1|< C\, \epsilon$
and~\eqref{eq: cdn on psi} holds on $\RR$ for all small $\epsilon>0$ (changing the value of $C$ if necessary).  We then obtain the
following proposition as reformulation of Proposition~\ref{prop:function H}. Note that $a\cos(\k_0t)$ is replaced by
$1+a\cos(\k_0t)$ to take account of the fact that we are now concerned with the well of $\psi$ centred at $1$.  Moreover, we
revert to writing $\s$ instead of $t$, and write the wave equation~\eqref{eq:IK} again as in~\eqref{eq:main},
\begin{equation*}
  c^2u''-\Delta_Du+\alpha u-\alpha \psi'(u)=0.
\end{equation*}

\begin{proposition}
  \label{prop: function H around 1}
  Let $0<a_1<a_2<1$. For all $\epsilon>0$   small enough, there exists a $\Cm_b$ map
  \begin{equation*}
    (\z,\xx)\rightarrow H(\z,\xx)=
    (H_1(\z,\xx),H_2(\z,\xx),H_3(\z,\xx))
    \in \DD
  \end{equation*} 
  defined for pairs $(\z,\xx)\in \RR^2$ satisfying $\displaystyle a_1\leq \sqrt{(\z-1)^2+\k_0^{-2}\xx^2}\leq a_2$, with the
  following properties.  If $a\in[a_1,a_2]$, the set
  \begin{equation*}
    \left\{\Big(1+a\cos(\k_0\s),-a\k_0\sin(\k_0\s)\Big):\s\in \RR\right\}
  \end{equation*}
  belongs to the domain of $H$.  The map
  \begin{equation*}
    \s\rightarrow H_1\Big(1+a\cos(2\pi \s/\sP_a), -a\k_0\sin(2\pi \s/\sP_a)\Big)
  \end{equation*} 
  is a $\sP_a$-periodic and reversible solution to~\eqref{eq:main}. When $\epsilon \rightarrow 0$, the map $(H_1,H_2)$ tends to
  the identity map in the $\Cm$-norm (on the domain of $H$), $\sP_a\rightarrow 2\pi/k_0$ and $\frac d{da}\sP_a\rightarrow 0$
  uniformly in $a\in[a_1,a_2]$.
\end{proposition}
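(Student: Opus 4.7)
\proof[Proof proposal] The plan is to reduce Proposition~\ref{prop: function H around 1} to the already-proven Proposition~\ref{prop:function H} by shifting coordinates so that the well of $\psi_\epsilon$ at $u=1$ becomes the origin, and then checking that the hypotheses of the abstract centre manifold setup hold for small $\epsilon$.

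First I would introduce the shifted variable $\z(t):=u(\s)-1$ (with $t=\s$), so that the equation~\eqref{eq:main} becomes, after dividing by $c^2$ and using $\tau^2=\alpha/c^2$ and $\gamma=1/\alpha$ from~\eqref{eq:taugamma},
\begin{equation*}
  \ddot \z(t)+\tau^2 V'(\z(t))=\gamma\tau^2[\z(t-1)-2\z(t)+\z(t+1)],
\end{equation*}
with
\begin{equation*}
  V'(\z):=\z-\bigl(\psi'_\epsilon(\z+1)-1\bigr),
  \qquad
  V''(\z)=1-\psi''_\epsilon(\z+1).
\end{equation*}
This matches the abstract form~\eqref{eq:IK}. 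The redefinition of $\psi_\epsilon$ on $(-\infty,\epsilon)$ noted just before Proposition~\ref{prop: function H around 1} ensures that $|\psi'_\epsilon-1|\leq C\epsilon$ and $|\psi''_\epsilon|\leq C\epsilon$ hold on all of $\RR$, so that $\mathrm{id}-V'$ belongs to $C^{4}(\RR)$ with globally Lipschitz constant of order $\epsilon$. For $\epsilon$ sufficiently small the hypothesis of Theorem~\ref{theo:vdw} is satisfied, and the spectral condition $(\gamma,\tau)\in\Delta_0$ has already been verified in Section~\ref{sec:Setting-main-result} for $c\leq 1$ close to $1$. Hence the global centre manifold $h\in C^{4}_b(\DD_c,\DD_h)$ exists, and by the remark immediately after Theorem~\ref{theo:vdw}, $\|h\|_{C^4_b(\DD_c,\DD_h)}\to 0$ as $\epsilon\to 0$, because $\|M_\tau\|_{C^4_b(\DD,\DD)}$ is controlled by the $C^5$ norm of $\psi_\epsilon'-\mathrm{sgn}$, which vanishes with $\epsilon$ thanks to~\eqref{eq:psi2prime}--\eqref{eq: cdn on psi}.

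Once the centre manifold is in hand for this shifted problem, I would apply Proposition~\ref{prop:function H} verbatim, with $0<a_1<a_2<1$ fixed, obtaining a $C^{4}_b$ map $\widehat H(\widehat\z,\xx)=(\widehat H_1,\widehat H_2,\widehat H_3)$ defined on $a_1\leq\sqrt{\widehat\z^2+\k_0^{-2}\xx^2}\leq a_2$, together with a period $\sP_a$, such that $\s\mapsto\widehat H(a\cos(2\pi\s/\sP_a),-a\k_0\sin(2\pi\s/\sP_a))$ is a reversible $\sP_a$-periodic solution to the shifted equation on the centre manifold, with $(\widehat H_1,\widehat H_2)\to\mathrm{id}$ in $C^{4}$, $\sP_a\to 2\pi/\k_0$ and $\tfrac{d}{da}\sP_a\to 0$ as $\epsilon\to 0$ (equivalently, as $h\to 0$). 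Translating back, I set
\begin{equation*}
  H_1(\z,\xx):=1+\widehat H_1(\z-1,\xx),\qquad
  H_2(\z,\xx):=\widehat H_2(\z-1,\xx),\qquad
  H_3(\z,\xx):=1+\widehat H_3(\z-1,\xx),
\end{equation*}
whose natural domain is precisely $a_1\leq\sqrt{(\z-1)^2+\k_0^{-2}\xx^2}\leq a_2$; the first-component map $\s\mapsto H_1(1+a\cos(2\pi\s/\sP_a),-a\k_0\sin(2\pi\s/\sP_a))$ then solves~\eqref{eq:main} by construction. The convergence $(H_1,H_2)\to\mathrm{id}$ in $C^{4}$ follows from the corresponding statement for $(\widehat H_1,\widehat H_2)$, and the statements about $\sP_a$ are unchanged.

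The conceptually delicate point, which I would want to verify carefully, is that the $\psi_\epsilon$-modification on $(-\infty,\epsilon)$ is harmless for the conclusion: the periodic solutions produced lie in a neighbourhood of $u=1$ of radius at most $a_2<1$, so provided $\epsilon<1-a_2$ they never enter the modified region, and hence they are genuine periodic solutions to the original~\eqref{eq:main}. All remaining ingredients are either direct transcriptions from Proposition~\ref{prop:function H} or straightforward consequences of the shift and of the estimates on $\psi_\epsilon$. \qed
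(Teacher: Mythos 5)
Your proposal is correct and takes essentially the same approach as the paper, which presents Proposition~\ref{prop: function H around 1} as a ``reformulation'' of Proposition~\ref{prop:function H} without a formal proof; you spell out the shift $\z=u-1$, the resulting identifications $V'(\z)=\z-(\psi'_\epsilon(\z+1)-1)$ and $\mathrm{id}-V'=\psi'_\epsilon(\cdot+1)-1$, the role of the redefinition of $\psi_\epsilon$ on $(-\infty,\epsilon)$ in making $\mathrm{id}-V'$ globally small in $W^{4,\infty}$, and the observation that for $\epsilon<1-a_2$ the constructed wave trains avoid the modified region, all of which the paper leaves implicit.
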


This proposition in particular establishes the existence of $H_1$, the first component of $H$. We will prove in the following
subsection that a suitable extension of this function has the properties of the function $H_1$ claimed in Theorem~\ref{thm:def of
  H1}.

\subsection{Proof of Theorem~\protect{\ref{thm:def of H1}}}
\label{sec:Proof-Theor-H1}

The function $H_1$ of Proposition~\ref{prop: function H around 1} establishes the existence of wave trains oscillating in the well
centred at $1$. We now extend this function by symmetry to a smooth function that gives anharmonic wave trains oscillating in the
wells at $\pm 1$ as $\s \to \pm \infty$.

The map $(\z,\xx)\rightarrow H_1(\z,\xx)$ of the last proposition sends the function
\begin{equation*}
  \RR\ni \s\rightarrow \Big( 1+a\cos(2\pi \s/\sP_a)\,,\,
  -a\k_0\sin(2\pi \s/\sP_a)\Big)
\end{equation*} 
to a periodic solution of equation~\eqref{eq:main} ($a_1\leq a\leq a_2$).  Analogously, by shifting $x\rightarrow x+\sP_a/2$, the
function
\begin{equation}
  \label{eq:periodfct}
  \RR\ni \s\rightarrow \Big( 1-a\cos(2\pi \s/\sP_a)\,,\,
  a\k_0\sin(2\pi \s/\sP_a)\Big)
\end{equation} 
is sent to a periodic solution, too. As $\epsilon$ is near $0$, $H_1(\z,\xx)$ is near $\z$ by Proposition~\ref{prop: function H
  around 1} and, when $\epsilon=0$, $H_1(\z,\xx)=\z$.  Given $0<a_1<a_2<1$, let $\epsilon_0>0$ be such that $u>\epsilon_0$ and
$H_1(u,v)>\epsilon_0$ for all $(u,v) \in \RR^2$ with $a_1\leq \sqrt{ (u-1)^2 + k_0^{-2}v^2 } \leq a_2$. Since $\psi\in C^5$, we
can assume that $H_1$ is $C^4$; moreover $H_1$ is well-defined on a compact convex subset of $(\epsilon_0,\infty)\times \RR$ with
non-empty interior. We can then extend $H_1-\z$ in a $C^{4}$ way on $\RR^2$, such that $H_1-\z$ is small in $W^{4,\infty}(\RR^2)$;
see~\cite[Paragraph VI 2.3]{Stein1970a}.  The extension can be chosen such that $H_1(\z,\xx)$ is odd in $\z$ and that
$H_1(\z,\xx)=\z$ on $(-\epsilon_0/2,\epsilon_0/2)\times \RR$.  Remembering that $\psi'$ is odd, the analysis around the well $1$
as $\s\rightarrow \infty$ can therefore be transferred to the well $-1$ as $\s\rightarrow -\infty$. This establishes
claims~\ref{it:def-H1:1}--\ref{it:def-H1:3} of Theorem~\ref{thm:def of H1}.

We now turn to the proof of claims~\ref{it:H1-asym} and \ref{it:H1-P} of this theorem.  For $\s\in\RR$, $w_{\beta}(\s)$ has been
defined there as
\begin{equation*}
  w_\beta(\s)=H_1\Big(w_{0,\beta}\Big(2\pi\s /(\sP_a\k_0)\Big),\, 
  w_{0,\beta}'\Big(2\pi\s /(\sP_a\k_0)\Big)\Big),
\end{equation*}
where $w_{0,\beta}$ is given by~\eqref{eq: def of wob} and $\sP_a>0$ is the period corresponding to
\begin{equation}
\label{eq: how to get a}
  a=\lim_{\s\rightarrow +\infty}
  \sqrt{(w_{0,\beta}(\s)-1)^2+\k_0^{-2}w_{0,\beta}' (\s)^2}\, , 
\end{equation}
analogously to~\eqref{eq:a}. The constant $1$ has been subtracted from $w_{0,\beta}(\s)$ since the analysis is carried out around
the constant solution $1$ when $\s\rightarrow +\infty$.

As $a$ is a function of $\beta$, so is $\sP_a$, and we set $\widetilde \sP(\beta)=\sP_a.$ Let us go back to the
definition of $w_{0,\beta}$ in~\eqref{eq: def of wob},
\begin{equation*}
  w_{0,\beta} =u_p+B\beta u_o,
\end{equation*}
where $B>0$, $u_p$ is the particular odd solution of~\eqref{eq:kz} found in~\cite{Kreiner2011a} for $\epsilon=0$, and the odd
function $u_o$ satisfies~\eqref{eq: bound u_o} and vanishes in a neighbourhood of $0$.  In~\cite{Kreiner2011a}, it is shown that
$u_p'(0)>0$, $\text{sgn}(u_p(\s))=\text{sgn}(\s)$ on $\RR$ and $u_p$ converges exponentially to
\begin{equation}
  \label{eq: bound u_p}
  u_{p,\infty}^{\pm}(\s)=
  \pm\left\{1-\frac{c^2\k_0^2-2}{c^2\k_0^2-\k_0}\cos(\k_0\s) \right\}
\end{equation}
as $\s \rightarrow \pm \infty$. Since $Lu_p=\alpha\text{sgn}(x)$ with $u_p\in W^{2,\infty}(\RR)\cap C^2(\RR\backslash\{0\})$, a
simple bootstrapping argument shows that $u_p'''\in C(\RR\backslash\{0\})$ can be continuously extended at $x=0$ and that
$u_p''''\in C(\RR\backslash\{-1,0,1\})$ with left and right limits at $x\in\{-1,0,1\}$.  Moreover, as $\s\rightarrow +\infty$
(respectively $\s\rightarrow -\infty$), the rate of convergence is exponential of the type $e^{-|\nu \s|}$ in the sense that
$u_p-u^{+}_{p,\infty}$ (respectively $u_p-u^{-}_{p,\infty}$) and its four first derivatives are bounded on $\RR$ when multiplied
by $e^{|\nu\s |}$. Moreover the parameter $\nu<0$ can be assumed to satisfy $|\nu|<p_0$, with $p_0$ as in Theorem~\ref{theo:vdw}.
Furthermore, it holds that $0<\inf u_{p,\infty}^+<\sup u_{p,\infty}<1$. We choose $0<a_1<a_2<1$ such that
$a_1<\inf u_{p,\infty}^+<\sup u_{p,\infty}<a_2$. Then $B>0$ in the definition of $w_{0,\beta}$ is chosen small enough so that
$a_1<\inf u_{p,\infty}^+ -B<\sup u_{p,\infty} +B <a_2$. 

Let us ignore for the moment issues of convergence and study the image of $u_{p,\infty}^\pm +B\beta u_{o,\infty}^\pm$ and its
derivative under $H_1$, rather than the image of $w_{0,\beta} =u_p+B\beta u_o$ and its derivative. Note that
\begin{equation*}
  \Big(u_{p,\infty}^\pm(\ts) +B\beta u_{o,\infty}^\pm(\ts),
  {u_{p,\infty}^\pm}'(\ts) +B\beta {u_{o,\infty}^{\pm}}'(\ts)\Big)
\end{equation*}
is of the form~\eqref{eq:periodfct} for small enough constant $B$; thus by Proposition~\ref{prop: function H around 1} for all
$|\beta|\leq 1$
\begin{equation}
 \label{eq:hamster1} 
 \s \rightarrow H_1\left(
   u_{p,\infty}^{\pm}(\ts)\pm   B\beta\cos(\k_0\ts),{u_{p,\infty}^{\pm}}'(\ts)\mp  \k_0 B\beta\sin(\k_0\ts)
 \right),
\end{equation}
with $\ts=2\pi \s/(\sP_a\k_0)$ is a periodic solution to~\eqref{eq:main}, where the period is $\sP_a>0$ and (see~\eqref{eq: how to
  get a})
\begin{equation*}
  a
  =\left|B\beta-\frac{c^2\k_0^2-2}{c^2\k_0^2-\k_0}\right|.
\end{equation*}

We are thus left with studying the convergence of $w_{0,\beta}$, that is, the convergence of $u_p$ to $u_{p,\infty}$.
In~\cite{Kreiner2011a}, $u_p$ is shown to be of the form $u_p=\widetilde u_p-r$, with the following properties.
\begin{enumerate}
\item \label{it:kreiner1} The odd function $\widetilde u_p$ and the Fourier transform $\widehat r$ of $r$ are explicitly given.
\item \label{it:kreiner2} The function $\widetilde u_p\in W^{2,\infty}(\RR)\cap C^\infty(\RR\backslash\{0\})$ converges
  exponentially to $u_{p,\infty}^{\pm}(\s)$ as $\s\rightarrow \pm\infty$, with corresponding exponential convergence of its four
  first derivatives.
\item \label{it:kreiner3} $L\widetilde u_p-\alpha\text{sgn}(\s)$ is continuous at $\s=0$.
\item \label{it:kreiner4} The Fourier transform $k\rightarrow \widehat r(k)$ is smooth and decays with all its derivatives to $0$
  at $\pm\infty$ at least as $|k|^{-5}$. In particular $r\in H^4(\RR)$ and $Lr\in H^2(\RR)$.
\item \label{it:kreiner5} The identity $\text{sgn}(\widetilde u_p(\s))=\text{sgn}(u_p(\s))=\text{sgn}(\s)$ holds on $\RR$.
\end{enumerate}
As $Lr=L\widetilde u_p-\alpha\text{sgn}(\s)$ decays exponentially to $0$, so do $r$ and $r'$ by Proposition~\ref{prop: on the
  exponential decay}. As $c^2r''=\Delta_D r-\alpha r+L\widetilde u_p-\alpha\text{sgn}(\s)$ and the two first derivatives of
$L\widetilde u_p-\alpha\text{sgn}(\s)$ decays exponentially, so do $r''$, $r'''$ and $r^{(4)}$.  As
$c^2u_p''=\Delta_Du_p-\alpha u_p+\alpha\text{sgn}(\s)$, we get (as already observed) that
$u_p'''\in C(\RR\backslash\{0\})\cap L^{\infty}(\RR)$ and that $u_p'''$ can be continuously extended at $x=0$.

This decay, in combination with property~\ref{it:kreiner2} and the fact that~\eqref{eq:hamster1} defines, as just shown, a
periodic solution to~\eqref{eq:main}, establishes claims~\ref{it:H1-asym} and~\ref{it:H1-P} of Theorem~\ref{thm:def of
  H1}. Finally, claim~\ref{it:H1-sign} of the theorem follows immediately from the fact that $\text{sgn}(u_p(\s))=\text{sgn}(\s)$
on $\RR$ with $u_p'(0)>0$, as shown in~\cite{Kreiner2011a}; then $\text{sgn}(w_{0,\beta}(\s))=\text{sgn}(\s)$ on $\RR$ with
$w_{0,\beta}'(0)>0$ by~\eqref{eq: def of wob}; the result follows since by claim~\ref{it:def-H1:2} $H(u,v)$ is odd in $u$ and by
claim~\ref{it:def-H1:3} $H_1(\z,\xx)=\z$.  \qed

\section{Properties of the family $w_\beta$}
\label{sec:Prop-family-wbeta}

In this section, we establish various properties of the family $w_\beta$ which will be used in the fixed point argument in
Section~\ref{sec:Exist-heter-conn} to complete the proof of Theorem~\ref{theo:main}. Throughout this section, $w_\beta$ will be as
defined in Theorem~\ref{thm:def of H1}. Let $\XX : = \{ f\in L^\infty(\RR): e^{ -\nu |x|}f\in L^\infty(\RR) \}$.

\begin{proposition}
  \label{prop: behaviour}
  For all $\nu<0$ close enough to $0$, the following holds.
  \begin{enumerate}
  \item \label{it:behaviour1} 
    With $L$  defined in~\eqref{eq:L}, 
    \begin{equation*}
      Lw_\beta-\alpha     \psi'(w_\beta)
      \in \XX \cap E^\nu_0(\RR\backslash[-2,2], \RR)
    \end{equation*}
    and it is in $C([-1,1], \XX)\cap C^1([-1,1],E^\nu_0(\RR\backslash[-2,2], \RR))$ as a function of $\beta\in [-1,1]$.
  \item \label{it:behaviour2} The map $\beta\rightarrow \int_\RR \Big(Lw_\beta-\alpha \psi'(w_\beta)\Big)\sin(\k_0\s)d\s$ is $C^1$
    and the transversality condition
    \begin{equation}
      \label{eq: K_0}
      \frac{d}{d\beta}\int_\RR\Big(Lw_\beta-\alpha     \psi'(w_\beta)\Big)\sin(\k_0\s)d\s
      \neq 0
    \end{equation}
    holds for all $\beta\in[-1,1]$.
  \item \label{it:behaviour3} In addition, we have
    \begin{equation*}
      \lim_{(\beta,\epsilon)\rightarrow 0}\sup\left\{
        e^{|\nu \s|}\left|(Lw_{\beta})(\s)-\alpha\psi'(w_\beta(\s))\right|:
        \s\in \RR, |w_{\beta}(\s)|\geq \epsilon\right\}=0,
    \end{equation*}
    where $w_{\beta}$ and $\psi$ (but not $\nu$) depend on $\epsilon>0$.
  \end{enumerate}
\end{proposition}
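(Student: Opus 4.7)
The three claims split by character: (1) is regularity and decay inherited from the construction of $w_\beta$; (2) is a non-degeneracy (transversality) condition; and (3) is a perturbation statement in the joint limit $(\beta,\epsilon)\to 0$. My plan is to verify the relevant identities at the degenerate case $\epsilon=0$ first, where $\psi'=\sgn$ and $w_\beta=w_{0,\beta}=u_p+B\beta u_o$ has an exact algebraic form, and then propagate to small $\epsilon>0$ by continuity using Theorem~\ref{thm:def of H1}(1),(6). The main obstacle is claim~(2): the transversality reduces to the non-vanishing of an explicit integral at $\epsilon=0$, which must be ensured by the construction of $u_o$.

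For claim~(1), Theorem~\ref{theo:vdw}(2) together with Theorem~\ref{thm:def of H1} gives that $w_\beta-w_{\beta,\infty}^\pm$ decays at rate $e^{-p_0|\s|}$ as $\s\to\pm\infty$, along with its derivatives up to order four. On $|\s|\geq 1$ I would write
\begin{equation*}
Lw_\beta-\alpha\psi'(w_\beta)=L\bigl(w_\beta-w_{\beta,\infty}^\pm\bigr)-\alpha\bigl[\psi'(w_\beta)-\psi'(w_{\beta,\infty}^\pm)\bigr],
\end{equation*}
so that, by the identity $Lw_{\beta,\infty}^\pm-\alpha\psi'(w_{\beta,\infty}^\pm)=0$ and Lipschitz continuity of $\psi'$, the left-hand side decays exponentially at any rate $|\nu|<p_0$; on $|\s|\leq 1$ the expression is bounded because $w_\beta=u_p$ there by Theorem~\ref{thm:def of H1}(5). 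The $C^1$-dependence on $\beta$ into $\XX$ then follows from the smooth dependence of $w_{0,\beta}$ on $\beta$, together with the $C^4$-regularity of $H_1$ and of $\widetilde\sP(\beta)$.

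Claim~(2) is the heart of the argument. At $\epsilon=0$, since $\sgn(w_{0,\beta})=\sgn(\s)$ on $\RR$ (by Theorem~\ref{thm:def of H1}(5) and the choice of small $B>0$ made in Section~\ref{sec:Proof-Theor-H1}) and since $u_p$ solves~\eqref{eq:kz}, one obtains
\begin{equation*}
Lw_{0,\beta}-\alpha\sgn(w_{0,\beta})=\bigl(Lu_p-\alpha\sgn(u_p)\bigr)+B\beta\,Lu_o=B\beta\,Lu_o,
\end{equation*}
whose $\beta$-derivative is $B\,Lu_o$. The crucial observation is that $Lu_o$ has compact support: outside a bounded interval $u_o(\s)=\sgn(\s)\cos(k_0\s)$ by~\eqref{eq: bound u_o}, and a direct calculation on each half-line gives $L[\sgn(\s)\cos(k_0\s)]=D(k_0)\sgn(\s)\cos(k_0\s)=0$ since $k_0$ is a root of $D$. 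The transversality condition~\eqref{eq: K_0} therefore reduces at $\epsilon=0$ to the non-vanishing of the single explicit finite integral $B\int_\RR Lu_o(\s)\sin(k_0\s)\,d\s$, whose value is determined solely by the transition shape of $u_o$ near the origin. This non-vanishing can be achieved by exploiting the freedom in choosing the cut-off used to construct $u_o$ (the constraints of oddness, vanishing near $0$, and the asymptotic condition~\eqref{eq: bound u_o} leave an infinite-dimensional family of admissible $u_o$), after which $B>0$ is fixed once and for all. For small $\epsilon>0$ and every $\beta\in[-1,1]$, the inequality persists because, by claim~(1) and the $W^{4,\infty}$-convergence $H_1(\z,\xx)-\z\to 0$ from Theorem~\ref{thm:def of H1}(1), the functional $\beta\mapsto\int\tfrac{d}{d\beta}(Lw_\beta-\alpha\psi'(w_\beta))\sin(k_0\s)\,d\s$ depends continuously on $\epsilon$ uniformly in $\beta\in[-1,1]$.

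For claim~(3), Theorem~\ref{thm:def of H1}(5) and the sign property force $\{|w_\beta|\geq\epsilon\}$ to avoid a uniform neighbourhood of $\s=0$ (independent of small $\epsilon$ and $\beta$); outside this neighbourhood $|\psi_\epsilon'(w_\beta)-\sgn(w_\beta)|=O(\epsilon)$ by~\eqref{eq:psiprimeoutside}. Writing $Lw_\beta-\alpha\psi'(w_\beta)=L(w_\beta-u_p)-\alpha(\psi'(w_\beta)-\sgn(u_p))$ and using the $W^{4,\infty}$-smallness of $H_1-\mathrm{id}$ (Theorem~\ref{thm:def of H1}(1)) together with $B\beta u_o\to 0$ and the exponential decay from claim~(1), both terms tend to $0$ in the $e^{|\nu\s|}$-weighted sup norm as $(\beta,\epsilon)\to 0$, yielding the required uniform vanishing.
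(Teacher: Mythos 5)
Your decomposition for claim (1) (subtract the asymptotic wave train and use that it solves the equation exactly) and the structure of claim (3) are in line with the paper, although the cited mechanism is off in one place: Theorem~\ref{theo:vdw}(2) applies to exact solutions on the centre manifold, whereas $w_\beta$ for general $\beta$ is only an approximate solution. The decay of $w_\beta-w_{\beta,\infty}^\pm$ in the paper is obtained instead from the fundamental-theorem-of-calculus expansion of $H_1$ applied to $w_{0,\beta}-w_{0,\beta,\infty}^\pm$, where the decay ultimately comes from the exponential convergence of $u_p$ and $u_o$ to their asymptotics. Also note the rate is a generic $|\nu|<p_0$, not $p_0$ itself.

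The genuine gap is in claim (2). You correctly reduce the $\epsilon=0$ case to showing $\int_\RR Lu_o(\s)\sin(\k_0\s)\,d\s\neq0$, and your observation that $Lu_o$ is compactly supported is valid. But your assertion that the value of this integral is ``determined solely by the transition shape of $u_o$ near the origin'' and that non-vanishing ``can be achieved by exploiting the freedom in choosing the cut-off'' is false: the integral is in fact the \emph{same} for every admissible $u_o$. Indeed, any two choices differ by a compactly supported odd $C^4$ function $w$, and for such $w$ one can integrate by parts with no boundary terms to get $\int Lw\,\sin(\k_0\cdot)\,d\s=\int w\,L[\sin(\k_0\cdot)]\,d\s=D(\k_0)\int w\sin(\k_0\cdot)\,d\s=0$. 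So the integral is rigidly pinned by the asymptotic form $\sgn(\s)\cos(\k_0\s)$, and its value must actually be computed; it equals $-2c^2\k_0+2\neq0$, which is Proposition~\ref{prop: orthogonality} and is established by a delicate limiting integration-by-parts (boundary terms at infinity do \emph{not} vanish because neither factor decays). Your argument supplies no such computation and relies on a freedom that does not exist, so the transversality is not actually established. A secondary shortfall is that persistence for $\epsilon>0$ is only asserted as ``continuous dependence on $\epsilon$''; this is not automatic, since $\psi''$ is of size $\epsilon^{-1}$ near the dislocation and the change of variable $\ts=2\pi\s/(\widetilde\sP(\beta)\k_0)$ introduces a $\beta$-dependent rescaling. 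The paper controls these by showing the $\beta$-derivative converges to $BLu_o$ in $L^\infty_{\loc}(\RR\setminus\{0\})$ using $\widetilde\sP'(\beta)\to0$ and $H_1\to\mathrm{id}$, and separately bounds the contribution of $[-C_0\epsilon,C_0\epsilon]$ using the factor $|\sin(\k_0\s)|\leq\k_0|\s|$ against the $\epsilon^{-1}$ blow-up.
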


In this proposition, the sign convention $\nu<0$ is chosen to be consistent with the notations
in~\cite{Iooss2000a,Vanderbauwhede1992a}.  Moreover, claim~\ref{it:behaviour1} ensures that the integral in~\eqref{eq: K_0} is
well defined.  In the proof, we shall see that in fact
\begin{equation*}
  \kk_0:=\inf_{\beta\in[-1,1]}
    \left|  \frac{d}{d\beta}\int_\RR(
       (L w_\beta
      -\alpha\psi'(w_\beta))
      \sin(\k_0\s) d\s\right|
\end{equation*}
is uniformly bounded below by a positive constant that does not depend on small $\epsilon>0$.

\proof In this proof, $B>0$ is chosen as small and $\nu<0$ as close to $0$ as required (but in a way that is independent of
$\epsilon$ small).  Let us reconsider the function $w_\beta$ from Theorem~\ref{thm:def of H1} and show that we have exponentially
attained limits.  To this behalf, we write
\begin{equation}
  \label{eq: w_beta}
  w_\beta(\s)=H_1
  \Big(w_{0,\beta}(\ts),w_{0,\beta}'(\ts)\Big)
  \text{\quad with\quad}
  w_{0,\beta}= u_p+B\beta u_o
\end{equation}
for $\s\in \RR$ with $\ts=2\pi\s/(\widetilde \sP(\beta)\k_0)$ and $\widetilde \sP(\beta)$ as in Theorem~\ref{thm:def of H1}.

We write $u_{p,\infty}^\pm$, $u_{o,\infty}^\pm$ and $w_{\beta,\infty}^\pm$ for the corresponding asymptotic periodic functions as
$x\rightarrow \pm\infty$.  Let us choose $C_0>0$ large enough so that $|w_{\beta}|\geq \epsilon$ on
$\RR\backslash(-C_0\epsilon,C_0\epsilon)$ for all $\beta\in[-1,1]$ and all small $\epsilon$. We recall that
$w_{\beta,\infty}^+\geq \epsilon$ and $w_{\beta,\infty}^-\leq -\epsilon$ on $\RR$ if $\epsilon>0$ is small enough.

We obtain by continuity of $H_1$
\begin{equation*}
  w_{\beta,\infty}^\pm(\s)=H_1\Big(\a(\ts),\a'(\ts)\Big)
  \text{\quad with\quad}
  \a= u_{p,\infty}^{\pm}+B\beta u_{o,\infty}^\pm.
\end{equation*}
To prove claim~\ref{it:behaviour1}, we first prove an auxiliary statement.  Let $\b:=w_{0,\beta}-\a$, so that by the fundamental
theorem of calculus
\begin{multline}
  \label{eq: the diff}
  w_\beta(\s)-w_{\beta,\infty}^\pm(\s)
  =\int_0^1
  \b\partial_1H_1\left( \a+\si \b ,  \a'+\si \b'  \right)\\+\b'\partial_2H_1\left( \a+\si \b ,  \a'+\si \b' \right) d\si, 
\end{multline}
where the functions in the arguments of $H_1$ are evaluated at $\ts$.  This expression is exponentially decaying as
$\s\rightarrow \pm\infty$ since $\b$ and $\b'$ decay exponentially, by~\eqref{eq: bound u_p} and~\eqref{eq: bound u_o}. The fact
that left-hand side is evaluated at $\s$ and the right-hand side at $\ts$ is not a problem since we can decrease $|\nu|$. We also
get that $\left(w_\beta(\s)-w_{\beta,\infty}^\pm(\s)\right)'$ and $\left(w_\beta(\s)-w_{\beta,\infty}^\pm(\s)\right)''$ are
exponentially decaying. Here we use that $H_1$ is of class $C^3$ (even of class $C^4$). These estimates are used to estimate
$L(w_\beta-w_{\beta,\infty}^{\pm})$ below.

As $ L w_{\beta,\infty}^\pm-\alpha \psi'(w_{\beta,\infty}^\pm)=0$ by Theorem~\ref{thm:def of H1}, we find that
\begin{multline}
  \label{eq: exp dec}
  Lw_\beta-\alpha \psi'(w_\beta)=L(w_\beta-w_{\beta,\infty}^{\pm})
  -\alpha\left(\psi'(w_\beta)-\psi'(w^{\pm}_{\beta,\infty})\right)
  \\=L(w_\beta-w_{\beta,\infty}^{\pm})
  -\alpha\int_0^1\psi''\left( w_{\beta,\infty}^{\pm}+\si(w_\beta-w_{\beta,\infty}^{\pm})\right)
  d\si\cdot(w_\beta-w_{\beta,\infty}^{\pm}),
\end{multline}
and both terms on the right are also exponentially decaying as $\s\rightarrow \pm\infty$ by the exponential bound on~\eqref{eq:
  the diff} just established. Note that, if $\epsilon>0$ is small enough, $|w_\beta|,|w_{\beta,\infty}^{\pm}| \ge\epsilon$ for all
$|\s| \geq C_0\epsilon$ and thus
$\left|\psi''\left( w_{\beta,\infty}^{\pm}+\si(w_\beta-w_{\beta,\infty}^{\pm})\right)\right|\leq \Const\,\epsilon$ uniformly in
$\beta$ and $\si\in[0,1]$.  By continuity of $\psi'$ and the other expressions involved,
$Lw_\beta-\alpha \psi'(w_\beta) \in L^\infty((-C_0 \epsilon, C_0\epsilon))$ (remember that
$w_\beta\in C^2(\RR\backslash\{0\})\cap W^{2,\infty}(\RR)$).  These arguments prove the first part of claim~\ref{it:behaviour1},
$Lw_\beta-\alpha \psi'(w_\beta)\in \XX$, the dependence in $\beta\in[-1,1]$ being continuous (decreasing $|\nu|$ further if
necessary).

To establish that this expression is $C^1([-1,1], E^\nu_0(\RR\backslash[-2,2], \RR))$ as a function of $\beta$, we give an argument in
three steps.

Step 1. First note that for fixed small $\epsilon>0$, the map
\begin{equation*}
  \beta \rightarrow Y_\beta:= Lw_\beta-\alpha\psi' (w_\beta)
\end{equation*}
is $C^1$ in $\beta$ if $Y_\beta$ is restricted to any bounded interval $(\s_0,\s_1)\subset \RR\backslash[-2,2]$ and the target
space is endowed with the norm of $L^\infty((\s_0,\s_1))$.  Indeed, $w_\beta$ is obtained from $\beta$ by composition of $C^2$
maps (in fact $C^4$ maps) in $\beta$, $u_p$, $u_o$, $u'_p$, $u'_o$ and the change of variables
$x\rightarrow \widetilde x=2\pi \s/(\widetilde \sP(\beta)k_0)$.  Moreover,
\begin{multline*}
  \frac{d}{d\beta} (w_\beta(x))
  =  \partial_1H_1(\cdot,\cdot)\left(Bu_o(\ts)
    +\left\{u_p'(\ts)+B\beta u'_o(\ts)\right\}
    \ts(-\widetilde \sP'(\beta)/\widetilde \sP(\beta))\right)
  \\+\partial_2H_1(\cdot,\cdot)\Big(Bu'_o(\ts)
  +\{u''_p(\ts)+B\beta u''_o(\ts)\}
  \ts(-\widetilde \sP'(\beta)/\widetilde \sP(\beta))\Big).
\end{multline*}
In the less regular term $L\frac d {d\beta}w_\beta$, $u_p''(\ts)$ is differentiated twice with respect to $\s$, and thus $u_p''''$
arises.  As $u_p$ is in $C^4(\RR\setminus[-1.5,1.5])$, the $C^1$ regularity in $\beta$ follows (for small enough $\epsilon>0$).

Step 2. Hence it remains to check that $\beta \rightarrow Y_\beta$ is $C^1$ if the target space is endowed with the norm of
$E^\nu_0(\RR\backslash[-2,2], \RR)$. Due to the previous argument, the claim is proved if $Y_\beta$ and $\frac d{d\beta} Y_\beta$
are bounded in $E^{\widetilde \nu}_0(\RR\backslash[-2,2], \RR)$, uniformly in $\beta$, where $\widetilde \nu<\nu<0$. We observe
that the estimate on $Y_\beta$ in~\eqref{eq: exp dec} is uniform in $\beta$, so it remains to analyse the derivative with respect
to $\beta$ in the final step; note that ~\eqref{eq: exp dec} and Step 3 establish these two properties for $\nu$ (taking $|\nu|$
small enough in Step 3 below), so the claim follows for slightly smaller $|\nu|$.

Step 3. We recall that $\widetilde \sP'(\beta)\rightarrow 0$ uniformly in $\beta$ as $\epsilon\rightarrow 0$ (see
Theorem~\ref{thm:def of H1}). We thus obtain in analogy to~\eqref{eq: the diff} and~\eqref{eq: exp dec} that
$\frac d {d\beta}\left(w_\beta-w_{\beta,\infty}^\pm\right)$ and $\frac{d}{d\beta}\left(Lw_\beta-\alpha\psi'(w_\beta)\right)$ are
exponentially decaying as $\s\rightarrow \pm\infty$.  Here we use that $H_1$ is of class $C^4$.

This shows that, after decreasing $|\nu|$ if necessary,
\begin{equation*}
  \frac{d}{d\beta}\left(L w_\beta-\alpha \psi'(w_{\beta})\right)\in E^\nu_0(\RR\setminus[-2,2], \RR)
\end{equation*}
with uniform bounds in $\beta\in[-1,1]$ and small $\epsilon>0$.

We move on to claim~\ref{it:behaviour2}, the transversality relation.  Remember that $H_1(\z,\xx)$ tends to $\z$,
$\partial_1 H_1(\z,\xx)$ tends to $1$ and $\partial_2 H_1(\z,\xx)$ tends to $0$ as $\epsilon\rightarrow 0$ by Theorem~\ref{thm:def
  of H1}. Using these properties and again the fact that $\widetilde \sP'(\beta)\rightarrow 0$ uniformly in $\beta$ as
$\epsilon\rightarrow 0$, we obtain
\begin{align*}
  \frac{d}{d\beta}(L w_\beta&-\alpha \psi'(w_\beta))=(L-\alpha\psi''(w_\beta)I)
  \circ 
  \\ \Big[&
  \partial_1H_1(\cdot,\cdot)\left(Bu_o(\ts)
  +\left\{u_p'(\ts)+B\beta u'_o(\ts)\right\}
  \ts(-\widetilde \sP'(\beta)/\widetilde \sP(\beta))\right)
  \\&+\partial_2H_1(\cdot,\cdot)\Big(Bu'_o(\ts)
  +\{u''_p(\ts)+B\beta u''_o(\ts)\}
  \ts(-\widetilde \sP'(\beta)/\widetilde \sP(\beta))\Big)
  \Big]
\end{align*}
converges to $BLu_o$ in $L^\infty_{\loc}(\RR\backslash[-2,2])$ uniformly in $\beta\in[-1,1]$ as $\epsilon\rightarrow 0$.  As a
consequence, the map $\beta\rightarrow \int_{|\s|> 2} \Big(Lw_\beta-\alpha \psi'(w_\beta)\Big)\sin(\k_0\s)d\s$ is $C^1$ and
\begin{equation*}
   \frac{d}{d\beta}\int_{|\s|> 2}\Big(Lw_\beta-\alpha     \psi'(w_\beta)\Big)\sin(\k_0\s)d\s
\end{equation*}
converges to $\int_{|\s|>2} L(Bu_o)\sin(\k_0\s)d\s$ as $\epsilon\rightarrow 0$, uniformly in $\beta\in[-1,1]$.

To deal with the values of $\s$ in $[-2,2]$, we write the integral on $[-2,2]$ differently in claim~\ref{it:behaviour2}:
\begin{align*}
  &\int_{-2}^2\Big(Lw_\beta-\alpha  \psi'(w_\beta)\Big)\sin(\k_0\s)d\s
  \\&
      =c^2w_\beta'(2)\sin(\k_02)-c^2w_\beta'(-2)\sin(-\k_02)
      -\int_{-2}^2c^2k_0w_\beta'\cos(\k_0\s)d\s
  \\&\qquad
      {}+\int_{-2}^2\Big(-\Delta_D w_\beta+\alpha w_\beta-\alpha  \psi'(w_\beta)\Big)\sin(\k_0\s)d\s.
\end{align*}
By symmetry, we can consider $\s\in[0,2]$ only, so the previous expression equals
\begin{multline*}
  2c^2w_\beta'(2)\sin(\k_02)
    -2\int_{0}^2c^2k_0w_\beta'\cos(\k_0\s)d\s
  \\{}+2\int_{0}^2\Big(-\Delta_D w_\beta+\alpha w_\beta-\alpha  \psi'(w_\beta)\Big)\sin(\k_0\s)d\s, 
\end{multline*}
which clearly is $C^1$ in $\beta\in[-1,1]$; its derivative is
\begin{multline*}
  2c^2\frac d {d\beta}(w_\beta'(2))\sin(\k_02)
    -2\int_{0}^2c^2k_0\frac d {d\beta}(w_\beta')\cos(\k_0\s)d\s
  \\+2\int_{0}^2\Big(-\Delta_D\frac d {d\beta} (w_\beta)+\alpha \frac d {d\beta}(w_\beta)-\alpha\psi''(w_\beta)  
      \frac d {d\beta}(w_\beta)\Big)\sin(\k_0\s)d\s
\end{multline*}
with
\begin{multline*}
  \frac d {d\beta}\left( w_\beta(x) \right)
    = \partial_1H_1(\cdot,\cdot)\left(Bu_o(\ts)
    +\left\{u_p'(\ts)+B\beta u'_o(\ts)\right\}
    \ts(-\widetilde \sP'(\beta)/\widetilde \sP(\beta))\right)
  \\+\partial_2H_1(\cdot,\cdot)\Big(Bu'_o(\ts)
      +\{u''_p(\ts)+B\beta u''_o(\ts)\}
      \ts(-\widetilde \sP'(\beta)/\widetilde \sP(\beta))\Big),
\end{multline*}
$x\in(0,2]$ and $\widetilde \s=2\pi \s/(\widetilde \sP(\beta)k_0)$. From this, we get a formula for
$\frac d {d\beta}(w_\beta'(x))$ for $x\in(0,2]$ and $\beta\in[-1,1]$.  As already observed, $H_1(\z,\xx)$ tends to $\z$ (as
$\epsilon\rightarrow 0$), $\partial_1 H_1(\z,\xx)$ tends to $1$, $\partial_2 H_1(\z,\xx)$ tends to $0$ and
$\widetilde \sP'(\beta)$ tends to $0$ uniformly in $\beta$. Remember the hypothesis~\eqref{eq:psi2prime},
$|\psi''_\epsilon(u)|\leq 2 \epsilon^{-1}$ for all $|u|<\epsilon$, which leads to $|\psi''_\epsilon(u)|< \Const \epsilon^{-1}$ for
all $u\in \RR$ and for some constant $\Const>0$.  We thus get
\begin{equation*}
  \int_{0}^{C_0\epsilon}
  \left|\alpha\psi''(w_\beta(\s))\sin(\k_0\s)\right|d\s
  \leq \Const\alpha \epsilon^{-1} \k_0\int_{0}^{C_0\epsilon}
  |\s|d\s\rightarrow 0
\end{equation*} 
and by~\eqref{eq: cdn on psi}
\begin{equation*}
  \int_{C_0\epsilon}^2
  \left|\alpha\psi''(w_\beta(\s))\sin(\k_0\s)\right|d\s
  \leq \Const\alpha \epsilon \int_{C_0\epsilon}^2d\s
  \rightarrow 0
\end{equation*} 
as $\epsilon\to 0$, uniformly in $\beta\in[-1,1]$. Hence, as $\epsilon\to 0$,
\begin{multline*}
  \frac d {d\beta}\Big(c^2w_\beta'(2)\sin(\k_02)-c^2w_\beta'(-2)\sin(-\k_02)
    -\int_{-2}^2c^2k_0w_\beta'\cos(\k_0\s)d\s
  \\ +\int_{-2}^2(-\Delta_D w_\beta+\alpha w_\beta-\alpha  \psi'(w_\beta))\sin(\k_0\s)d\s\Big)
\end{multline*}
converges to
\begin{multline*}
  c^2Bu_o'(2)\sin(\k_02)-c^2Bu_o'(-2)\sin(-\k_02)
    -\int_{-2}^2c^2k_0Bu_o'\cos(\k_0\s)d\s
  \\
      +\int_{-2}^2\Big(-\Delta_D Bu_o+\alpha Bu_o\Big)\sin(\k_0\s)d\s
      =\int_{-2}^2  L(Bu_o)\sin(\k_0\s)d\s,
\end{multline*}
uniformly in $\beta\in[-1,1]$. 

Finally, note that Proposition~\ref{prop: orthogonality} yields
\begin{equation*}
  \int_\RR  BLu_o(\s) \sin(\k_0\s) d\s=B(-2c\k_0^2+2)\neq 0 .
\end{equation*}
Reducing $B>0$ if needed, we have proved~\eqref{eq: K_0} and the remark on $K_0$ that follows it.

It remains to show claim~\ref{it:behaviour3}, that is,
\begin{equation*}
  \lim_{(\beta,\epsilon)\rightarrow 0}\sup\left\{
    e^{|\nu \s|}\left|Lw_{\beta}(\s)-\alpha\psi'(w_\beta(\s))\right|:
    \s\in \RR, |w_{\beta}(\s)|\geq \epsilon\right\}=0.
\end{equation*}
In~\eqref{eq: exp dec}, we have
\begin{equation*}
  \left|\chi_{\{|w_\beta|\geq \epsilon\}}
    \psi''(w_{\beta,\infty}^{\pm}
    +\si\{w_\beta-w_{\beta,\infty}^{\pm}\})\right|\leq \Const\epsilon
\end{equation*}
and thus
\begin{multline*}
  \left\|e^{|\nu\s|}\chi_{\{|w_\beta|\geq \epsilon\}}\cdot(w_{\beta}-w_{\beta,\infty}^{\pm})
    \int_0^1\psi''\left(w_{\beta,\infty}^{\pm}
    +\si\{w_\beta-w_{\beta,\infty}^{\pm}\}\right)d\si
  \right\|_{L^\infty(\RR)}
  \\\leq \Const \epsilon\rightarrow 0
\end{multline*}
as $\epsilon\rightarrow 0$, by taking a smaller $|\nu|$ if needed (see~\eqref{eq: the diff}).  Moreover, there exists some
$\widetilde \nu<0$ provided $|\nu|$ is taken small enough, so that for $\pm$ in~\eqref{eq: exp dec} replaced by $+$ (respectively
$-$),
\begin{multline*}
  e^{|\nu \s|}L(w_\beta-w_{\beta,\infty}^{\pm})
  =e^{|\nu \s|}L\Big(
  H_1\Big(u_p(\ts)+ B\beta u_o(\ts),
  u'_{p}(\ts)+ B\beta u_o'(\ts)\Big)
  \\-H_1\Big(u_{p,\infty}^{\pm}(\ts)+ B\beta u_{o,\infty}^{\pm}(\ts),
  {u_{p,\infty}^{\pm}}'(\ts)+ B\beta {u_{o,\infty}^{\pm}}'
  (\ts)\Big)\Big)
\end{multline*}
has its absolute value bounded from above by $\Const e^{-\widetilde \nu |\s|} $ on $(0,\infty)$ (respectively $(-\infty,0)$) and
converges uniformly on every bounded subset of $(0,\infty)$ (respectively $(-\infty,0)$) to
\begin{equation*}
  e^{|\nu\s|}L(u_p-u_{p,\infty}^{\pm})=
  e^{|\nu\s|}(L u_p-(\pm\alpha))=0
\end{equation*}
as $(\beta,\epsilon)\rightarrow 0$; see~\eqref{eq:kz} and~\eqref{eq:jodel}.  Claim~\ref{it:behaviour3} follows. \qed

Motivated by the spaces $E^\nu_\m(X)$, we define the solution spaces for the ``corrector'' $r$ used in Step 4 of the argument, as
outlined at the end of Section~\ref{sec:Setting-main-result}. For $\nu<0$, let
\begin{equation}
  \label{eq:E0}
  E^\nu_{0,odd}(\RR):=\{r\in C(\RR):\text{ $r$ is odd and } e^{|\nu \s|}r(\s)\in
  L^{\infty}(\RR)\}
\end{equation} 
and
\begin{equation}
 \label{eq:E1}
  E^\nu_{1,odd}(\RR):=\{r\in E^\nu_{0,odd}(\RR)\cap C^{1}(\RR):
  e^{|\nu \s|}r'(\s)\in L^{\infty}(\RR)\}.
\end{equation} 

\begin{lemma}
  \label{lemma about B(0,rho)}
  If $B,\rho,\epsilon>0$ are chosen small enough, then for all $r$ in the ball $\overline{B(0,\rho)}\subset E^\nu_{1,odd}(\RR)$
  \begin{multline}
    \label{eq: lemma part 1}
    \sup_{\beta\in[-1,1]}\int_{\RR}\left|\alpha
      \left(\psi''(w_\beta)-\psi''(w_\beta-r)\right)
      \frac d {d\beta}w_\beta\sin(\k_0\s)\right|d\s
    \\
    \leq\frac 1 2   \inf_{\beta\in[-1,1]}
    \left|  \frac{d}{d\beta}\int_\RR(
       (L w_\beta
      -\alpha\psi'(w_\beta))
      \sin(\k_0\s) d\s\right|
    :=\frac 1 2 \kk_0>0
  \end{multline}
  and 
  \begin{multline}
    \label{eq: lemma part 2}
    \int_{\RR}\left|(c^2w_0''-\Delta_D w_0+\alpha w_0
      -\alpha \psi'(w_0-r))\sin(\k_0\s)\right|d\s
    \\
    \leq\frac 1 2  \inf_{\beta\in[-1,1]}
    \left |\int_{\RR}\frac{d}{d\beta}
      (L w_\beta
      -\alpha\psi'(w_\beta))\sin(\k_0\s) d\s\right|=\frac 1 2 \kk_0.
  \end{multline}
\end{lemma}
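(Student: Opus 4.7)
Both inequalities follow the same template: show that the left-hand side can be made arbitrarily small as $B, \rho, \epsilon \to 0^+$, while the right-hand side equals $\frac{1}{2}\kk_0$, a positive constant bounded below uniformly in small $\epsilon$ by the remark following Proposition~\ref{prop: behaviour}. Consequently, it suffices to show that each LHS vanishes with $B, \rho, \epsilon$, and then to choose the parameters small enough in an appropriate order.

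The key difficulty is the asymmetric behaviour of $\psi_\epsilon''$: outside $(-\epsilon,\epsilon)$ one has $|\psi''|,|\psi'''|\leq C\epsilon$ by~\eqref{eq: cdn on psi}, while inside only the weaker bound $|\psi''|\leq 2\epsilon^{-1}$ from~\eqref{eq:psi2prime} is available. To handle~\eqref{eq: lemma part 1}, I would split $\RR$ into the ``good'' set, where both $|w_\beta|$ and $|w_\beta - r|$ are $\geq\epsilon$, and its complement. On the good set, the mean value theorem and $|\psi'''|\leq C\epsilon$ yield $|\psi''(w_\beta)-\psi''(w_\beta-r)|\leq C\epsilon|r|$; combined with the uniform bound on $\frac{d}{d\beta}w_\beta$ from Theorem~\ref{thm:def of H1} and the $L^1$ estimate $\int_\RR |r|\,d\s \leq 2\rho/|\nu|$ coming from $r\in\overline{B(0,\rho)}\subset E^\nu_{1,odd}(\RR)$, this contributes $O(\epsilon\rho)$ uniformly in $\beta$. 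On the complement, the subset $\{|w_\beta|<\epsilon\}$ sits in a neighbourhood of $\s=0$ of length $O(\epsilon)$ because $w_\beta'(0)>0$ is bounded below by claim~\ref{it:H1-sign} of Theorem~\ref{thm:def of H1}; there $|\sin(\k_0\s)|\leq \k_0|\s|=O(\epsilon)$, so the cruder bound $|\psi''|\leq 2\epsilon^{-1}$ still produces an $O(\epsilon)$ contribution. The residual set where $|w_\beta - r|<\epsilon$ while $|w_\beta|\geq \epsilon$ forces $|r(\s)|\geq |w_\beta(\s)|-\epsilon$, hence has vanishing measure as $\rho\to 0$ thanks to the exponential decay $|r(\s)|\leq \rho e^{-|\nu\s|}$.

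For~\eqref{eq: lemma part 2}, decompose
\begin{equation*}
  L w_0 - \alpha\psi'(w_0-r) = \bigl[L w_0 - \alpha\psi'(w_0)\bigr] + \alpha\bigl[\psi'(w_0)-\psi'(w_0-r)\bigr].
\end{equation*}
The first bracket is exactly the object controlled by claim~\ref{it:behaviour3} of Proposition~\ref{prop: behaviour}: on the set $\{|w_0|\geq\epsilon\}$ it tends to $0$ in the weighted sup-norm $\|e^{|\nu\cdot|}\,\cdot\,\|_{L^\infty}$ as $(\beta,\epsilon)\to 0$, which upon integration against $\sin(\k_0\s)$ yields an arbitrarily small contribution; on $\{|w_0|<\epsilon\}$ a direct pointwise bound combined with $|\sin(\k_0\s)|=O(\epsilon)$ absorbs the contribution. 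The second bracket is handled exactly as in the first inequality, with the Lipschitz estimate $|\psi'(w_0)-\psi'(w_0-r)|\leq C\epsilon|r|$ outside $(-\epsilon,\epsilon)$, and the same ``small-region-with-small-sine'' argument inside.

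The main obstacle is the bookkeeping: one must balance the singular factor $\epsilon^{-1}$ from $\psi''$ against the vanishing factor $|\sin(\k_0\s)|=O(\epsilon)$ near $\s=0$, and simultaneously control the excursions of $w_\beta - r$ (resp.\ $w_0-r$) into $(-\epsilon,\epsilon)$ caused by the corrector $r$. This is resolved by choosing the constants hierarchically: first $|\nu|$ small enough to be compatible with Proposition~\ref{prop: behaviour}, then $B$ small enough to fix the lower bound $\kk_0$, and finally $\rho$ small relative to $\epsilon$ so that perturbations by $r$ do not create new ``bad'' regions of appreciable measure.
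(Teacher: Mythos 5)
Your proposal follows essentially the same route as the paper: show that each left-hand side is $O(\epsilon)$ while the right-hand side $\tfrac12\kk_0$ is bounded below uniformly in small $\epsilon$ (via the remark after Proposition~\ref{prop: behaviour}), and handle the $\psi''$ dichotomy by splitting $\RR$ according to whether the arguments of $\psi''$ (resp.\ $\psi'$) lie in $(-\epsilon,\epsilon)$ or not. The paper's proof uses exactly this decomposition (its sets $I_1,I_2,I_3$ are your ``bad'' and ``good'' regions), the mean value theorem with $|\psi'''|\leq C\epsilon$ on the good set, the crude bound $|\psi''|\leq C\epsilon^{-1}$ balanced against $|\sin(\k_0\s)|\leq\k_0|\s|$ on the bad set, and the decomposition $Lw_0-\alpha\psi'(w_0-r)=[Lw_0-\alpha\psi'(w_0)]+\alpha[\psi'(w_0)-\psi'(w_0-r)]$ for the second inequality. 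So the substance is right.

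Two small inaccuracies in the bookkeeping, neither fatal. First, the mechanism controlling the ``residual set'' $\{|w_\beta-r|<\epsilon\leq |w_\beta|\}$ is not really the exponential decay $|r(\s)|\leq\rho e^{-|\nu\s|}$; what one actually uses is that $r$ is odd and $\|r'\|_{L^\infty}\leq\rho$, hence $|r(\s)|\leq\rho|\s|$ near $0$, while $|w_\beta(\s)|\gtrsim|\s|$ there (from $w_\beta'(0)>0$ bounded below). This forces the residual set into an interval $|\s|\lesssim\epsilon$, so its measure is $O(\epsilon\rho)$, not merely ``vanishing as $\rho\to 0$''; this is precisely what the paper encodes by fixing $C_0$ large and $\rho$ small (independently of $\epsilon$) so that $|w_\beta-r|\geq\epsilon$ off $(-C_0\epsilon,C_0\epsilon)$, making the whole bad region uniformly $O(\epsilon)$ in size. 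Second, the hierarchy ``$\rho$ small relative to $\epsilon$'' is stronger than needed and would slightly complicate the downstream fixed-point argument on a fixed ball $\overline{B(0,\rho)}$; the paper's choice of $\rho$ and $C_0$ independent of $\epsilon$, followed by shrinking $\epsilon$, is the cleaner order and is what the eventual Schauder argument expects. With those adjustments your sketch matches the paper's proof.
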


\proof Choose $B,\rho>0$ small enough and $C_0>0$ large enough (in a way that is independent of small $\epsilon>0$) so that
$|w_{\beta}-r|\geq \epsilon$ on $\RR\backslash(-C_0\epsilon,C_0\epsilon)$ for all $\beta\in[-1,1]$, $r\in \overline{B(0,\rho)}$
and all small $\epsilon$. We set $I_1 := \{\s : |w_\beta(\s) |< \epsilon\}\subset (-C_0\epsilon,C_0\epsilon)$,
$I_2 := \{\s : |w_\beta(\s) -r|< \epsilon\}\subset (-C_0\epsilon,C_0\epsilon)$ and $I_3 := \RR \setminus (I_1 \cup I_2)$. If
$B,\rho,\epsilon>0$ are chosen small enough, for all $|\beta|\leq 1$ we get from~\eqref{eq:psi2prime} for $j = 1, 2$
\begin{multline*}
  \int_{I_j}
  \left|\psi''(w_\beta)-\psi''(w_\beta-r)\right|
  \,\left|\frac d{d\beta}w_\beta\right|\,
  |\sin(\k_0\s)|d\s
  \\\leq
  \int_{I_j} \Const\epsilon^{-1}
  \,\left|\frac d{d\beta}w_\beta\right|\k_0|\s| d\s
  \stackrel{\eqref{eq: w_beta}}\leq \Const \epsilon,
\end{multline*}
and for their complement from~\eqref{eq: cdn on psi}
\begin{align*}
  &
    \int_{I_3}
    \left|\psi''(w_\beta)-\psi''(w_\beta-r)\right|
    \,\left|\frac d{d\beta}w_\beta\right|\,|\sin(\k_0\s)|d\s
  \\&=
      \int_{I_3} 
      \left|\int_0^1\psi'''(w_\beta -\sigma r)rd\sigma\right|
      \,\left|\frac d{d\beta}w_\beta\right|\,|\sin(\k_0\s)|d\s
  \\&\leq
      \int_{I_3} \Const\epsilon|r|
      \,\left|\frac d{d\beta}w_\beta\right|\,|\sin(\k_0\s)|d\s
      \leq \Const \epsilon,
\end{align*}
and thus combined
\begin{equation*}
  \int_{\RR}\left|\alpha\left(\psi''(w_\beta)-\psi''(w_\beta-r)
    \right)\frac d {d\beta}w_\beta\sin(\k_0\s)\right|d\s
  \leq \Const \epsilon.
\end{equation*}
As $K_0>0$ uniformly for small $\epsilon>0$, by~\eqref{eq: K_0} and the remark on $K_0$ following it, this proves~\eqref{eq: lemma
  part 1}.

Also, $Lw_{0}-\alpha \psi'(w_0)\in E^\nu_0(\RR\backslash[-1,1],\RR)$ with uniform bounds for small enough $\epsilon$ (by taking a
smaller $|\nu|$ if necessary, see the comments after~\eqref{eq: the diff} and~\eqref{eq: exp dec}). Further,
$Lw_{0}-\alpha \psi'(w_0)$ is bounded in $L^\infty((-1,1))$, uniformly for small $\epsilon$.  Moreover, $w_{0}$ converges to $u_p$
in $W^{2,\infty}_{\loc}(\RR)$ as $\epsilon$ tends to $0$ (see~\eqref{eq: w_beta}) and $\psi'(w_{0}(\s))$ converges to
$\text{sgn}(\s)$ for $x\in\RR\setminus\{0\}$ as $\epsilon$ tends to $0$.  Thus
\begin{equation}
  \label{eq: tends to 0}
  \int_{\RR}\left|(L w_0 
    -\alpha \psi'(w_0))
    \sin(\k_0\s)\right|d\s \to 0 \text{ as } \epsilon \to 0
\end{equation}
since $Lu_p(\s)- \alpha\text{sgn}(\s)=0$; we recall that $u_p$ is the solution for the special case of a piecewise quadratic
potential~\cite{Kreiner2011a}.  Moreover,
\begin{align*}
  &\int_{\RR}|\psi'(w_0)-\psi'(w_0-r)|\,|\sin(\k_0\s)|d\s
  \\&=
      \int_{\RR}\left|\int_0^1\psi''(w_0-\si r)rd\si\right|\,|\sin(\k_0\s)|d\s
  \\&\leq
      \int_{\RR}\left(\Const\epsilon^{-1}\chi_{\{|x|<C_0\epsilon\}}
      +\Const\epsilon\chi_{\{|x|\geq C_0\epsilon\}}\right)|r|\,|\sin(\k_0\s)|d\s
  \\&\leq
      \Const\int_{\RR}\epsilon^{-1}\chi_{\{|x|<C_0\epsilon\}}|r|\,\k_0|\s|d\s
      +\Const\int_{\RR}\epsilon\chi_{\{|x|\geq C_0\epsilon\}}|r|\,|\sin(\k_0\s)|d\s
      \leq \Const \epsilon
\end{align*}
and, as a consequence (see~\eqref{eq: tends to 0}),
\begin{equation}
  \label{eq: h(0)}
  \int_{\RR}\left|(
    Lw_0 - \alpha \psi'(w_0-r))
    \sin(\k_0\s)\right|d\s\rightarrow 0
\end{equation}
uniformly in $r\in \overline{B(0,\rho)}$ as $\epsilon\rightarrow 0$.  As $\kk_0>0$, this proves~\eqref{eq: lemma part 2}.  \qed

\section{Existence of a heteroclinic connection}
\label{sec:Exist-heter-conn}

In this section, we employ a fixed point argument to prove the existence of a ``corrector'' $r$ required in Step 4 (introduced at
the end of Section~\ref{sec:Setting-main-result}). We recall the definition of the solution spaces $E_{0,odd}^{\nu}(\RR)$ and
$E_{1,odd}^{\nu}(\RR)$ in~\eqref{eq:E0} and~\eqref{eq:E1}. In addition, we introduce the following Banach space.  Let
$G^\nu_{0}(\RR)$ be the Banach space of functions $f\in L^{2}(\RR)$ such that
\begin{equation} 
\label{eq:def G}
  ||f||_{G^\nu_{0}(\RR)}:=
  ||e^{-\nu |\cdot|}f||_{L^2(\RR)}<\infty.
\end{equation}
 
Given $\nu<0$, we would like to find $r\in E_{1,odd}^{\nu}(\RR)\cap H^2(\RR)$ and $\beta\in [-1,1]$ such that $w_\beta-r$ is a
solution to equation~\eqref{eq:r},
\begin{equation*}
  c^2\Big(w_\beta-r\Big)''
  -\Delta_D \Big(w_\beta -r\Big)
  \\+\alpha\Big(w_\beta-r\Big)
  -\alpha\psi'\Big(w_\beta-r\Big) 
  =0.
\end{equation*}
We shall apply Proposition~\ref{prop: on the exponential decay}, the remark following it and Proposition~\ref{prop: on the
  exponential decay bis}.  They address the solution the equation $Lr=Q$ for $r\in E_{1}^{\nu}(\RR)\cap H^2(\RR)$, where $Q$ is in
various spaces of decaying functions and satisfies
\begin{equation*}
  \int_\RR Q(\s)\sin(\k_0 \s)d\s=\int_\RR Q(\s)\cos(\k_0 \s)d\s=0.
\end{equation*}
In Proposition~\ref{prop: on the exponential decay}, $Q$ belongs to $E_0^{\nu}(\RR)$ (that is, $Q$ is continuous and the function
$e^{|\nu \cdot|}Q$ is bounded); in the remark, $Q$ belongs to $F_0^{\nu}(\RR)$ (that is, $Q$ and the function $e^{|\nu \cdot|}Q$
are in $L^\infty(\RR)$) and in Proposition~\ref{prop: on the exponential decay bis}, $Q$ belongs to $G_0^{\nu}(\RR)$.  Since in
the present section $Q$ is odd, only the condition $\int_\RR Q(\s)\sin(\k_0 \s)d\s$ has to be dealt with and
$r\in E_{1,odd}^{\nu}(\RR)\cap H^2(\RR)$ with $\nu<0$ and $|\nu|$ sufficiently small.

\begin{lemma}
  \label{lemma: compact}
  The map
  \begin{equation}
    \label{eq:Gamma}
    (r,\beta)\rightarrow \Gamma(r,\beta) :=
    c^2w_\beta''-\Delta_D w_\beta+\alpha w_\beta
    -\alpha \psi'\big(w_\beta-r\big)-\alpha \psi''(w_\beta)r
  \end{equation}
  is well defined as a map $E^\nu_{1,odd}(\RR)\times [-1,1] \to \XX$ and is continuous.  Moreover, $\Gamma$ is compact (that is,
  every bounded sequence in the domain of $\Gamma$ is mapped by $\Gamma$ into a sequence that has a convergent subsequence).  The
  map $(r,\beta)\rightarrow \int_\RR\Gamma(r,\beta)\sin(k_0x)dx$ is $C^1$ on $E^\nu_{1,odd}(\RR)\times [-1,1]$.
\end{lemma}

\proof The map $\beta \to \Gamma (0, \beta) = c^2w_\beta''-\Delta_D w_\beta+\alpha w_\beta -\alpha \psi'\big(w_\beta)$ is
well-defined and continuous by claim~\ref{it:behaviour1} of Proposition~\ref{prop: behaviour}. To prove the lemma, we first
investigate the difference in the nonlinear terms of the last expression and the one in~\eqref{eq:Gamma}. By the fundamental
theorem of calculus and Fubini's theorem, we obtain that
\begin{align*}
  &\psi'(w_\beta)- \psi'\big(w_\beta-r\big)-\psi''(w_\beta)r
  =\int_0^1\left(\psi''(w_\beta-\si r)-\psi''(w_\beta)\right)rd\si
  \\&=-\int_0^1\left(\int_0^\si \psi'''(w_\beta-\tilde \si r)r^2d\tilde \si\right) d\si
 =-\int_0^1 \left( \int_{\tilde\si}^1 \psi'''(w_\beta-\tilde \si r)r^2d\si\right) d\tilde\si 
  \\&=-\int_0^1 (1-\tilde \si)\psi'''(w_\beta-\tilde \si r)r^2d\tilde \si.
\end{align*}
Setting $e^{|\nu \s|}r(\s)=:\widetilde r(\s)$, we thus have to show that the map
\begin{equation*}
  W^{1,\infty}(\RR)\times[-1,1]
  \ni(\widetilde r,\beta)\rightarrow  -e^{-|\nu \s|}
  \int_0^1 (1-\sigma)\psi'''(w_\beta-\sigma e^{-|\nu \s|} \widetilde r)
  \widetilde r^2d\sigma
\end{equation*}
with range included in $L^\infty(\RR)$ is $C^1$ and compact. The first two properties are immediate (thanks to the weight
$e^{-|\nu\cdot|}$ in front of the integral), and compactness is a consequence of the Arzel\`a-Ascoli theorem (using the uniform
continuity of $\psi'''$ on compact sets and the weight $e^{-|\nu \cdot|}$ in front of the integral). To conclude, we recall that
the map $\beta \to \int_\RR\Gamma (0, \beta)\sin(k_0x)dx$ is $C^1$ by claim~\ref{it:behaviour2} of Proposition~\ref{prop:
  behaviour}.  \qed

\begin{proposition}
  \label{prop: def beta}
  Let $\rho>0$ be small enough (see Lemma~\ref{lemma about B(0,rho)}).  
  Let
 \begin{equation*}
    h(r,\beta):=
    \int_\RR\left(c^2w_\beta''-\Delta_D w_\beta+\alpha w_\beta
    -\alpha\psi'\big(w_\beta-r) \right)\sin(\k_0 \s)d\s=0
  \end{equation*}  
  be a map $E^\nu_{1,odd}(\RR)\times [-1,1] \to \RR$.  For fixed $r$ in $\overline{B(0,\rho)}$, the equation
  \begin{equation*}
    h(r,\beta)=0
  \end{equation*}
  has a unique solution $\beta=\beta(r)$, which is a $C^1$-function of $r$.  Moreover, $\beta(r)$ tends to $0$ uniformly in
  $r\in \overline{B(0,\rho)}$ as $\epsilon\rightarrow 0$.
\end{proposition}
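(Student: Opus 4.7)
The plan is to find the zero of
$$F(r,\beta):=\int_\RR\bigl(Lw_\beta-\alpha\psi'(w_\beta-r)\bigr)\sin(\k_0\s)\,d\s$$
in $\beta\in[-1,1]$ for each fixed $r\in\overline{B(0,\rho)}$, and to apply the implicit function theorem. First I would verify that $F\in C^1\bigl(\overline{B(0,\rho)}\times[-1,1],\RR\bigr)$: the integrand lies in $\XX$ by Proposition~\ref{prop: behaviour}(\ref{it:behaviour1}), and pairing against $\sin(\k_0\s)$ with the exponential weight $e^{\nu|\s|}\in L^1(\RR)$ makes $F$ finite. The $C^1$ dependence on $\beta$ is also from that claim, while the $C^1$ dependence on $r$ is supplied by Lemma~\ref{lemma: compact}.

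Next I would bound $\partial_\beta F$ away from zero uniformly. Adding and subtracting $\alpha\psi''(w_\beta)\,dw_\beta/d\beta$ inside the derivative gives
\begin{align*}
\partial_\beta F(r,\beta)
&= \int_\RR\tfrac{d}{d\beta}\bigl(Lw_\beta-\alpha\psi'(w_\beta)\bigr)\sin(\k_0\s)\,d\s \\
&\quad -\alpha\int_\RR\bigl(\psi''(w_\beta-r)-\psi''(w_\beta)\bigr)\tfrac{dw_\beta}{d\beta}\sin(\k_0\s)\,d\s.
\end{align*}
The first integral has absolute value at least $\kk_0$ by the transversality relation~(\ref{eq: K_0}); the second has absolute value at most $\tfrac{1}{2}\kk_0$ by~(\ref{eq: lemma part 1}). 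Hence $|\partial_\beta F(r,\beta)|\geq\tfrac{1}{2}\kk_0>0$ throughout $\overline{B(0,\rho)}\times[-1,1]$. Since $\beta\mapsto\partial_\beta F(0,\beta)$ is continuous and non-vanishing, it has constant sign on $[-1,1]$; the perturbation induced by $r\ne 0$ is strictly smaller in absolute value and therefore preserves that sign, so $\beta\mapsto F(r,\beta)$ is strictly monotone with the same constant slope sign.

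Existence, uniqueness and $C^1$ dependence then follow at once. By~(\ref{eq: lemma part 2}) one has $|F(r,0)|\leq\tfrac{1}{2}\kk_0$, while strict monotonicity with slope of absolute value at least $\tfrac{1}{2}\kk_0$ forces $F(r,+1)$ and $F(r,-1)$ to have opposite signs; the intermediate value theorem together with strict monotonicity yields a unique $\beta(r)\in(-1,1)$ with $F(r,\beta(r))=0$, and the implicit function theorem applied at each $(r,\beta(r))$ promotes $r\mapsto\beta(r)$ to a $C^1$-map. For the uniform limit, (\ref{eq: h(0)}) says $F(r,0)\to 0$ uniformly in $r\in\overline{B(0,\rho)}$ as $\epsilon\to 0$, and the mean value theorem combined with the lower bound on $|\partial_\beta F|$ yields $|\beta(r)|\leq 2|F(r,0)|/\kk_0\to 0$ uniformly.

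The main obstacle is the uniform lower bound on $\partial_\beta F$: it rests on the transversality relation~(\ref{eq: K_0})---a Melnikov-type non-degeneracy statement at $r=0$---together with Lemma~\ref{lemma about B(0,rho)}, whose proof manages the singular bound $|\psi''_\epsilon|\leq 2\epsilon^{-1}$ on $(-\epsilon,\epsilon)$ against the size of $r$. Matching both estimates to the same constant $\kk_0$ is what selects the admissible smallness of $B$, $\rho$, and $\epsilon$; once this is in hand, the remainder of the argument is a textbook application of the implicit function theorem.
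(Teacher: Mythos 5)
Your proof is correct and takes essentially the same approach as the paper: you work with the same functional $h(\beta)=F(r,\beta)$, derive the lower bound $|\partial_\beta F|\ge\tfrac12\kk_0$ by adding and subtracting $\alpha\psi''(w_\beta)\,dw_\beta/d\beta$ and invoking~\eqref{eq: K_0} and Lemma~\ref{lemma about B(0,rho)}, and then conclude via the intermediate value theorem, strict monotonicity, the implicit function theorem, and~\eqref{eq: h(0)}. If anything you make the constant-sign and uniqueness arguments more explicit than the paper's slightly terse contradiction argument; the only tiny imprecision is that the bound $|F(r,0)|\le\tfrac12\kk_0$ only forces $F(r,1)\cdot F(r,-1)\le 0$ rather than strictly opposite signs, but this still gives a unique root in $[-1,1]$ and becomes strict once $\epsilon$ is small enough by~\eqref{eq: h(0)}.
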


Observe that since $\Gamma(r,\beta)$ and $\alpha \psi''(w_\beta)r$ are integrable over $\RR$, so is the integrand in the previous definition of $h$.

\proof We shall check at the end of the proof that the map $h$ is of class $C^1$. Let $r$ in $\overline{B(0,\rho)}$. The
proposition is a consequence of the fact that, for all $\beta\in [-1,1]$,
\begin{align*}
  |\tfrac{\partial}{\partial\beta}h(r,\beta)|&=\left|\frac{\partial}{\partial\beta}
    \int_\RR\left(c^2w_\beta''-\Delta_D w_\beta+\alpha w_\beta
    -\alpha\psi'\big(w_\beta-r\big) \right)\sin(\k_0\s)d\s\right|
  \\&\stackrel{
    \eqref{eq: lemma part 1}}{\geq }
  \frac 1 2  \inf_{ \widetilde \beta\in[-1,1]}
  \left|\frac{d}{d\beta}\int_\RR\Big(c^2w_{\widetilde\beta}''
    -\Delta_Dw_{\widetilde\beta}+\alpha w_{\widetilde \beta}
    -\alpha\psi'(w_{\widetilde \beta})\Big)\sin(\k_0\s)d\s\right| \\
  &=\frac 1 2 \kk_0,
\end{align*}
which implies $\inf_{\beta\in[-1,1]}|\tfrac{\partial}{\partial\beta}h(r,\beta)| \stackrel{\eqref{eq: lemma part 2}}\geq |h(r,0)|$.
In turn, this implies $h(r,\beta)=0$ for some $\beta=\beta(r)\in[-1,1]$, as desired.  To this behalf we argue by contradiction and
assume for definiteness that
$\inf_{\widetilde \beta\in[-1,1]}\tfrac{\partial h}{\partial\beta}(r,\widetilde \beta)\geq h(r,0)> 0$, so we may set
\begin{equation*}
  b=b(r):=-\frac{h(r,0)}{\inf_{\widetilde \beta\in[-1,1]}\tfrac{\partial h}{\partial\beta}(r,\widetilde \beta) }\in[-1,0).
\end{equation*} 
Then by the intermediate value theorem there exists a $\widetilde \beta\in (b,0)$ such that
\begin{equation*}
  h(b)=h(b)-h(0)+h(0)=\frac{\partial h}{\partial \beta}(\widetilde \beta)(b-0)+h(0)
  \leq -h(0)+h(0)=0
\end{equation*}
(for simplicity, we have omitted the depencence on $r$).  Therefore $h(r,\beta)=0$ for some $\beta=\beta(r)\in [b(r),0)$.

More generally, $|\beta(r)|\leq |h(r,0)|/\inf_{\widetilde \beta\in[-1,1]} |\frac{\partial h}{\partial\beta}(r,\widetilde \beta)|$
because the infimum is positive (see above).  Hence $\beta(r)$ tends to $0$ uniformly in $r\in \overline{B(0,\rho)}$ as
$\epsilon\rightarrow 0$, see~\eqref{eq: h(0)}.  Further, considering the differentiability of $h(r,\beta)$ with respect to $r$,
the $C^1$-dependence of $\beta(r)$ on $r$ is a consequence of the implicit function theorem.

To check that the map $h$ is of class $C^1$, it is sufficient to check that the map
$(r,\beta)\rightarrow \int_\RR\psi''(w_\beta) \sin(\k_0 \s)r\,d\s:=\widetilde h(r,\beta)$ is of class $C^1$ (by Lemma~\ref{lemma:
  compact}).  For fixed $\beta\in[-1,1]$, $\widetilde h(r,\beta)$ is linear in $r$ and, as a linear functional, is bounded with
norm bounded above by $||e^{\nu|\cdot|}\psi''(w_\beta)\sin(\k_0\s)||_{L^1(\RR)}$:
\begin{equation*}
  |\widetilde h(r,\beta)|\leq ||e^{\nu|\cdot|}\psi''(w_\beta)\sin(\k_0\s)||_{L^1(\RR)}||r||_{E^\nu_{1,odd}(\RR)}~\text{ for 
    all }~r\in E^\nu_{1,odd}(\RR).
\end{equation*}
Hence, for fixed $(r,\beta)$, $\frac{\partial }{\partial r}\widetilde h(r,\beta): E^\nu_{1,odd}(\RR)\rightarrow \RR$ is the same
bounded linear functional
\begin{equation*}
  \rho\rightarrow \int_\RR\psi''(w_\beta) \sin(\k_0 \s)\rho\,d\s~\text{ for all }~\rho\in E^\nu_{1,odd}(\RR),
\end{equation*}
with $||\frac{\partial }{\partial r}\widetilde h(r,\beta)|| \leq ||e^{\nu|\cdot|}\psi''(w_\beta)\sin(\k_0\s)||_{L^1(\RR)}$.
Moreover, $\frac{\partial }{\partial r}\widetilde h(r,\beta)$ depends only on $\beta\in[-1,1]$ and is continuous, because the map
$\beta\rightarrow e^{\nu|\cdot|}\psi''(w_\beta)\sin(\k_0\s)\in L^1(\RR)$ is continuous.  Also, for any fixed
$r\in E^\nu_{1,odd}(\RR)$, $\widetilde h(r,\beta)$ is of class $C^1$ in $\beta\in[-1,1]$, with continuous partial derivative in
$(r,\beta)$.  Indeed, its partial derivative at $\beta\in[-1,1]$ is
\begin{equation*}
  \frac{\partial }{\partial \beta}\widetilde h(r,\beta)=\int_\RR\psi'''(w_\beta)
  \left(\frac{d}{d\beta}w_{\beta}\right)\sin(\k_0 \s)r\,d\s.
\end{equation*}
As the map $\beta\rightarrow e^{\nu|\cdot|}\psi'''(w_\beta) \left(\frac{d}{d\beta}w_{\beta}\right)\sin(\k_0\s)\in L^1(\RR)$ is
continuous too, $\frac{\partial }{\partial \beta}\widetilde h(r,\beta)$ is continuous with respect to $(r,\beta)$. The linear
factor $\frac{-2\pi\widetilde \sP'(\beta) \s}{\widetilde \sP^2(\beta)k_0}$ that appears in $\frac{d}{d\beta}w_{\beta}$ is
controlled by the factor $e^{\nu|x|}$ (as $\nu<0$).

\subsubsection*{Proof of Theorem~\protect{\ref{theo:main}}}
\label{sec:Proof-Theor}

The result of this section so far can be formulated as follows. The problem can be written as $c^2r''-\Delta_D r+\alpha r=Q$ with
$Q:=\Gamma(r,\beta(r))+\alpha\psi''(w_{\beta(r)})r\in\XX$ odd and $\int_{\RR} Q(\s)\sin(\k_0 \s)d\s=0$ by the definition of
$\beta(r)$ in Proposition~\ref{prop: def beta}.

Let $r=L^{-1}Q$ be given by Proposition~\ref{prop: on the exponential decay} (and Remark) in Appendix~\ref{sec:Tools-from-Fourier}
applied to $Q$ defined above, so that our problem can be rewritten as
\begin{equation}
  \label{eq:r2}
  r=L^{-1}Q=
  L^{-1}\Big(\Gamma(r,\beta(r))+\alpha\psi''(w_{\beta(r)})r\Big).
\end{equation}

For $\widetilde \beta\in[-1,1]$, let $\delta(r,\widetilde \beta)\in \RR$ be such that
\begin{equation*}
  \int_\RR\Big(\alpha\psi''(w_{\widetilde \beta})r
  -\delta(r,\widetilde \beta)Lu_o\Big)\sin(\k_0\s)d\s=0;
\end{equation*}
Proposition~\ref{prop: orthogonality} shows that
$\displaystyle \delta(r,\widetilde \beta)=\frac{ \int_\RR\alpha\psi''(w_{\widetilde \beta})r\sin(k_0\s)d\s}{-2c^2\k_0+2}$. Then
\begin{equation*}
  \int_\RR\Big(\Gamma(r,\widetilde \beta)
  +\delta(r,\widetilde \beta)Lu_o\Big)(\s)\sin(\k_0\s)d\s=0,
\end{equation*}
\begin{equation}
  \label{eq: estimate delta}
\delta(r,\widetilde \beta)=-\frac{ \int_\RR\Gamma(r,\widetilde \beta)\sin(k_0\s)d\s}{-2c^2\k_0+2}
\end{equation}
and our problem becomes
\begin{equation*}
	r=L^{-1}\Big(\Gamma(r,\beta(r))
	+\delta(r,\beta(r))Lu_o\Big)+
	L^{-1}\Big(\alpha\psi''(w_{\beta(r)})r
	-\delta(r,\beta(r))Lu_o\Big).
\end{equation*}

Choose $C_0>0$ large enough and $\rho>0$ small enough so that $|w_{\beta} -r|\geq \epsilon$ on
$\RR\backslash(-C_0\epsilon,C_0\epsilon)$ for all $r\in \overline{B(0,\rho)}$, $\beta\in[-1,1]$ and all small $\epsilon$. 

Define $\delta_1(r,\widetilde\beta)$ and $\delta_2(r,\widetilde\beta,R)$ for $R\geq 1$ by
\begin{equation*}
  \int_{\RR}\Big(\alpha\psi''(w_{\widetilde\beta})
  \chi_{\{|x|<C_0 \epsilon\}}r
  -\delta_1(r,\widetilde \beta)Lu_o\Big)(\s)\sin(\k_0\s)d\s=0
\end{equation*}
and
\begin{equation*}
  \int_{\RR}\Big(\alpha\psi''(w_{\widetilde \beta})
  \chi_{\{C_0 \epsilon\leq |x|<R\}}r
  -\delta_2(r,\widetilde \beta,R)Lu_o\Big)(\s)\sin(\k_0\s)d\s=0,
\end{equation*}
that is, by Proposition~\ref{prop: orthogonality},
\begin{equation*}
  \delta_1(r,\widetilde \beta)
  =\frac{\int_{\RR}\alpha\psi''(w_{\widetilde\beta})
    \chi_{\{|x|<C_0 \epsilon\}}\sin(\k_0\s)rd\s}{-2c^2\k_0+2}
\end{equation*}
and
\begin{equation*}
  \delta_2(r,\widetilde \beta,R)
  =\frac{\int_{\RR}\alpha\psi''(w_{\widetilde\beta})
  \chi_{\{C_0\epsilon\leq |x|<R\}}\sin(\k_0\s)rd\s}{-2c^2\k_0+2}\,.
\end{equation*}
Thanks to $|\sin(\k_0\s)|\leq |\k_0\s|$,
\begin{equation}
  \label{eq: estimate on r}
  |\psi''(w_{\widetilde\beta})|<\Const\epsilon^{-1}~~\text{and}~~
  |r(\s)|\leq \Const|\s|\cdot
  ||r'||_{L^\infty((-C_0\epsilon,C_0\epsilon))}
\end{equation}
on $(-C_0\epsilon,C_0\epsilon)$ (because $r(0)=0$), we get
$\delta_1(r,\widetilde\beta)=O\Big(\epsilon^2 ||r'||_{L^\infty((-C_0\epsilon,C_0\epsilon))}\Big)$.  Also
$|\psi''(w_{\widetilde\beta})|<\Const\,\epsilon$ on $\RR\backslash (-C_0\epsilon,C_0\epsilon)$ and
$\delta_2(r,\widetilde \beta,R)=O\Big(\epsilon||r||_{L^1(\RR)}\Big)$, uniformly in $\widetilde \beta\in[-1,1]$.  The maps
$\delta$, $\delta_1$ and $\delta_2$ are clearly linear in $r$ and, moreover, continuous because of the continuity of the map
\begin{equation*}
  \widetilde \beta\rightarrow \psi''(w_{\widetilde \beta}) \in L^\infty_{\loc}(\RR)
\end{equation*}
(this means that the map $\widetilde \beta \rightarrow \psi''(w_{\widetilde \beta})\chi_{\{|x|<R\}} \in L^\infty(\RR)$ is
continuous for every finite $R>0$).  Furthermore,
\begin{multline}
  \label{eq: on psi''}
  \|\alpha\psi''(w_{\widetilde\beta}
  \,+\sigma r)\chi_{\{C_0\epsilon\leq|\s|<R\}}r\|_{F^\nu_0(\RR)}
\\  \leq \|\alpha\psi''(w_{\widetilde\beta}
  \,+\sigma r)\chi_{\{C_0\epsilon\leq|\s|<\infty\}}r\|_{F^\nu_0(\RR)}
  \leq \Const\epsilon  \|r\|_{E^\nu_0(\RR)}
\end{multline}
for all $\sigma\in [-1,0]$, and
\begin{equation}
  \label{eq: on psi'' bis}
  \|\alpha\psi''(w_{\widetilde\beta})  \chi_{\{|\s|<C_0\epsilon\}}r\|_{G^\nu_{0}(\RR)}
  \leq \Const\epsilon^{1/2} ||r'||_{L^\infty((-C_0\epsilon,C_0\epsilon))}\,.
\end{equation}
See~\eqref{eq: estimate on r}.  By Proposition~\ref{prop: on the exponential decay},
\begin{equation*}
  \Big\|L^{-1}\Big(\alpha\psi''(w_{\widetilde\beta})
  \chi_{\{C_0\epsilon\leq|\s|<R\}}r
  -\delta_2(r,\widetilde \beta,R)Lu_o\Big)\Big\|
  _{E^\nu_{1,odd}(\RR)}
  \leq \Const\epsilon  \|r\|_{E^\nu_0(\RR)}
\end{equation*}
and, by Proposition~\ref{prop: on the exponential decay bis},
\begin{equation*}
  \Big\|L^{-1}\Big(\alpha\psi''(w_{\widetilde\beta})
  \chi_{\{|x|<C_0\epsilon\}}r-\delta_1(r,\widetilde\beta)Lu_o\Big)\Big\|
  _{E^\nu_{1,odd}(\RR)}\leq \epsilon^{1/2} O(||r'||
  _{L^\infty((-C_0\epsilon,C_0\epsilon))})
\end{equation*}
uniformly in $\widetilde\beta\in[-1,1]$. Hence the linear map
\begin{equation}
  \label{eq: linear map to be inverted}
  r\rightarrow r-
  L^{-1}\Big(\alpha\psi''(w_{\widetilde \beta})\chi(|\s|<R)r
  -(\delta_1(r,\widetilde\beta)+\delta_2(r,\widetilde \beta,R))Lu_o\Big)
\end{equation}
is invertible if $\epsilon$ is small enough.  Let us denote the inverse by
$\Xi_{\widetilde\beta}\colon E^\nu_{1,odd}(\RR) \rightarrow E^\nu_{1,odd}(\RR)$, which is continuous in $\widetilde \beta$ when
the operator norm is considered. To check the continuity of $\Xi_{\widetilde \beta}$ in $\widetilde \beta$, observe that the
linear operator
\begin{equation}
  \label{eq: linear operator}
  E^\nu_{1,odd}(\RR)\, \ni \,r\rightarrow 
  \alpha\psi''(w_{\widetilde \beta})\chi_{\{|\s|<R\}}r\, \in \, F^\nu_{0,odd}(\RR)
\end{equation}
is bounded, its operator norm being bounded above by $||\alpha\psi''(w_{\widetilde \beta})\chi_{\{|\s|<R\}}||_{L^{\infty}(\RR)}$.
As the map $\widetilde \beta\rightarrow \alpha\psi''(w_{\widetilde \beta})\chi_{\{|\s|<R\}}\in L^{\infty}(\RR)$ is continuous, so
is the linear operator~\eqref{eq: linear operator} with respect to the operator norm.  The linear functionals 
\begin{equation*}
  r\rightarrow \delta_1(r,\widetilde\beta)
  ~~\text{ and }~~r\rightarrow \delta_2(r,\widetilde \beta,R)
\end{equation*}
on $E^\nu_{1,odd}(\RR)$ are also continuous in $\widetilde\beta$ with respect to the usual norm for linear functionals (see the
end of the proof of Proposition \ref{prop: def beta} for a similar functional).  As a consequence, the linear operator
\begin{equation*}
  E^\nu_{1,odd}(\RR)\, \ni \,r\rightarrow 
  \alpha\psi''(w_{\widetilde \beta})\chi_{\{|\s|<R\}}r\
  -(\delta_1(r,\widetilde\beta)+\delta_2(r,\widetilde \beta,R))Lu_o
  \, \in \, F^\nu_{0,odd}(\RR)
\end{equation*}
is bounded and continuous in $\widetilde \beta$ with respect to the operator norm, and thus the bounded linear operator~\eqref{eq:
  linear map to be inverted} (defined on $E^\nu_{1,odd}(\RR)$) is continuous in $\widetilde \beta$ with respect to the operator
norm.  The continuity of $\Xi_{\widetilde \beta}$ with respect to $\widetilde \beta$ results from the fact that, in a Banach
space, the map that sends an invertible bounded operator with bounded inverse to its inverse is continuous with respect to the
operator norm.  In what follows, the continuity of $\Xi_{\widetilde \beta}$ is only needed for every fixed $R\geq 1$.

On the other hand, the map
\begin{equation*}
  r\rightarrow 
  L^{-1}\Big(\Gamma(r,\beta(r))+\delta(r,\beta(r))Lu_o\Big)
\end{equation*}
is completely continuous on $\overline{B(0,\rho)}$ (that is, continuous and compact); see Lemma~\ref{lemma: compact}.  Therefore
\begin{equation*}
  r\rightarrow \Xi_{\beta(r)}\Big(L^{-1}\Big(\Gamma(r,\beta(r))
  +\delta(r,\beta(r))Lu_o\Big)\Big)
\end{equation*} 
is completely continuous, too. For $\epsilon>0$ small enough, it sends $\overline{B(0,\rho)}$ into $B(0,\rho)$.  To see this, we
refer to the remark after Proposition~\ref{prop: on the exponential decay}, \eqref{eq: estimate delta} and use
\begin{align*}
  &\|\Gamma(r,\beta(r))\chi_{\{|\s|\geq C_0\epsilon\}}\|_{F^\nu_0(\RR)}
  \\ &\leq
       \|\left(L w_{\beta(r)}
       -\alpha \psi'\big(w_{\beta(r)}-r\big)\right)
       \chi_{\{|\s|\geq C_0\epsilon\}}\|_{F^\nu_0(\RR)}
  \\
  &{}\qquad+
    \|\alpha \psi''(w_{\beta(r)})r
    \chi_{\{|\s|\geq C_0\epsilon\}}\|_{F^\nu_0(\RR)}
  \\ &\stackrel{\eqref{eq: on psi''}} \leq
       \|\left(L w_{\beta(r)}
       -\alpha \psi'\big(w_{\beta(r)}-r\big)\right)
       \chi_{\{|\s|\geq C_0\epsilon\}}\|_{F^\nu_0(\RR)}
       +\Const\epsilon||r||_{E^\nu_{0,odd}(\RR)}
  \\ &\leq 
       \Const\epsilon||r||_{E^\nu_{0,odd}(\RR)}
       +
       \|\left(L w_{\beta(r)}
       -\alpha \psi'\big(w_{\beta(r)}\big)\right)
       \chi_{\{|\s|\geq C_0\epsilon\}}\|_{F^\nu_0(\RR)}
  \\ &\qquad{}+
       \left\|\alpha \int_{-1}^0 \psi''(w_{\beta(r)}+\si r)rd\si
       \chi_{\{|\s|\geq C_0\epsilon\}}\right\|_{F^\nu_0(\RR)}
       \stackrel{\eqref{eq: on psi''}}\rightarrow 0
\end{align*}
uniformly in $r\in \overline{B(0,\rho)}$ as $\epsilon$ tends to $0$, thanks to the third part of Proposition~\ref{prop: behaviour}
and the fact that $\beta(r)$ tends uniformly to $0$ as $\epsilon$ tends to $0$ (see Proposition~\ref{prop: def beta}).  We use
also Proposition~\ref{prop: on the exponential decay bis}, that
$\Gamma(r,\beta(r)) \chi_{\{|\s|<C_0\epsilon\}}\in F^\nu_0(\RR)\cap G^{\nu}_0(\RR)$ and
\begin{align*}
  &\left\|\Gamma(r,\beta(r)) \chi_{\{|\s|<C_0\epsilon\}}\right\|_{G^\nu_{0}(\RR)}
  \\ 
  &\leq
    \left\|
    \left(L w_{\beta(r)}
    -\alpha \psi'\big(w_{\beta(r)}-r\big)
    \right)
    \chi_{\{|\s|<C_0\epsilon\}}\right\|_{G^\nu_{0}(\RR)}
  \\
  &\qquad{}+\left\|
    \alpha \psi''(w_{\beta(r)})r
    \chi_{\{|\s|<C_0\epsilon\}}\right\|_{G^\nu_{0}(\RR)}
  \\&\stackrel{\eqref{eq: on psi'' bis}} \leq
      \left\|
      \left(L w_{\beta(r)}
      -\alpha \psi'\big(w_{\beta(r)}-r\big)
      \right)
      \chi_{\{|\s|<C_0\epsilon\}}\right\|_{G^\nu_{0}(\RR)}
      +\Const\epsilon^{1/2}||r||_{E^\nu_{1,odd}(\RR)}
  \\&\leq 
      \Const\epsilon^{1/2}
      +\Const\epsilon^{1/2}||r||_{E^\nu_{1,odd}(\RR)}
      \rightarrow 0
\end{align*}
uniformly in $r\in \overline{B(0,\rho)}$ as $\epsilon$ tends to $0$, since $|\psi'|$ is uniformly bounded
(see~\eqref{eq:psi2prime} and~\eqref{eq:psiprimeoutside}) and so is $|(L w_{\beta(r)})(x)|$.

Thus the Schauder fixed point theorem gives a solution $r=r_R\in \overline{B(0,\rho)}$ to the equation
\begin{equation*}
  r=\Xi_{\beta(r)}\Big(L^{-1}\Big(\Gamma(r,\beta(r))
  +\delta(r,\beta(r))Lu_o\Big)\Big),
\end{equation*} 
which can be written as 
\begin{align*}
  &  r=L^{-1}\Big(\Gamma(r,\beta(r))
    +\delta(r,\beta(r))Lu_o\Big)
  \\&\qquad{}
      +L^{-1}\Big(
      \alpha\psi''(w_{\beta(r)})\chi(|\s|<R)r
      -\delta_1(r,\beta(r))Lu_0-\delta_2(r,\beta(r),R)Lu_o\Big)
  \\& \qquad{}
      \in B(0,\rho)\cap H^2_{odd}(\RR)
\end{align*}
(see Proposition~\ref{prop: on the exponential decay bis}). As the estimates above are uniform in $R\geq 1$, we also get that
$||r||_{H^2(\RR)}=||r_R||_{H^2(\RR)}$ is uniformly bounded in $R$. Hence
\begin{multline*}
  Lr=
  c^2w_{\beta(r)}''-\Delta_D w_{\beta(r)}+\alpha w_{\beta(r)}
  -\alpha \psi'\big(w_{\beta(r)}-r\big)
  \\
  -\alpha \psi''(w_{\beta(r)})\chi(|\s|\geq R)r
  +\Big(\delta(r,\beta(r))-\delta_1(r,\beta(r))-\delta_2(r,\beta(r),R)\Big)Lu_o
\end{multline*}
with $r=r_R\in B(0,\rho)\cap H^2_{odd}(\RR)$ and a uniform bound on $||r_R||_{H^2(\RR)}$. Therefore there exists a sequence
$R_n\rightarrow \infty$, $r_\infty\in \overline B(0,\rho)\cap H^2_{odd}(\RR)$ and $\beta_\infty\in[-1,1]$ such that
$r_{_{R_n}}\rightarrow r_\infty$ weakly in $H^2_{odd}(\RR)$ and $\beta(r_{_{R_n}})\rightarrow \beta_\infty$.  Taking limits in the
above equation, we deduce that
\begin{align*}
  Lr_\infty=
  c^2w_{\beta_\infty}''-\Delta_D w_{\beta_\infty}+\alpha w_{\beta_\infty}
  -\alpha \psi'\big(w_{\beta_\infty}-r_\infty\big).
\end{align*}
We have used that
\begin{equation*}
  \delta_1(r,\widetilde \beta)+\delta_2(r,\widetilde \beta,R)\rightarrow
  \delta(r,\widetilde \beta)
\end{equation*}
as $R\rightarrow \infty$, uniformly in $\widetilde \beta\in[-1,1]$ and $r\in \overline B(0,\rho)$.  \qed

\appendix
\renewcommand*{\thesection}{\Alph{section}}
\makeatletter
\renewcommand\section{\@startsection {section}{1}{\z@}%
           {18\p@ \@plus 6\p@ \@minus 3\p@}%
           {9\p@ \@plus 6\p@ \@minus 3\p@}%
           {\normalsize\bfseries\boldmath \noindent Appendix }}
\makeatother
\section{Tools from Fourier analysis}
\label{sec:Tools-from-Fourier}

Although the Banach spaces considered in this article are real, it is convenient to consider them as complex when working with the
Fourier transform. We begin with a straightforward but useful generalisation of results in~\cite{Iooss2000a}.  For $\nu\in \RR$,
$\m\in\{0,1,2,\ldots\}$ and a Banach space $X$, we recall that $E^\nu_\m(X)$ is the Banach space of functions $f\in \Cm(\RR,X)$
equipped with the norm~\eqref{eq:Eknu},
\begin{equation*}
  ||f||_{E^\nu_\m(X)}:=\max_{0\leq j\leq \m}
  ||e^{-\nu |\cdot|} f^{(j)}||_{L^\infty(\RR,X)} < \infty.
\end{equation*}
In the case $X=\RR$, this means $f\in E^\nu_\m(\RR)$ if and only if $f\in \Cm(\RR)$ satisfies
\begin{equation*}
  \max_{0\leq j\leq \m}\sup_{\s\in \RR}e^{-\nu |\s|}|f^{(j)}(\s)|<\infty.
\end{equation*}
In the case $X=C([-1,1])$, $f\in E^\nu_\m(C([-1,1]))$ can be identified with the continuous mapping
$(\s,\v)\mapsto \tilde f(\s,\v)\in \RR$, where $\tilde f(\s,\v)$ is the value at $\v\in[-1,1]$ of $f(\s)\in C([-1,1])$; then
$f\in E^\nu_\m(C([-1,1]))$ if and only if each $\partial_1^j\tilde f$ exists and belongs to $C(\RR\times[-1,1])$ for
$0\leq j\leq \m$, and
\begin{equation*}
  \max_{0\leq j\leq \m}\sup_{(\s,\v)\in \RR\times[-1,1]}e^{-\nu |\s|}
  |\partial_1^j\tilde f(\s,\v)|<\infty.
\end{equation*}
In the case $X=C^1([-1,1])$, $f\in E_0^\nu(C^1([-1,1]))$ can be identified with $\tilde f \in C(\RR\times[-1,1])$ such that in
addition to the requirements for $X=C([-1,1])$, also $\partial_2\tilde f$ exists and belongs to $C(\RR\times[-1,1])$, and
\begin{equation*}
  \max_{j\in\{0,1\}}\sup_{(\s,\v)\in \RR\times[-1,1]}e^{-\nu |\s|}
  |\partial_2^j\tilde f(\s,\v)|<\infty.
\end{equation*}

\begin{proposition}
  \label{prop: adaptation}
  Let $p_0>0$ and the measurable map $(\k,\v)\mapsto\widehat H(\k,\v)\in \CC$ be defined on its domain
  \begin{equation*}
    \{ (\k,\v)\in \CC\times[-1,1]: \Im\, \k\in(-p_0,p_0)\}.
  \end{equation*} 
  We assume that, for each $\v\in[-1,1]$, the map $\k\mapsto\widehat H(\k,\v)$ is analytic in the strip
  $\{\k\in \CC:\Im\, \k\in(-p_0,p_0)\}$ and, for all $\delta\in(0,p_0)$, $(1+|\k|)|\widehat H(\k,\v)|$ is bounded in
  $\{(\k,\v)\in \CC\times[-1,1]: \Im\, \k\in[-\delta,\delta]\}$.

  Then, for every $\v\in[-1,1]$, $\widehat H(\cdot,\v)\colon\RR\rightarrow\CC$ is the Fourier transform of some
  $H(\cdot,\v)\in L^2(\RR)$,
  \begin{equation*}
    \widehat H(k,\v)=\int_{\RR}e^{-ik\s}H(\s,\v)d\s ; 
  \end{equation*}
  the map $(\s,\v)\rightarrow H(\s,\v)$ being measurable on $\RR\times[-1,1]$. Moreover, for each $\nu\in(-p_0,p_0)$, the linear
  map $f\rightarrow H\star f$ is well defined from $E_0^\nu(C([-1,1]))$ into itself and is uniformly bounded if $\nu$ is restricted
  to be in any compact subset of $(-p_0,p_0)$.  Here, the convolution is taken with respect to the real variable $\s$ only,
  \begin{equation*}
    (H\star f)(\s,\v)=\int_\RR H(\s-\u,\v)f(\u,\v)d\u.
  \end{equation*}
  
  If in addition $|\k^2\widehat H(\k,\v)|$ is bounded in $\{(\k,\v)\in \CC\times[-1,1]: \Im\, \k\in[-\delta,\delta]\}$ for all
  $\delta\in (0,p_0)$, then the map $f\rightarrow H\star f$ is well defined from $E_0^\nu(C([-1,1]))$ into $E_1^\nu(C([-1,1]))$
  and is uniformly bounded if $\nu$ is restricted to be in any compact subset of $(-p_0,p_0)$.
\end{proposition}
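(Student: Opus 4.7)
The plan is to construct $H(\cdot,\v)$ fibrewise via Plancherel, extract exponential decay by contour deformation in the strip of analyticity, and then bound the convolution by a weighted Young-type estimate. First, the bound $|\widehat H(\k,\v)|^2 \leq \Const (1+|\k|)^{-2}$ makes $\widehat H(\cdot,\v)$ square-integrable, so Plancherel produces a unique $H(\cdot,\v)\in L^2(\RR)$ with the prescribed Fourier transform. Joint measurability of $(\s,\v)\mapsto H(\s,\v)$ follows by truncating in frequency: the approximants $H_R(\s,\v) := (2\pi)^{-1}\int_{-R}^R \widehat H(\k,\v)e^{i\k\s}\,d\k$ are jointly measurable by Fubini, converge to $H(\cdot,\v)$ in $L^2$ for each $\v$, and hence admit a subsequence converging pointwise almost everywhere on $\RR\times[-1,1]$.

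Second, fix $\delta \in (0, p_0)$ and, for $\s > 0$, shift the contour of integration from $\RR$ to $\RR + i\delta$ in the complex $\k$-plane. The vertical contributions at $\k = \pm R + it$, $t\in[0,\delta]$, are bounded by $\Const\,\delta/(1+R)\to 0$, thanks to the hypothesis on $(1+|\k|)|\widehat H|$, so passing to the $L^2$-limit yields
\begin{equation*}
H(\s,\v) = \frac{e^{-\delta \s}}{2\pi}\int_\RR \widehat H(\k+i\delta,\v)\, e^{i\k\s}\, d\k \qquad (\s>0).
\end{equation*}
Plancherel on the shifted line gives $\|e^{\delta\cdot}H(\cdot,\v)\|_{L^2(0,\infty)} \leq (2\pi)^{-1/2}\|\widehat H(\cdot+i\delta,\v)\|_{L^2(\RR)} \leq C_\delta$, uniformly in $\v$. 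The symmetric shift to $\RR - i\delta$ covers $\s<0$, so altogether $\|e^{\delta|\cdot|}H(\cdot,\v)\|_{L^2(\RR)} \leq C_\delta$ uniformly in $\v$.

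Third, for $\nu\in(-p_0,p_0)$ choose $\mu\in(|\nu|,p_0)$. Cauchy-Schwarz with the previous step produces
\begin{equation*}
\int_\RR |H(z,\v)|\,e^{|\nu||z|}\,dz \leq \|e^{\mu|\cdot|}H(\cdot,\v)\|_{L^2}\cdot\|e^{(|\nu|-\mu)|\cdot|}\|_{L^2} \leq K_\nu
\end{equation*}
uniformly in $\v$. For $f\in E_0^\nu(C[-1,1])$, the substitution $z=\s-\u$ together with the inequality $\nu(|\s-z|-|\s|) \leq |\nu||z|$ give
\begin{equation*}
e^{-\nu|\s|}|(H\star f)(\s,\v)| \leq \|f\|_{E_0^\nu(C[-1,1])}\int_\RR |H(z,\v)|\,e^{|\nu||z|}\,dz \leq K_\nu\,\|f\|_{E_0^\nu(C[-1,1])},
\end{equation*}
with $K_\nu$ uniform as $\nu$ varies in compact subsets of $(-p_0,p_0)$; joint continuity of $H\star f$ then follows by dominated convergence, using the uniform continuity of $f$ on compact sets together with the weighted $L^1$ bound on $H$. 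For the second statement, the bound $|\k^2\widehat H|\leq \Const$ implies $|\k\widehat H|\leq \Const (1+|\k|)^{-1}$, so $i\k\widehat H$ satisfies the original hypothesis and is the Fourier transform of $H'$; applying the convolution estimate to $H'$ yields $(H\star f)' = H'\star f \in E_0^\nu(C[-1,1])$, hence $H\star f\in E_1^\nu(C[-1,1])$. The main technical obstacle is that $\widehat H$ decays only like $1/|\k|$, which rules out absolute convergence of the Fourier inversion or of the shifted contour integral; this is circumvented throughout by working in $L^2$ via Plancherel instead of pointwise.
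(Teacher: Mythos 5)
Your argument is correct and follows essentially the same route as the paper's own proof: $L^2$ membership of $\widehat H(\cdot,\v)$ via the $(1+|\k|)^{-1}$ bound and Plancherel, exponential $L^2$-decay of $H(\cdot,\v)$ by contour shift to $\RR\pm i\delta$, a weighted Cauchy--Schwarz/Young estimate for the convolution using $\nu(|\u|-|\s|)\leq|\nu||\s-\u|$, and the $E_1^\nu$ conclusion by applying the same argument to $i\k\widehat H=\widehat{\partial_\s H}$. The only (immaterial) deviations are that you split the Cauchy--Schwarz differently (first a weighted $L^1$ bound on $H$, then a plain Young bound) and that you spell out the joint measurability via a frequency-truncated approximating subsequence, which the paper takes for granted.
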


\begin{proof}
  We remark that if $\widehat H(\k,\v)$ and $f(\s,\v)$ are both independent of $\v$, this proposition (and the following proof) is
  essentially~\cite[Lemma 3]{Iooss2000a}.  Let $0<\delta<p_0$.  We have $(1+|\k|^2)^{1/2}|\widehat H(\k,\v)|\leq \Const$ on
  $\{(\k,\v)\in \RR\times[-1,1]:\Im\, \k\in[-\delta,\delta]\}$ and $\widehat H(\cdot,\v)\in L^2(\RR)$.  Hence, for every
  $\v\in[-1,1]$, $\widehat H(\cdot,\v)$ is the Fourier transform of some $H(\cdot,\v)\in L^2(\RR)$, the map
  $(\s,\v)\mapsto H(\s,\v)$ being measurable. Moreover, by the Cauchy theorem on contour integrals in the complex plane,
\begin{multline*}
  e^{\delta \s}H(\s,\v)
  =\frac{1}{2\pi}e^{\delta \s}\int_\RR e^{i\s\k}\widehat H(\k,\v)d\k
  \\=\frac{1}{2\pi}e^{\delta \s}\int_\RR e^{i\s(i\delta+\k)}\widehat H(i\delta+\k,\v)d\k
  =\frac{1}{2\pi}\int_\RR e^{i\s\k}\widehat H(i\delta+\k,\v)d\k,
\end{multline*}
and thus, by Plancherel,
\begin{equation*}
  ||e^{\delta \cdot}H(\cdot,\v)||_{L^2(\RR)}
  =\frac{1}{\sqrt{2\pi}}||\widehat H(i\delta+\cdot,\v)||_{L^2(\RR)}.
\end{equation*}
The same estimate with $\delta$ replaced by $-\delta$ gives
\begin{equation}
  \label{eq: est on H}
  \sup_{\v\in[-1,1]}||\,e^{\delta |\cdot|}H(\cdot,\v)\,||_{L^2(\RR)}<\infty.
\end{equation}
Let $|\nu|<\delta$, $\v\in[-1,1]$ and convolutions be only with respect to $\s$. As in~\cite{Iooss2000a}, we get for all
$f\in E^\nu_0(C([-1,1]))$
\begin{multline*}
  \sup_{(\s,\v)\in \RR\times[-1,1]}
  e^{-\nu|\s|}\left|\int_{\RR}H(\s-\u,\v)f(\u,\v)d\u\right|
  \\\leq||f||_{E^\nu_0(C[-1,1])}\sup_{(\s,\v)\in \RR\times[-1,1]}
  \int_\RR e^{-\nu|\s|+\nu |\u|-\delta|\s-\u|}
  \left|e^{\delta|\s-\u|}H(\s-\u,\v)\right|d\u
  \\\leq||f||_{E^\nu_0(C[-1,1])}
  \sup_{\v\in[-1,1]}||\,e^{\delta |\cdot|}H(\cdot,\v)\,||_{L^2(\RR)}
  \sup_{\s\in \RR}\left(\int_\RR e^{2\nu(|\u|-|\s|)-2\delta|\s-\u|}d\u\right)^{1/2}
  \\\leq||f||_{E^\nu_0(C[-1,1])}
  \sup_{\v\in[-1,1]}||\,e^{\delta |\cdot|}H(\cdot,\v)\,||_{L^2(\RR)}
  \sup_{\s\in \RR}\left(\int_\RR e^{2|\nu|(|\u-\s|)-2\delta|\s-\u|}d\u\right)^{1/2}
  \\=||f||_{E^\nu_0(C[-1,1])}
  \sup_{\v\in[-1,1]}||\,e^{\delta |\cdot|}H(\cdot,\v)\,||_{L^2(\RR)}
  (\delta-|\nu|)^{-1/2}\,.
\end{multline*}

If in addition $|\k^2\widehat H(\k,\v)|$ is bounded in $\{(\k,\v)\in \CC\times[-1,1]: \Im\, \k\in[-\delta,\delta]\}$, then we can
apply the previous argument to $i\k\widehat H=\widehat{\partial_\s H}$ instead of $\widehat H$, noting
$\partial_\s(H\star f)=(\partial_\s H)\star f$.
\end{proof}

Recall the dispersion function $D(\k)=-c^2\k^2+2(1-\cos \k)+\alpha$ and let
\begin{equation}
  \label{eq: p_0} 
  p_0:=\inf\{|\Im\, \k|:D(\k)=0,\, \Im\, \k\neq 0\}>0.
\end{equation}
By Lemma 1 in~\cite{Iooss2000a}, $p_0>0$.

Observe that, in contrast with Proposition~\ref{prop: adaptation}, $\nu$ is required to be negative in the following statement.
\begin{proposition}
  \label{prop: on the exponential decay}
  Let $\nu\in(-p_0,0)$.  If $Q\in E^\nu_{0}(\RR)$ satisfies
  \begin{equation}
    \label{eq: int=0} 
    \int_\RR Q(\s)\sin(\k_0 \s)d\s=\int_\RR Q(\s)\cos(\k_0 \s)d\s=0,
  \end{equation}
  then, for all $c \leq 1$ close enough to $1$, there exists a unique function $r\in E^\nu_{1}(\RR)\cap H^2(\RR)$ such that
  $Lr=Q$.  Moreover, the linear map $Q\rightarrow r$ is bounded as a map $ E^\nu_{0}(\RR) \to E^\nu_{1}(\RR)$.
\end{proposition}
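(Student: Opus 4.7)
The plan is to construct $r$ by Fourier inversion, using the orthogonality conditions~\eqref{eq: int=0} to cancel the zeros of the dispersion function $D$ at $\pm\k_0$. Since $Q\in E^\nu_0(\RR)$ with $\nu<0$ decays exponentially, the Fourier transform $\widehat Q(\k):=\int_\RR e^{-i\k\s}Q(\s)\,d\s$ extends analytically to the strip $\{|\Im \k|<|\nu|\}$ and is uniformly bounded on every closed substrip. The orthogonality conditions rewrite exactly as $\widehat Q(\pm\k_0)=0$.

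By definition~\eqref{eq: p_0} of $p_0$, the only zeros of $D$ in $\{|\Im \k|<p_0\}$ are at $\pm\k_0$, and these are simple for $c\leq 1$ close to $1$ (non-degeneracy of $D'(\pm\k_0)$, established in Section~\ref{sec:Setting-main-result}). Since $|\nu|<p_0$, the quotient $\widehat r(\k):=\widehat Q(\k)/D(\k)$ has removable singularities at $\pm\k_0$ and is analytic on the whole strip $\{|\Im \k|<|\nu|\}$. Because $D(\k)\sim -c^2\k^2$ as $|\k|\to\infty$ while $\widehat Q$ stays bounded, one obtains $|\widehat r(\k)|\leq C/(1+\k^2)$ and even $|\k^2\widehat r(\k)|\leq C$ uniformly on every closed substrip.

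I would then define $r(\s):=(2\pi)^{-1}\int_\RR e^{i\k\s}\widehat r(\k)\,d\k$, which makes sense since $\widehat r\in L^1(\RR)$. Shifting the contour of integration to $\RR+i\delta$ with $0<\delta<|\nu|$ when $\s>0$ (and to $\RR-i\delta$ when $\s<0$) yields $|r(\s)|\leq Ce^{-\delta|\s|}$; taking $\delta\uparrow|\nu|$ (initially allowing a slight enlargement of $|\nu|$ to absorb the strict inequality) gives $r\in E^\nu_0(\RR)$. Since $\k\widehat r$ decays only like $1/|\k|$ and is not in $L^1$, I would obtain the exponential decay of $r'$ by bootstrapping: rewriting $Lr=Q$ as $c^2r''=Q-\alpha r+\Delta_D r$ places $r''$ in $E^\nu_0$, and then $r'(\s)=-\int_\s^{+\infty}r''(t)\,dt$ (boundary value at $+\infty$ vanishing by decay of $r$) gives the required bound on $r'$. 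Plancherel combined with $\widehat Q\in L^2$ and $\k^2/D(\k)$ bounded then shows $\widehat r,\k^2\widehat r\in L^2$, hence $r\in H^2(\RR)$. Uniqueness is clean: two solutions differ by $w\in E^\nu_1(\RR)$ with $Lw=0$, so $\widehat w$ must be supported in $\{\pm\k_0\}$, whence $w$ is a linear combination of $e^{\pm i\k_0\s}$; this is incompatible with exponential decay ($\nu<0$) unless $w\equiv 0$. All estimates are linear in $Q$, giving the claimed boundedness of $Q\mapsto r$.

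The main technical point is verifying that $\widehat Q/D$ really extends analytically across $\{|\Im \k|<|\nu|\}$ with the uniform decay $1/\k^2$: the cancellation at $\pm\k_0$ is exactly the role played by the orthogonality hypothesis~\eqref{eq: int=0}, reflecting the fact that $L$ has the non-decaying elements $\cos(\k_0\s)$ and $\sin(\k_0\s)$ in its kernel on $\RR$. Once this analyticity is in hand, the remainder is a standard Paley-Wiener contour-shift argument together with an elliptic-type bootstrap from the equation to promote $r$ to $H^2$ and extract the pointwise decay of $r'$.
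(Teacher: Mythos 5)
Your overall strategy (Fourier inversion of $\widehat Q/D$, with the orthogonality conditions~\eqref{eq: int=0} supplying removable singularities at $\pm\k_0$) is sound, and the uniqueness and $H^2$ parts of your argument go through. But the contour-shift step does not deliver the sharp exponential rate. Since $Q\in E^\nu_0(\RR)$ only gives $|Q(\s)|\leq Ce^{-|\nu||\s|}$, the transform $\widehat Q$ extends analytically only to the \emph{open} strip $\{|\Im\,\k|<|\nu|\}$; on the lines $\Im\,\k=\pm|\nu|$ the defining integral diverges, and both $\sup_a|\widehat Q(a+i\delta)|$ and $||\widehat Q(\cdot+i\delta)||_{L^2}$ blow up as $\delta\uparrow|\nu|$ (test $Q(\s)=e^{-|\nu||\s|}$). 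Consequently your inversion gives $|r(\s)|\leq C_\delta e^{-\delta|\s|}$ with $C_\delta\to\infty$ as $\delta\uparrow|\nu|$: you obtain $r\in E^{\nu'}_1(\RR)$ for every $\nu'\in(\nu,0)$, but not $r\in E^\nu_1(\RR)$, and not a bound of the operator $Q\mapsto r$ uniformly from $E^\nu_0(\RR)$ to $E^\nu_1(\RR)$. The parenthetical ``allowing a slight enlargement of $|\nu|$'' does not close this: you are not given $Q\in E^{\nu''}_0(\RR)$ for any $\nu''<\nu$. The loss matters, because the proposition is later used in the fixed-point argument of Section~\ref{sec:Exist-heter-conn} precisely because $L^{-1}$ must map $E^\nu_0(\RR)$ boundedly into $E^\nu_1(\RR)$ with the \emph{same} $\nu$; an operator losing a bit of decay at each application would not close the iteration in a single space.

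The paper avoids this by splitting the symbol: $1/D(\k)=1/f(\k)+\widehat H(\k)$, where $f(\k)=\frac{-2c^2\k_0+2}{2\k_0}(\k^2-\k_0^2)$ is chosen so that $f$ and $D$ have the same simple zeros and the same derivatives at $\pm\k_0$. The remainder $\widehat H$ is then analytic on the \emph{wider} strip $\{|\Im\,\k|<p_0\}$ with $|\k^2\widehat H(\k)|$ bounded on closed substrips, so the kernel $H$ decays at a rate $\delta\in(|\nu|,p_0)$ strictly better than that of $Q$, and the Plancherel convolution estimate of Proposition~\ref{prop: adaptation} yields $H\star Q\in E^\nu_1(\RR)$ at the sharp rate. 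The remaining piece $\widehat Q/f$ is exactly the Fourier transform of the solution $r_0$ of the \emph{ODE} $r_0''+\k_0^2r_0=Q$, which the paper solves by variation of constants: the two one-sided formulae $\frac{1}{\k_0}\int_{-\infty}^\s\sin(\k_0(\s-\u))Q(\u)\,d\u$ and $\frac{1}{\k_0}\int_\s^\infty\sin(\k_0(\u-\s))Q(\u)\,d\u$ coincide precisely because of~\eqref{eq: int=0}, and each yields $|r_0(\s)|,|r_0'(\s)|\leq Ce^{-|\nu||\s|}$ directly by integrating the bound on $Q$. This splitting, separating the resonant part of the Green's function from a better-decaying remainder, is the missing idea in your argument.
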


\begin{proof}
Let us formally define the function $r$ by its Fourier representation $\widehat r(\k):=\widehat Q(\k)/D(\k)$.  As $D$
vanishes on $\RR$ exactly at $\pm \k_0$ with non-vanishing derivative $D'(\pm \k_0)=\pm(-2c^2\k_0+2)\neq 0$, we can
define the function
\begin{equation*}
  f(k) := \frac{-2c^2\k_0+2}{2\k_0}(\k^2-\k_0^2),
\end{equation*}
which also vanishes exactly at $\pm \k_0$ and satisfies there $f'(\pm \k_0) = D'(\pm \k_0)$.  Thus, we can write
\begin{equation*}
  \frac 1 {D(\k)}=
  \frac1{f(\k)}+\widehat H(\k)
\end{equation*}
with a remainder function $\widehat H(\k)$.  Clearly $\widehat H(\k)$ is analytic in the strip
$\{\k\in \CC: \Im\, \k\in(-p_0,p_0)\}$.  As $|k^2/D(k)|$ is bounded in
\begin{equation*}
  \{\k\in \CC: \Im\, \k\in(-p_0,p_0),\, |D(k)|>1\},
\end{equation*}
we know that$|\k^2\widehat H(\k)|$ is bounded on the strip $\{\k\in \CC: \Im\, \k\in(-\delta,\delta)\}$ for all
$\delta\in (0,p_0)$.  Thus Proposition~\ref{prop: adaptation} applied to the case when $\widehat H$ and $f$ do not depend on the
second variable $\v$, the map $Q\mapsto H \star Q$ is well defined and bounded from $E^{\nu}_{0}(\RR)$ to
$E^{\nu}_{1}(\RR)$. Moreover, $H\star Q$ is clearly in $H^2(\RR)$ as $Q$ is assumed to decay exponentially.  Note that
$H\in H^1(\RR)$.

On the other hand, we ignore $f$ for the moment and notice that the function $\frac{1}{\k_0^2-\k^2}\widehat Q(\k)$ is related to
the Fourier transform of the solution $r_0(\s)$ of the equation $L_0r_0=r_0''+k_0^2r_0=Q$. The variation of constants formula and
\eqref{eq: int=0} give
\begin{equation*}
  r_0(\s)=\frac{1}{\k_0}\int_{-\infty}^{\s}\sin(\k_0(\s-\u))Q(\u)d\u
  =\frac 1{\k_0}\int_\s^{\infty}\sin(\k_0(\u-\s))Q(\u)d\u
\end{equation*}
with
\begin{equation*}
  r_0'(\s)=\int_{-\infty}^{\s}\cos(\k_0(\s-\u))Q(\u)d\u
  =-\int_\s^{\infty}\cos(\k_0(\u-\s))Q(\u)d\u.
\end{equation*}
It easily follows that $r_0\in E^{\nu}_{1}(\RR)$, $r_0\in H^2(\RR)$ and that the map $Q\mapsto r_0$ is bounded as map
$E^\nu_{0}(\RR) \to E^\nu_{1}(\RR)$ and $r\in H^2(\RR)$.

Combining the two previous steps and noting that the solution $r$ can, by definition of $\widehat H$, be written as
$r=-\frac{2\k_0}{-2c^2\k_0+2}r_0+H\star Q$, we have proved the claim.
\end{proof}

We remark that in the previous Proposition, the hypothesis that $Q$ is continuous is actually not used; it suffices to assume that
$Q\in L^\infty(\RR)$ and $e^{|\nu\cdot|}Q\in L^\infty(\RR)$.

In the same way, one gets the following theorem, in which the assumption $Q\in E^{\nu}_0(\RR)$ is replaced by
$e^{|\nu \cdot|}Q\in L^2(\RR)$.
\begin{proposition}
  \label{prop: on the exponential decay bis}
  Suppose that $\nu\in(-p_0,0)$.  If $Q\in L^2(\RR)$, $e^{|\nu \cdot|}Q\in L^2(\RR)$ and
  \begin{equation*}
    \int_\RR Q(\s)\sin(\k_0 \s)d\s=\int_\RR Q(\s)\cos(\k_0 \s)d\s=0,
  \end{equation*}
  then, for all $c \leq 1$ close enough to $1$, there exists a unique function $r\in E^\nu_{1}(\RR)\cap H^2(\RR)$ such that
  $Lr
  =Q$.  Moreover,
  \begin{equation*}
    ||r||_{E^\nu_{1}(\RR)}\leq \Const||e^{|\nu \cdot|}Q||_{L^2(\RR)}.
  \end{equation*}
\end{proposition}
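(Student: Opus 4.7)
The plan is to mimic the proof of Proposition~\ref{prop: on the exponential decay}, substituting $L^2$-based convolution estimates for the $L^\infty$-based ones used there. Specifically, I would decompose $\widehat r(\k) = \widehat Q(\k)/D(\k)$ exactly as before: write
\begin{equation*}
  \frac{1}{D(\k)} = \frac{1}{f(\k)} + \widehat H(\k), \qquad f(\k) = \frac{-2c^2\k_0+2}{2\k_0}(\k^2-\k_0^2),
\end{equation*}
so that $\widehat H$ is analytic in the strip $\{\Im\k\in(-p_0,p_0)\}$ with $|\k^2\widehat H(\k)|$ bounded on every substrip $\{\Im\k\in[-\delta,\delta]\}$, $\delta\in(0,p_0)$.

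The first step is to treat the regular part $H\star Q$. I would adapt the argument used for Proposition~\ref{prop: adaptation} by replacing H\"older's inequality in $L^\infty\times L^1$ by Cauchy--Schwarz in $L^2\times L^2$. Precisely, since the proof of Proposition~\ref{prop: adaptation} already shows $\|e^{\delta|\cdot|}H\|_{L^2(\RR)}<\infty$ for $|\nu|<\delta<p_0$, one has
\begin{equation*}
  e^{|\nu\s|}|(H\star Q)(\s)| \leq \int_\RR e^{|\nu(\s-\u)|}|H(\s-\u)| \, e^{|\nu\u|}|Q(\u)| d\u \leq \|e^{\delta|\cdot|}H\|_{L^2}\, \|e^{|\nu\cdot|}Q\|_{L^2},
\end{equation*}
after using $|\nu\s|-|\nu\u|\leq|\nu(\s-\u)|\leq \delta|\s-\u|$ inside the exponentials (and bounding the remaining exponential integral by a constant). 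Applying the same argument to $\partial_\s H\star Q$, whose transform is $i\k\widehat H$ and still satisfies the hypothesis with $|\k\cdot\k\widehat H|$ bounded, yields the $E^\nu_1$-bound for this piece. Moreover, since $H\in L^2$ and $Q\in L^2$, the convolution $H\star Q$ is automatically continuous, providing the required regularity.

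The second step is to handle the resonant part. Using the orthogonality $\int Q\sin(\k_0\cdot)=\int Q\cos(\k_0\cdot)=0$, the solution of $r_0''+\k_0^2r_0=Q$ can be written by the variation of constants formula as
\begin{equation*}
  r_0(\s)=\frac{1}{\k_0}\int_\s^{\infty}\sin(\k_0(\u-\s))Q(\u)d\u = -\frac{1}{\k_0}\int^{\s}_{-\infty}\sin(\k_0(\s-\u))Q(\u)d\u,
\end{equation*}
where the two expressions agree thanks to the orthogonality conditions. For $\s>0$, Cauchy--Schwarz gives
\begin{equation*}
  |r_0(\s)| \leq \frac{1}{\k_0}\int_\s^\infty e^{-|\nu|\u} e^{|\nu|\u}|Q(\u)|d\u \leq \frac{1}{\k_0\sqrt{2|\nu|}}\,e^{-|\nu|\s}\|e^{|\nu\cdot|}Q\|_{L^2},
\end{equation*}
and symmetrically for $\s<0$ by using the other representation; the analogous bound for $r_0'$ follows from the explicit formula
$r_0'(\s) = -\int_\s^\infty\cos(\k_0(\u-\s))Q(\u)d\u$. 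This places $r_0$ in $E^\nu_1(\RR)$ with the required operator norm bound.

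The third step is to combine: $r=-\frac{2\k_0}{-2c^2\k_0+2}r_0+H\star Q$ lies in $E^\nu_1(\RR)$ with the claimed estimate. Exponential decay of $r$ and $r'$ implies $r,r'\in L^2(\RR)$; since $Q\in L^2(\RR)$ and $c^2r''=\Delta_D r-\alpha r+Q$, also $r''\in L^2(\RR)$, so $r\in H^2(\RR)$. Uniqueness follows as in the previous proposition: a difference of two solutions would lie in $E^\nu_1(\RR)$ and solve $Lv=0$, and the dispersion analysis rules out such a solution for $c\leq 1$ close to $1$. The main technical subtlety is the replacement of pointwise $L^\infty$-type convolution estimates by $L^2$-type ones; once the weighted Cauchy--Schwarz bound above is written down, everything else runs in parallel with the previous proof. \qed
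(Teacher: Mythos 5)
Your argument follows the paper's own proof step for step: the same decomposition $1/D=1/f+\widehat H$, the same weighted Cauchy--Schwarz bound for $H\star Q$ replacing the $L^\infty$ estimate of Proposition~\ref{prop: adaptation}, and the same variation-of-constants treatment of $r_0$. The only blemish is a sign error in your second representation of $r_0$ (the orthogonality conditions give $\frac{1}{\k_0}\int_\s^\infty\sin(\k_0(\u-\s))Q\,d\u=+\frac{1}{\k_0}\int_{-\infty}^\s\sin(\k_0(\s-\u))Q\,d\u$, not $-$), which does not affect the estimates since you only use the first expression.
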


\begin{proof}
  With $H$ as in the proof of Proposition~\ref{prop: on the exponential decay}, let us check that
\begin{equation*}
  ||H\star Q||_{E^\nu_{0}(\RR)}\leq \Const||e^{|\nu \cdot|}Q||_{L^2(\RR)}
\end{equation*} 
for all negative $-\delta<\nu<0$. Indeed,
\begin{align*}
  &e^{|\nu \s|}|(H\star Q)(\s)|
  =e^{|\nu \s|}\left|\int_\RR H(\s-\u)Q(\u)d\u\right|
  \\&\leq \left(\sup_{\s,\u\in\RR} e^{|\nu \s|-\delta |\s-\u|-|\nu \u|}
  \right)
  \int_\RR e^{\delta|\s-\u|}|H(\s-\u)|e^{|\nu \u|}|Q(\u)|d\u
  \\&\leq ||e^{\delta |\cdot|}H||_{L^2(\RR)}||e^{|\nu\cdot|}Q||_{L^2(\RR)},
\end{align*}
where we used $\left|\,|\nu t|-|\nu\tau|\,\right|\leq \delta|t-\tau|$ (see also~\eqref{eq: est on H} for an $H$ independent of
$\v$).  Similarly one can prove that
\begin{equation*}
  ||H'\star Q||_{E^\nu_{0}(\RR)}\leq ||e^{|\nu \cdot|}Q||_{L^2(\RR)}.
\end{equation*}
Finally, for the solution $r_0$ of $L_0r_0=Q$, the variation of constants formula implies that
$r_0\in E^{\nu}_{1}(\RR)$ and $
||r_0||_{E^\nu_{1}(\RR)}\leq \Const||e^{|\nu \cdot|}Q||_{L^2(\RR)}.  $
\end{proof}

We also use the following result, which is proved in~\cite[Proposition A.2]{Buffoni2017a}.
\begin{proposition}
  \label{prop: orthogonality} 
  If $u_o\in C(\RR)$ satisfies \eqref{eq: bound u_o} and $c>\k_0^{-1/2}$, then
  \begin{equation*}
    \int_{\RR}\sin(\k_0\s )(c^2u_o''-\Delta_D u_o+\alpha u_o)d\s=-2c^2\k_0+2<0.
  \end{equation*}
\end{proposition}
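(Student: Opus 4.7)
The plan is to compute the integral by truncating to $[-R,R]$ and passing to $R\to\infty$, exploiting the key arithmetic $k_0=\pi/2$ (so $\cos k_0=0$ and $\sin k_0=1$), the oddness of $u_o$, and the dispersion identity $D(k_0)=0$.

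First, I would split the operator and treat each piece separately against $\sin(k_0 s)$. Integrating by parts twice in the term $c^2 u_o''$ produces boundary contributions plus the bulk $-c^2 k_0^2 \int_{-R}^{R}\sin(k_0 s) u_o(s)\,ds$. For the discrete Laplacian, the substitution $s\mapsto s\mp 1$ inside $\int_{-R}^{R}\sin(k_0 s) u_o(s\pm 1)\,ds$ combined with $\sin(k_0 t\mp k_0)=\mp\cos(k_0 t)$ converts those terms into $\mp\int_{-R\pm 1}^{R\pm 1}\cos(k_0 t)u_o(t)\,dt$.

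Next I would use that $u_o$ is odd, so $\cos(k_0\,\cdot)u_o$ is odd: the integrals above vanish on the symmetric subinterval $[-R+1,R-1]$, leaving only the small pieces on $[R-1,R+1]$ and $[-R-1,-R+1]$, which by parity combine. Summing all contributions, the volume integrals of $u_o\sin(k_0 s)$ appear with coefficient $-c^2 k_0^2 + 2 + \alpha = D(k_0) = 0$ by the choice~\eqref{eq:alpha} of $\alpha$. Thus every bulk term dies and only
\begin{equation*}
 2c^2\sin(k_0 R)u_o'(R) - 2c^2 k_0 \cos(k_0 R) u_o(R) + 2\int_{R-1}^{R+1}\cos(k_0 t)u_o(t)\,dt
\end{equation*}
remains (after using $u_o(-R)=-u_o(R)$ and $u_o'(-R)=u_o'(R)$).

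Using the exponential bound $u_o - \operatorname{sgn}(\cdot)\cos(k_0\,\cdot)\in E^\nu_4(\RR\setminus[-1,1],\RR)$, I would replace $u_o(R)$ by $\cos(k_0 R)$ and $u_o'(R)$ by $-k_0\sin(k_0 R)$ with error $o(1)$. The trigonometric sum then collapses to $-2c^2 k_0(\sin^2(k_0 R)+\cos^2(k_0 R))=-2c^2 k_0$, an $R$-independent constant, while a direct computation using the $\pi$-periodicity of $\sin(2k_0 t)$ shows $\int_{R-1}^{R+1}\cos^2(k_0 t)\,dt=1$. Passing to the limit yields $-2c^2 k_0+2$, which is negative exactly under the assumption $c>k_0^{-1/2}$.

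The main obstacle is simply the bookkeeping: tracking all the boundary contributions from both continuous IBP and the discrete shifts, and carefully justifying that the exponentially decaying remainder $u_o - \operatorname{sgn}(\cdot)\cos(k_0\,\cdot)$ produces only $o(1)$ corrections in each term. The conceptual content is minimal and elegant: the bulk cancellation is exactly $D(k_0)=0$, and the residue encodes $D'(k_0)=-2c^2 k_0+2$.
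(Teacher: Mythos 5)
Your proposal is correct and takes essentially the same route as the paper's proof: truncate to $[-R,R]$, integrate by parts on $c^2u_o''$, shift variables in the $\Delta_D$ terms using $\sin(k_0 t\mp k_0)=\mp\cos(k_0 t)$, observe that the bulk coefficient is exactly $D(k_0)=0$ so only boundary contributions survive, and then evaluate those using the asymptotics $u_o\sim\operatorname{sgn}(\cdot)\cos(k_0\cdot)$. The only cosmetic difference is that you invoke the oddness of $u_o$ early to collapse the shifted integrals onto $[R-1,R+1]$, whereas the paper keeps the symmetric pair $\pm R$ explicit until the last line; your closing observation that the residue is $D'(k_0)=-2c^2k_0+2$ is a nice remark not made in the paper but numerically identical.
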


We repeat the proof for the sake of completeness.
\begin{proof}
  Two integrations by parts and the identity $L\sin(\k_0\s)=0$ give
  \begin{align*}
    &\lim_{R\rightarrow \infty}\int_{-R}^{R}\sin(\k_0\s )
       (c^2u_o''-\Delta_D u_o+\alpha u_o)d\s
    \\&=\lim_{R\rightarrow \infty}\int_{-R}^{R}\Big(
         c^2\frac{d^2}{d\s^2}\sin(\k_0\s )-\Delta_D\sin(k_0\s)
         +\alpha \sin(\k_0\s)\Big)u_o\, d\s
    \\&\qquad{}+\lim_{\s\rightarrow \infty}c^2
         \{\sin(\k_0 \s)u'_0(\s)-\k_0\cos(\k_0 \s)u_o(\s)
    \\&\qquad\qquad
         - \sin(-\k_0 \s)u'_0(-\s)+\k_0\cos(-\k_0 \s)u_o(-\s)\}
    \\&\qquad{}-\lim_{R\rightarrow \infty}\left(\int_{-R+1}^{R+1}-\int_{-R}^{R}\right)\sin(\k_0(\s-1))u_o(\s)d\s
    \\&\qquad{}-\lim_{R\rightarrow \infty}\left(\int_{-R-1}^{R-1}-\int_{-R}^{R}\right)\sin(\k_0(\s+1))u_o(\s)d\s \\
    &\stackrel{\eqref{eq: bound u_o}}=
      \lim_{\s\rightarrow \infty}c^2
      \{-\k_0\sin^2(\k_0 \s)-\k_0\cos^2(\k_0 \s)
      -\k_0 \sin^2(\k_0 \s)-\k_0\cos^2(\k_0 \s)\}
    \\&\qquad{}-\lim_{R\rightarrow \infty}\left[
        \int_{R}^{R+1}\sin(\k_0(\s-1))\cos(\k_0\s)d\s \right. 
        \\&\qquad\qquad{}+\left.\int_{-R}^{-R+1}\sin(\k_0(\s-1))\cos(\k_0\s)d\s\right]
    \\&\qquad{}+\lim_{R\rightarrow \infty}\left[
        \int_{-R-1}^{-R}\sin(\k_0(\s+1))\cos(\k_0\s)d\s \right.
         \\&\qquad{}+ \left.
        \int_{R-1}^{R}\sin(\k_0(\s+1))\cos(\k_0\s)d\s
        \right]
    \\&=-2c^2\k_0
        +\lim_{R\rightarrow \infty}\left(\int_{R-1}^{R+1}\cos^2(\k_0\s)d\s
        +
        \int_{-R-1}^{-R+1}\cos^2(\k_0\s)d\s
        \right)
    \\&
        =-2c^2\k_0+2<0.
  \end{align*}
\end{proof}

\section{Application of centre manifold theory}
\label{sec:Appl-cent-manif}

Let $Y$ be any Banach space such that $\DD\subset Y\subset \HH$ with continuous embeddings (but not necessarily dense).  To check
the hypotheses in~\cite{Vanderbauwhede1992a}, it suffices to check that, for all $\nu\in[0,p_0)$ and all
$G=(G_0,G_1,G_2)\in E^\nu_0(Q_hY)$, there exists a unique $U=(\z,\xx,\Z)\in E^{\nu}_0(Q_h\DD)\cap C^1(\RR,Q_h\HH)$ such that
\begin{equation}
  \label{eq: equation for U}
  \partial_t U=L_{\gamma,\tau}U+G.
\end{equation}
Here the constant $p_0$ can be as in~\eqref{eq: p_0}, or any smaller positive constant.  Writing~\eqref{eq: equation for U} as
$U=KG$, we also need to check (as required in \cite{Vanderbauwhede1992a}) that $K\in \sL(E^\nu_0(Q_hY),E^\nu_0(Q_h \DD))$ and
\begin{equation*}
  ||K||_\nu\leq \widetilde\gamma(\nu)
\end{equation*}
for some continuous function $\widetilde \gamma \colon[0,p_0)\rightarrow[0,\infty)$.

In~\cite{Iooss2000a}, this is proved when $G(t)$ is of the particular form $G(t)=Q_h(0,G_1(t),0)$ and this is sufficient for the
proof of~\cite{Vanderbauwhede1992a} to work. However, to fulfil the hypotheses of the statement of~\cite{Vanderbauwhede1992a},
this should be proved at least for the more general case $G(t)\in Q_h\DD$. For completeness, let us check this hypothesis for all
$G\in E^\nu_0(Q_h \DD)$, that is, $Y=\DD$, following the same method as in~\cite{Iooss2000a}.  Its validity is an obvious
consequence of Theorem~\ref{thm: solution U} below.

Let us assume that $\nu\in(-p_0,p_0)$ and let $G=(G_0,G_1,G_2)\in E^\nu_0(Q_h \DD)$.  The condition $G(t)\in Q_h \DD$ is
equivalent to the set of four conditions (see Lemma 2 in~\cite{Iooss2000a}): $G_2(t,\cdot)\in C^1([-1,1])$, $G_0(t)=G_2(t,0)$
\begin{equation}
  \label{eq: G_0}
  \k_0G_0(t)=\gamma\tau^2\int_0^1\sin(\k_0(1-\v))[G_2(t,\v)+G_2(t,-\v)]d\v
\end{equation}
and
\begin{equation}
  \label{eq: G_1}
  G_1(t)=\gamma\tau^2\int_0^1\cos(\k_0(1-\v))[G_2(t,\v)-G_2(t,-\v)]d\v.
\end{equation}
Properties~\eqref{eq: G_0} and~\eqref{eq: G_1} together are equivalent to $G(t)\in Q_h\HH$.  For
$G_2(t)=G_2(t,\cdot)\in C^1([-1,1])$, $G_2(t)$ is the last component of some $G(t)\in Q_h \DD=\DD\cap Q_h\HH$ if and only if
\begin{equation}
  \label{eq: necessary cdn G2}
  \k_0G_2(t,0)=\gamma\tau^2\int_0^1\sin(\k_0(1-\v))[G_2(t,\v)+G_2(t,-\v)]d\v.
\end{equation}

\begin{theorem}
  \label{thm: solution U}
  Let the constant $p_0>0$ be as in~\eqref{eq: p_0}.
  \begin{enumerate}
  \item For every $\nu\in(-p_0,p_0)$, consider the bounded linear map with bounded inverse that sends
    $G_2\in E_0^\nu(C^1([-1,1]))$ satisfying~\eqref{eq: necessary cdn G2} to $G=(G_0,G_1,G_2)\in E^\nu_0(Q_h \DD)$ with $G_0$ and
    $G_1$ given by~\eqref{eq: G_0} and~\eqref{eq: G_1}.  There exists a bounded linear map
    \begin{equation*}
      \widetilde K:G_2\mapsto U\in E^\nu_0(\DD)
    \end{equation*}
    defined for $G_2$ as above such that
    \begin{equation*}
      U\in C^1(\RR,\HH) ~\text{ and }~\partial_t U=L_{\gamma,\tau}U+G.
    \end{equation*}
  \item Moreover, $U\in E^{\nu}_0(Q_h\DD) \cap C^1(\RR,Q_h\HH)$.
  \item The solution $U$ is unique in $E^{\nu}_0(Q_h\DD)\cap C^1(\RR,Q_h\HH)$.
  \item We have
    $\widetilde K\in \sL\Big(\{G_2\in E^\nu_0(C^1([-1,1])): \eqref{eq: necessary cdn G2}\text{ holds}\}\,,\,E^\nu_0(Q_h \DD)\Big)$
    and
    \begin{equation*}
      ||\widetilde K||_\nu\leq \widetilde\gamma(\nu)
    \end{equation*}
    for some continuous function $\widetilde \gamma:[0,p_0)\rightarrow[0,\infty)$.
  \end{enumerate}
\end{theorem}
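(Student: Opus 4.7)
The plan is to adapt the Fourier-transform scheme of Lemma~3 in~\cite{Iooss2000a} from the special case $G(t)=Q_h(0,G_1(t),0)$ to general $G \in E^\nu_0(Q_h\DD)$. Applying the Fourier transform in $t$ to \eqref{eq: equation for U} reduces the system to
\begin{align*}
i\k\, \widehat \z &= \widehat \xx + \widehat G_0,\\
i\k\, \widehat \xx &= -\tau^2(1+2\gamma)\widehat \z + \gamma\tau^2\bigl(\widehat \Z(\k,1)+\widehat \Z(\k,-1)\bigr) + \widehat G_1,\\
\partial_\v \widehat \Z(\k,\v) &= i\k\, \widehat \Z(\k,\v) - \widehat G_2(\k,\v),\qquad \widehat \Z(\k,0)=\widehat \z(\k).
\end{align*}
Solving the last equation in $\v$ by variation of constants gives
\begin{equation*}
\widehat \Z(\k,\v)=e^{i\k \v}\widehat \z(\k)-\int_0^\v e^{i\k(\v-\v')}\widehat G_2(\k,\v')\,d\v';
\end{equation*}
substituting at $\v=\pm 1$ into the second equation and eliminating $\widehat \xx$ via the first yields a scalar equation $c^{-2}D(\k)\,\widehat \z(\k)=\widehat N(\k)$, where
\begin{equation*}
\widehat N(\k) = -i\k\,\widehat G_0(\k) - \widehat G_1(\k) + \gamma\tau^2\int_0^1\bigl[e^{i\k(1-\v)}\widehat G_2(\k,\v) - e^{-i\k(1-\v)}\widehat G_2(\k,-\v)\bigr]\,d\v.
\end{equation*}

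The crucial algebraic step is to verify $\widehat N(\pm \k_0)=0$. Expanding $e^{\pm i\k_0(1-\v)}$ into sines and cosines, the integral in $\widehat N(\k_0)$ splits into two pieces that equal $i\k_0\widehat G_0(\k_0)$ (via~\eqref{eq: G_0} at frequency $\k_0$) and $\widehat G_1(\k_0)$ (via~\eqref{eq: G_1}), so $\widehat N(\k_0)$ telescopes to zero; the case $\k=-\k_0$ follows by complex conjugation. Since $\pm \k_0$ are the only real zeros of $D$ and $p_0>0$ by \eqref{eq: p_0}, the Fourier symbol $\widehat \z(\k)=c^2\widehat N(\k)/D(\k)$ extends analytically to the strip $\{|\Im \k|<p_0\}$, with polynomial growth control in $|\k|$. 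The associated $\widehat \Z(\k,\v)$, obtained by inserting $\widehat \z$ into the explicit $\v$-integral, depends on $\v$ only through $e^{i\k\v}$ and a bounded analytic kernel, hence defines a family of symbols parametrised by $\v\in[-1,1]$ that satisfies the analyticity and growth hypotheses of Proposition~\ref{prop: adaptation}.

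To translate these Fourier bounds into the weighted space $E^\nu_0$, I would split $c^2/D(\k) = 1/f(\k) + \widehat H(\k)$ as in the proof of Proposition~\ref{prop: on the exponential decay}, with $f(\k)$ a quadratic polynomial vanishing at $\pm\k_0$ with derivatives matching those of $D/c^2$ there; the singular part produces a variation-of-constants contribution, handled as in that proof using the identities arising from $\widehat N(\pm\k_0)=0$, while the remainder satisfies $(1+|\k|^2)|\widehat H(\k)|\leq \mathrm{const}$ on strips $|\Im\k|\leq \delta<p_0$. Proposition~\ref{prop: adaptation} (the derivative-gaining form) then produces a bounded linear map $\widetilde K$ from the subspace of $E^\nu_0(C^1[-1,1])$ satisfying~\eqref{eq: necessary cdn G2} into $E^\nu_0(\DD)$, with operator norm continuous in $\nu\in[0,p_0)$; this establishes claims~1 and~4. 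Claim~2 follows because $\widehat U$ vanishes on the centre eigenspace at every $\k$ by construction, so $U(t)\in Q_h\DD$ and $\partial_tU(t)\in Q_h\HH$ pointwise. For uniqueness in claim~3, the difference of two solutions lies in $E^\nu_0(Q_h\DD)\cap C^1(\RR,Q_h\HH)$ and solves the homogeneous equation; the exponential decay permits the Fourier transform, which localises its support at $\pm\k_0$ (the only purely imaginary spectrum), and the $Q_h$-constraint excludes these modes, forcing $\widetilde U\equiv 0$. The main obstacle is the algebraic bookkeeping showing $\widehat N(\pm\k_0)=0$ directly from \eqref{eq: G_0}, \eqref{eq: G_1} and \eqref{eq: necessary cdn G2}; once that identity is verified, the rest is a direct transcription of the Fourier-analytic argument of~\cite{Iooss2000a} via Proposition~\ref{prop: adaptation}.
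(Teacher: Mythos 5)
Your strategy matches the paper's at its core: eliminate $\widehat \xx$ and $\widehat \Z$ to obtain a scalar equation of the form $c^{-2}D(\k)\widehat \z(\k)=\widehat N(\k)$, check that $\widehat N(\pm\k_0)=0$ follows by Fourier-transforming the pointwise constraints \eqref{eq: G_0} and \eqref{eq: G_1}, decompose $c^2/D = 1/f + \widehat H$ with $\widehat H$ analytic and decaying, and then invoke Proposition~\ref{prop: adaptation}. (The paper performs the elimination in physical space first, arriving at \eqref{eq: second equivalent} and then \eqref{eq: equation with G}, and only then passes to Fourier variables in Proposition~\ref{prop: sln of equation with G}; that is a stylistic difference.) Your cancellation computation for $\widehat N(\pm\k_0)$ is correct and captures the key algebraic point.

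However, there are three genuine gaps you should close.

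\textbf{(a) The term $i\k\widehat G_0$.} Your $\widehat N$ contains $i\k\widehat G_0$, i.e.\ the Fourier transform of $\partial_t G_0$. The theorem only assumes $G_2\in E_0^\nu(C^1[-1,1])$; via \eqref{eq: G_0} this yields $G_0\in E_0^\nu(\RR)$ but \emph{not} $G_0\in E_1^\nu(\RR)$, since no $t$-regularity of $G_2$ is assumed. So $\widehat N$ is not the transform of a function in $E^\nu_0$, and one cannot quote Proposition~\ref{prop: adaptation} directly. The paper first proves the scalar bound under the stronger hypothesis $G_2\in E^\nu_1(C[-1,1])$, uses the smoothing Lemma~\ref{lemma: widetilde G} to write $\partial_t G_2 = \partial_t^2\widetilde G_2 + \widetilde{\widetilde G}_2$ so that the resulting operator norm depends only on $\|G_2\|_{E^\nu_0(C[-1,1])}$, and then removes the extra derivative by a density argument in the proof of Theorem~\ref{thm: solution U}. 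Without something of this kind (or an explicit absorption of the extra factor $i\k$ into the symbol, together with the corresponding integration by parts in the variation-of-constants part), claims 1 and 4 are not established for the stated function space.

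\textbf{(b) Extension to $\nu\in[0,p_0)$.} Your Fourier transform in $t$ makes sense only when $\nu<0$, since for $\nu\geq 0$ the data and the solution are allowed to grow like $e^{\nu|t|}$, which is not even tempered. Claim~4 explicitly requires the operator-norm bound on $[0,p_0)$. The paper bridges this via a truncation: it replaces $G_2$ by $\zeta(\cdot/n)G_2$ (lying in decaying weighted spaces), shows these form a Cauchy sequence in a slightly weaker weight, and passes to the limit. Some such device is necessary; your proposal says nothing about it.

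\textbf{(c) Uniqueness for $\nu\geq 0$.} Your uniqueness sketch ("the exponential decay permits the Fourier transform, which localises its support at $\pm\k_0$") likewise fails for $\nu\geq 0$, for the same growth reason. The paper instead uses a duality argument: test a homogeneous solution $\z\in E^\nu_1(\RR)\cap C^2(\RR)$ against $L\widetilde\z$ for $\widetilde\z\in H^2(\RR)\cap E_1^{\nu_-}(\RR)$ with $\nu_-<-|\nu|$, use the surjectivity from Proposition~\ref{prop: on the exponential decay}, and conclude $\z\in\text{span}\{\cos(\k_0\cdot),\sin(\k_0\cdot)\}$, which the $Q_h$-constraint then excludes. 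You should replace your distributional-Fourier argument with something along these lines.

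In summary: the reduction to the scalar equation, the verification of $\widehat N(\pm\k_0)=0$, and the decomposition $c^2/D=1/f+\widehat H$ are the right ingredients, but the proposal as written is restricted to $\nu<0$ and implicitly assumes more $t$-regularity of $G$ than the theorem provides.
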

We shall prove this theorem at the end of this appendix.  First we state a lemma, the proof of which is elementary and hence
omitted.

\begin{lemma}
  \label{lemma: widetilde G}
  Let $G_2\in E^{\nu}_0(C([-1,1]))$ with $\nu\in(-p_0,p_0)$.  For each $\v\in[-1,1]$, let the function $\widetilde G_2(\cdot,\v)$
  be defined as follows.  Let $\kappa\in C^\infty_0(\RR,[0,\infty))$ be such that $\int_\RR\kappa(t)dt=1$ and set
  \begin{align*}
    \widetilde G_2(t,\v)&=\cosh(p_0t)\Big(\int_{-\infty}^tG_2(u,\v)/\cosh(p_0u)\,du
    \\& \qquad{}-\int_{-\infty}^t\kappa(u)du\int_\RR  G_2(u,\v)/\cosh(p_0u)\,du\Big)
    \\&=\cosh(p_0t)\Big(-\int_{t}^{\infty} G_2(u,\v)/\cosh(p_0u)\,du
    \\& \qquad{}+\int_{t}^{\infty}\kappa(u)du\int_\RR  G_2(u,\v)/\cosh(p_0u)\,du\Big).
  \end{align*}
  Then
  \begin{enumerate}
  \item $\widetilde G_2\in E^\nu_1(C([-1,1]))$,
  \item 
    $\displaystyle
    \partial_t\widetilde G_2(t,\v)=
    G_2(t,\v)
     $\\\mbox{}\hfill$\displaystyle
    -\cosh(p_0t)\kappa(t)\int_\RR  G_2(u,\v)/\cosh(p_0u)\,du
    +p_0\tanh(p_0t)\widetilde G_2(t,\v),
    $
  \item
    $G_2-\partial_t \widetilde G_2\in E^\nu_1(C([-1,1]))$,
  \item
    $\displaystyle
    \widetilde{\widetilde G}_2(t):=
    \partial_t\Big( G_2(t)-\partial_t\widetilde G_2(t,\v)\Big)
    \\=
    \{p_0\sinh(p_0t)\kappa(t)+\cosh(p_0t)\kappa'(t)\}\int_\RR  G_2(u,\v)/\cosh(p_0u)\,d\v
    \\\qquad{}-(p_0/\cosh(p_0t))^2\widetilde G_2(t,\v)
    -p_0\tanh(p_0t) \partial_t \widetilde G_2(t,\v)
    \in E^\nu_0(C([-1,1])),
    $
  \item the linear maps 
    $G_2\ni E^\nu_0(C([-1,1]))\rightarrow \widetilde G_2\in E^{\nu}_1(C([-1,1]))$
    and
    $G_2\ni E^\nu_0(C([-1,1]))\rightarrow \widetilde{\widetilde G}_2\in E^{\nu}_0(C([-1,1]))$
    are bounded.
  \end{enumerate}
\end{lemma}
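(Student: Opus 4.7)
The plan is to proceed by direct calculation from the two representations of $\widetilde G_2$, using the identity $\int_\RR \kappa = 1$ to reconcile them and to exploit a cutoff effect at $\pm\infty$. First I would verify that the two displayed expressions for $\widetilde G_2(t,v)$ are equal: their difference is $\cosh(p_0 t)\bigl(\int_\RR G_2(u,v)/\cosh(p_0 u)\,du - (\int_\RR\kappa)\int_\RR G_2(u,v)/\cosh(p_0 u)\,du\bigr) = 0$. This equivalence is the point of the construction: for $t$ larger than $\max\operatorname{supp}\kappa$, the first form gives $\widetilde G_2(t,v) = -\cosh(p_0 t)\int_t^\infty G_2(u,v)/\cosh(p_0 u)\,du$ because $\int_t^\infty \kappa = 0$; symmetrically for $t < \min\operatorname{supp}\kappa$.

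To prove item~1, I would use these \emph{tail} representations to estimate the exponential decay. For $t$ beyond $\operatorname{supp}\kappa$, a direct bound gives
\begin{equation*}
  |\widetilde G_2(t,v)| \leq \|G_2\|_{E^\nu_0(C[-1,1])}\,\cosh(p_0 t)\int_t^\infty \frac{e^{\nu|u|}}{\cosh(p_0 u)}\,du \leq \frac{C}{p_0-|\nu|}\,e^{\nu|t|} \|G_2\|_{E^\nu_0(C[-1,1])},
\end{equation*}
since $\nu\in(-p_0,p_0)$; the growing factor $\cosh(p_0 t)\sim e^{p_0|t|}/2$ is exactly absorbed by the decay $e^{(\nu-p_0)|t|}$ of the truncated integral. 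On the compact interval containing $\operatorname{supp}\kappa$, a crude bound suffices. Continuity in $v$ follows by dominated convergence since $G_2\in E^\nu_0(C[-1,1])$. The same estimate applied to the pointwise derivative (computed in item~2) then establishes $\widetilde G_2\in E^\nu_1(C[-1,1])$.

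Item~2 is a straightforward application of the fundamental theorem of calculus to the first representation:
\begin{equation*}
  \partial_t\widetilde G_2 = p_0\tanh(p_0 t)\,\widetilde G_2 + \cosh(p_0 t)\cdot\frac{G_2(t,v)}{\cosh(p_0 t)} - \cosh(p_0 t)\kappa(t)\!\int_\RR\!\frac{G_2(u,v)}{\cosh(p_0 u)}\,du,
\end{equation*}
which rearranges to the claimed identity. For item~3, observe that
\begin{equation*}
  G_2 - \partial_t\widetilde G_2 = \cosh(p_0 t)\kappa(t)\!\int_\RR\!\frac{G_2(u,v)}{\cosh(p_0 u)}\,du - p_0\tanh(p_0 t)\,\widetilde G_2;
\end{equation*}
the first term is compactly supported in $t$ (hence in $E^\nu_1$ for any $\nu$) and depends continuously on $v$, while the second lies in $E^\nu_1(C[-1,1])$ because $\tanh(p_0\cdot)$ and its derivative $p_0/\cosh^2(p_0\cdot)$ are bounded and $\widetilde G_2\in E^\nu_1(C[-1,1])$. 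Item~4 is obtained by simply differentiating this expression, applying the product rule and substituting the identity of item~2; the resulting formula is exactly the one stated, and each summand lies in $E^\nu_0(C[-1,1])$ by the same weight bookkeeping. Finally, linearity of all the maps in $G_2$ is manifest from the integral formulas, and the operator-norm bounds come from the explicit constant $C/(p_0-|\nu|)$ appearing above, so item~5 follows.

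The only mildly delicate point, and the one I would write out carefully, is the bookkeeping of the exponential weight $e^{\nu|t|}$ across the cancellation between the growth of $\cosh(p_0 t)$ and the decay of the tail integrals; everything else is routine calculus. I would not expect a substantive obstacle.
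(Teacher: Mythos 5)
Your proof is correct. The paper itself omits the proof, stating only that it is ``elementary,'' and your direct computation supplies exactly the kind of argument intended: the two representations agree because $\int_\RR \kappa = 1$, the tail representation beyond $\operatorname{supp}\kappa$ shows the $\cosh(p_0 t)$ growth is exactly compensated by the $e^{(\nu-p_0)|t|}$ decay of the tail integral (using $|\nu|<p_0$), items 2--4 follow by the fundamental theorem of calculus and the product rule applied to the combination $\cosh(p_0 t)\kappa(t)\!\int\cdots - p_0\tanh(p_0 t)\widetilde G_2$ (which is differentiable even though $G_2$ itself need not be), and the explicit constant $C/(p_0-|\nu|)$ gives the operator-norm bounds of item 5. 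The bookkeeping you flag as delicate is indeed the crux, and you have handled it correctly; your treatment of joint continuity in $(t,v)$ via dominated convergence, though stated tersely, is adequate because $G_2 \in C(\RR\times[-1,1])$ by the definition of $E^\nu_0(C[-1,1])$.
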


Let us consider the last component of equation~\eqref{eq: equation for U}.  
\begin{proposition}
  \label{prop: exists and unique}
  Given $G\in E_0^\nu(Q_h \DD)$, let $U=(\z,\xx,\Z) \in E_0^{\nu}(\DD)\cap C^1(\RR,\HH)$ be a solution to~\eqref{eq:
    equation for U}. Then $\z\in E^\nu_1(\RR)$ and $\Z$ solves the equation
  \begin{equation*}
    \partial_t \Z=\partial_\v\Z+G_2,~~\Z(t,0)=\z(t),
  \end{equation*}
  which has the unique solution $\Z\in E^{\nu}_0(C^1([-1,1]))\cap C^1(\RR,C([-1,1]))$ given by
  \begin{equation*}
    \Z(t,\v)=\z(t+\v)-\int_{t}^{t+\v}G_2(\si,t+\v-\si)d\si.
  \end{equation*}
  Moreover, this defines an affine map $G_2\rightarrow \Z$ such that
  \begin{equation}
    \label{eq: norm Z}
    ||\Z||_{E^{\nu}_0(C^1([-1,1]))} \leq\Const
    \left(||\z||_{E^\nu_1(\RR)}
    +||G_2||_{E^{\nu}_0(C^1([-1,1]))}  \right).
  \end{equation}
\end{proposition}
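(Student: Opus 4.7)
The plan is to extract the three scalar equations contained in (\ref{eq: equation for U}), derive the $\z$ regularity from the first one, and solve the $\Z$ equation by the method of characteristics.

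First I would record the scalar equations. The first row of $L_{\gamma,\tau}$ gives $\dot{\z}(t) = \xx(t) + G_0(t)$. Since $\xx \in E_0^\nu(\RR)$ (as a component of $U \in E_0^\nu(\DD)$) and $G_0 \in E_0^\nu(\RR)$ (as a component of $G \in E_0^\nu(Q_h \DD) \subset E_0^\nu(\DD)$), this immediately yields $\dot \z \in E_0^\nu(\RR)$, hence $\z \in E_1^\nu(\RR)$.

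Second, the third row gives the transport equation
\begin{equation*}
\partial_t \Z(t,\v) - \partial_\v \Z(t,\v) = G_2(t,\v),
\end{equation*}
supplemented by the boundary condition $\Z(t,0) = \z(t)$ built into $U(t)\in\DD$. The characteristics are the lines $t+\v = c$: along the parametrisation $\si \mapsto (\si, c-\si)$, the left-hand side becomes $\frac{d}{d\si}\Z(\si, c-\si)$, so integrating from $\si = c$ (where $\v = 0$ and $\Z = \z$) down to $\si = t$ and then setting $c = t+\v$ gives
\begin{equation*}
\Z(t,\v) = \z(t+\v) - \int_t^{t+\v} G_2(\si, t+\v-\si)\, d\si.
\end{equation*}
One checks $\Z(t,0) = \z(t)$ trivially. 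Uniqueness in $E_0^\nu(C^1[-1,1]) \cap C^1(\RR, C[-1,1])$ follows by applying the identical characteristic argument to the difference of two solutions.

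Third, the estimate. Differentiating the explicit formula with respect to $\v$ gives
\begin{equation*}
\partial_\v \Z(t,\v) = \z'(t+\v) - G_2(t+\v, 0) - \int_t^{t+\v} (\partial_2 G_2)(\si, t+\v-\si)\, d\si,
\end{equation*}
where $\partial_2$ denotes the derivative in the second argument. Because $|\v| \leq 1$, the weights $e^{-\nu|t|}$, $e^{-\nu|t+\v|}$ and $e^{-\nu|\si|}$ for $\si$ in the interval between $t$ and $t+\v$ are all equivalent up to a constant depending only on $\nu$. Estimating each term of the formulae for $\Z$ and $\partial_\v\Z$ separately in this way and taking the supremum over $(t,\v) \in \RR \times [-1,1]$ yields (\ref{eq: norm Z}). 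The affine dependence $G_2 \mapsto \Z$ is manifest from the formula. No step is genuinely hard: this is a finite-speed transport along characteristics of length at most $1$ in the $\v$-direction, so no accumulation of the exponential weight occurs; the only minor care is to verify that the formula actually produces a $C^1$-function of $\v$, which holds because $G_2 \in E_0^\nu(C^1[-1,1])$ is built into the hypothesis $G \in E_0^\nu(\DD)$.
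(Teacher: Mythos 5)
Your proof is correct and follows essentially the same route as the paper: the paper also verifies the formula directly and proves uniqueness by the coordinate change $\widetilde\Z(t,\v)=\Z(t-\v,\v)$, which is precisely straightening the characteristics as you do. You supply a few details the paper only calls ``clear'' (the derivation of $\z\in E^\nu_1(\RR)$ from the first row, and the explicit computation of $\partial_\v\Z$ for the estimate), but the mathematical content is the same.
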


\begin{proof}
  Clearly, the given function $\Z$ is a solution and the estimate holds for this $\Z$.  To check uniqueness, it is enough to
  consider the case $\z=0$ and $G_2=0$.  If $\Z$ is a solution, let $\widetilde \Z(t,\v)= \Z(t-\v,\v)$, that is,
  $\Z(t,\v)=\widetilde \Z(t+\v,\v)$.  Then $\Z$ and $\widetilde \Z$ are $C^1(\RR\times[-1,1])$, and
  $\partial_\v \widetilde \Z(t,\v)=0$ with $\widetilde \Z(t,0)=0$.  Hence $\widetilde \Z=0$.
\end{proof}

Thanks to Proposition~\ref{prop: exists and unique}, \eqref{eq: equation for U} becomes
\begin{align*}
    \partial_t\z&=\xx+G_0,\\
  \partial_t\xx
                &=\gamma\tau^2\Delta_D \z-\tau^2\z \\
                & \qquad{} -\gamma\tau^2\int_t^{t+1}G_2(\v,t+1-\v)d\v
                  -\gamma\tau^2\int_t^{t-1}G_2(\v,t-1-\v)d\v+G_1\\
    &=\gamma\tau^2\Delta_D \z-\tau^2\z-\gamma\tau^2\int_0^{1}G_2(t+1-\v,\v)d\v \\
                & \qquad{}
    -\gamma\tau^2\int_0^{-1}G_2(t-1-\v,\v)d\v+G_1.
\end{align*}
Thus, we need to find $\z\in E^{\nu}_1(\RR)$ such that $\partial_t\z-G_0\in C^1(\RR)$ and solving
\begin{multline}
  \partial_t (\partial_t\z-G_0)
  =\gamma\tau^2\Delta_D \z-\tau^2\z-\gamma\tau^2\int_0^{1}G_2(t+1-\v,\v)d\v \\
  -\gamma\tau^2\int_0^{-1}G_2(t-1-\v,\v)d\v+G_1\,.
  \label{eq: second equivalent}
\end{multline}
If in addition $G_2\in E^\nu_1(C[-1,1])$, \eqref{eq: G_0} implies $G_0\in E^\nu_1(\RR)$ and the equation reads (for
$u\in E_1^\nu(\RR)\cap C^2(\RR)$ now)
\begin{align}
  L\z&:=
       \gamma^{-1}\tau^{-2}\z''-\Delta_D \z+\gamma^{-1}\z
       \notag\\
     & =
       -\int_0^{1}G_2(t+1-\v,\v)d\v
       -\int_0^{-1}G_2(t-1-\v,\v)d\v \notag \\
     & \qquad{}+\gamma^{-1}\tau^{-2}G_1
       +\gamma^{-1}\tau^{-2}G_0'
       \notag\\&\stackrel{
                 \eqref{eq: G_0},
                 \eqref{eq: G_1}
                 }=-\int_0^1G_2(t+1-\v,\v)d\v+\int_0^1G_2(t-1+\v,-\v)d\v
                 \notag\\&\qquad
                           +\int_0^1\cos(\k_0(1-\v))[G_2(t,\v)-G_2(t,-\v)]d\v
                           \notag\\&\qquad
                                     +\k_0^{-1}\int_0^1\sin(\k_0(1-\v))[\partial_tG_2(t,\v)+\partial_tG_2(t,-\v)]d\v
                                     =:Q(G_2).
                                     \label{eq: equation with G} 
\end{align}

\begin{proposition}
  \label{prop: sln of equation with G}
  If $\nu\in(-p_0,0)$ and $G_2\in E^\nu_1(C([-1,1]))$, then equation \eqref{eq: equation with G} has a solution
  $\z\in E^{\nu}_1(\RR) \cap C^2(\RR)$ such that
  \begin{equation*}
    ||\z||_{E^{\nu}_1(\RR)}\leq \Const||G_2||_{ E^\nu_0(C([-1,1]))}
  \end{equation*}
  uniformly in $\nu$ on compact subsets of $(-p_0,0)$.
\end{proposition}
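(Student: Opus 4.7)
The plan is to recognise the operator on the left-hand side of \eqref{eq: equation with G} as the operator $L$ of \eqref{eq:L} in disguise, and then invoke Proposition~\ref{prop: on the exponential decay}. Indeed, using $\gamma^{-1}\tau^{-2}=c^2$ and $\gamma^{-1}=\alpha$ (see \eqref{eq:taugamma}), we have $\gamma^{-1}\tau^{-2}\z''-\Delta_D\z+\gamma^{-1}\z = c^2\z''-\Delta_D\z+\alpha\z$, which is precisely $L$ with Fourier symbol $D(\k)$. The proposition will then follow once we verify that the right-hand side $Q(G_2)$ lies in a space to which Proposition~\ref{prop: on the exponential decay} (or Proposition~\ref{prop: on the exponential decay bis}) applies, and that it satisfies the orthogonality conditions~\eqref{eq: int=0}.

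The key step — and the main obstacle — is the verification of the two orthogonality conditions $\int_\RR Q(G_2)(\s)\sin(\k_0\s)d\s=0$ and $\int_\RR Q(G_2)(\s)\cos(\k_0\s)d\s=0$. These should follow from the algebraic constraint \eqref{eq: necessary cdn G2} imposed on $G_2$ (which encodes $G\in Q_h\DD$). The calculation is a direct (if tedious) Fubini exchange and change of variables: for each of the four contributions to $Q(G_2)$, expand $\sin(\k_0(t+1-\v))$ and $\sin(\k_0(t-1+\v))$ by the addition formulas, collect terms weighted by $\sin(\k_0(1-\v))$ and $\cos(\k_0(1-\v))$, and integrate by parts in $t$ to transfer $\partial_t G_2$ onto $\sin(\k_0 t)$. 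The compatibility relation \eqref{eq: necessary cdn G2} then produces the cancellation that forces the Fourier transform $\widehat Q(G_2)$ to vanish at $\pm\k_0$. The same strategy treats the $\cos(\k_0\s)$ condition.

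To obtain the bound in terms of $\|G_2\|_{E^\nu_0(C[-1,1])}$ rather than $\|G_2\|_{E^\nu_1(C[-1,1])}$ — the point being that $Q(G_2)$ contains $\partial_t G_2$ in its last term — I decompose
\begin{equation*}
Q(G_2)=R(G_2)+\partial_t S(G_2),
\end{equation*}
where $S(G_2)(t):=\k_0^{-1}\int_0^1 \sin(\k_0(1-\v))[G_2(t,\v)+G_2(t,-\v)]d\v$ and $R(G_2)$ is the sum of the first three terms in \eqref{eq: equation with G}; both $R(G_2)$ and $S(G_2)$ depend on $G_2$ without derivatives and are clearly bounded in $E^\nu_0(\RR)$ by $\|G_2\|_{E^\nu_0(C[-1,1])}$. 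In Fourier, the solution reads $\widehat \z(\k)=\widehat R(\k)/D(\k)+i\k\,\widehat S(\k)/D(\k)$. Following the splitting used in the proof of Proposition~\ref{prop: on the exponential decay}, write $1/D(\k)=1/f(\k)+\widehat H(\k)$, where $f$ accounts for the simple zeros of $D$ at $\pm\k_0$ and $\widehat H$ is analytic in the strip $\{|\Im\k|<p_0\}$ with $|\k^2\widehat H(\k)|$ uniformly bounded. The observation that makes the weaker norm suffice is that $i\k/D(\k)$, and hence $i\k \widehat H(\k)$ and $i\k/f(\k)$ (away from a bounded region), inherit the same decay and analyticity, so Proposition~\ref{prop: adaptation} applies to both the $\widehat R/D$ piece and to the $i\k\widehat S/D$ piece, giving a bound controlled solely by $\|G_2\|_{E^\nu_0(C[-1,1])}$, uniformly in $\nu$ on compact subsets of $(-p_0,0)$.

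Finally, once $\z\in E^\nu_1(\RR)$ has been produced, the $C^2$-regularity is immediate from the equation: $c^2\z''=Q(G_2)+\Delta_D\z-\alpha\z$, and each term on the right is continuous since $\z\in C^1(\RR)$, $\Delta_D\z$ is continuous, and $Q(G_2)$ is continuous thanks to $G_2\in E^\nu_1(C[-1,1])$.
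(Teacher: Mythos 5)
The proposal has two issues: a minor misattribution and a genuine gap that kills the claimed $E^\nu_1$ bound.

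\textbf{Minor point.} Your recognition that the operator is $L$ with symbol $D(\k)$ is correct, and it is also true that $\widehat{Q(G_2)}$ vanishes at $\pm\k_0$. However, this vanishing does \emph{not} come from the compatibility relation \eqref{eq: necessary cdn G2}. If you compute the Fourier transform of $Q(G_2)$ directly (as the paper does), you find that the coefficients multiplying $\widehat G_2(\k,\pm\v)$ are $\mp e^{\pm i\k(1-\v)}+\cos(\k_0(1-\v))\pm i\k_0^{-1}\k\sin(\k_0(1-\v))$, and these vanish identically at $\k=\pm\k_0$ as an algebraic identity, with no constraint on $G_2$. Indeed, Proposition~\ref{prop: sln of equation with G} is stated for \emph{all} $G_2\in E^\nu_1(C[-1,1])$, so the orthogonality must hold unconditionally. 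Invoking \eqref{eq: necessary cdn G2} here is a red herring.

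\textbf{The genuine gap.} The hard part of the statement is the bound $\|\z\|_{E^\nu_1(\RR)}\leq C\|G_2\|_{E^\nu_0(C[-1,1])}$ --- a control by the $E^\nu_0$ norm, even though $Q(G_2)$ contains $\partial_t G_2$. Your plan is to write $Q=R+\partial_t S$ with $R,S$ controlled by $\|G_2\|_{E^\nu_0}$, and then to absorb the $\partial_t$ into the convolution kernel via $\widehat{\partial_t S}=i\k\widehat S$ and the splitting $1/D=1/f+\widehat H$. You then assert that $i\k/D$, and hence $i\k\widehat H$, ``inherit the same decay.'' This is false: since $D(\k)\sim -c^2\k^2$ and $\widehat H=1/D-1/f=O(|\k|^{-2})$, we have $i\k\widehat H(\k)=O(|\k|^{-1})$, not $O(|\k|^{-2})$. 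Proposition~\ref{prop: adaptation} requires $|\k^2\widehat H(\k)|$ bounded to conclude the $E^\nu_0\to E^\nu_1$ estimate; for your kernel $|\k^2\cdot i\k\widehat H(\k)|=O(|\k|)$ is unbounded, so you only get a map $E^\nu_0\to E^\nu_0$. Your decomposition therefore yields $\z\in E^\nu_0(\RR)$ controlled by $\|G_2\|_{E^\nu_0}$, but not the $E^\nu_1$ estimate the proposition asserts (nor, indeed, the $C^2(\RR)$ regularity which you derive \emph{from} $\z\in E^\nu_1$).

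The paper avoids this loss of one power of $\k$ by a different device: Lemma~\ref{lemma: widetilde G} constructs a ``bounded antiderivative'' $\widetilde G_2$ and a remainder $\widetilde{\widetilde G}_2$, both controlled by $\|G_2\|_{E^\nu_0(C[-1,1])}$, with $\partial_t G_2=\partial_t^2\widetilde G_2+\widetilde{\widetilde G}_2$, hence $i\k\widehat G_2=-\k^2\sF[\widetilde G_2]+\sF[\widetilde{\widetilde G}_2]$. The dangerous $-\k^2\sF[\widetilde G_2]$ contribution is then packaged into the auxiliary function $G_3$ of~\eqref{eq: def G_3}, and one considers $L(\z-\gamma\tau^2 G_3)$: the $\k^2\widehat G_3$ terms cancel (because $\gamma\tau^2 D(\k)=-\k^2+O(1)$), leaving kernels $\widehat H_j$ that genuinely satisfy $|\k^2\widehat H_j|$ bounded, so Proposition~\ref{prop: adaptation} delivers the full $E^\nu_1$ bound. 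You will need some variant of this trick; the naive ``move the derivative to the kernel'' step does not close.
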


\emph{Remarks.}  Observe that, in contrast with the previous results, it is assumed that $\nu$ is negative.  Moreover, we require
$G_2\in E^\nu_1(C([-1,1]))$ in the hypotheses and thus~\eqref{eq: equation with G} makes sense. However, in the conclusion, the
weaker norm $||\cdot||_{E^\nu_0(C([-1,1]))}$ is used.  As the norm $||G_2||_{E^\nu_0(C^1([-1,1]))}$ is needed in~\eqref{eq: norm
  Z} to control $||\Z||_{E^{\nu}_0(C^1([-1,1]))}$, in the end the norm in the statement of Theorem~\ref{thm: solution U} is
$||G||_{E^\nu_0(C^1([-1,1]))}$.

\begin{proof}
As $\nu\in(-p_0,0)$, we can consider the Fourier transform $\widehat G_2(k,\v)$ of $G_2(t,\v)$ with respect to $t$. The
Fourier transform $\sF[Q(G_2)]$ of the right-hand side of~\eqref{eq: equation with G} is
\begin{align*}
  & \int_0^1\Big(
    -e^{i\k(1-\v)}
  +\cos(\k_0(1-\v))
  +\k_0^{-1}\sin(\k_0(1-\v))i\k
  \Big)\widehat G_2(\k,\v)d\v
  \\ &\qquad{}+\int_0^1\Big(
  e^{i\k(-1+\v)}
  -\cos(\k_0(1-\v))
  +\k_0^{-1}\sin(\k_0(1-\v))i\k
  \Big)\widehat G_2(\k,-\v)d\v
  \\ &
  =\int_0^1\Big(
  \{\cos(\k_0(1-\v))-\cos(\k(1-\v))\}
  \\ &\qquad\qquad
  {}+i\{\k_0^{-1}\sin(\k_0(1-\v))\k-\sin(\k(1-\v))\}
  \Big)\widehat G_2(\k,\v)d\v
  \\&\qquad{}+\int_{-1}^0\Big(
  \{\cos(k(1+\v))-\cos(\k_0(1+\v))\}
  \\ &\qquad\qquad
  +i\{\k_0^{-1}\sin(\k_0(1+\v))\k-\sin(\k(1+\v))\}
  \Big)\widehat G_2(\k,\v)d\v
  \\&
  =\int_{-1}^1\Big(
  \text{sgn}(\v)\{\cos(\k_0(1-|\v|))-\cos(\k(1-|\v|))\}\widehat G_2(\k,\v)
  \\ &\qquad\qquad{}+\{\sinc(\k_0(1-|\v|))-\sinc(\k(1-|\v|))\}
  (1-|\v|)i\k\widehat G_2(\k,\v)\Big)d\v,
\end{align*}
where $\text{sgn}(0)=0$ and $\sinc$ is the cardinal sine function, i.e., $\sinc(\k)=\sin(\k)/\k$ ($=1$ at $\k=0$).

Let $\widetilde G_2(\cdot,\v)$ and $\widetilde{\widetilde G}_2$ be as in Lemma~\ref{lemma: widetilde G}.  As
$G_2\in E^\nu_1(C([-1,1]))$, $\widetilde G_2\in E^\nu_2(C([-1,1]))$. Because of
$\partial_tG_2=\partial_t^2\widetilde G_2+\widetilde{\widetilde G}_2$, we have for $\nu\in (-p_0,0)$
\begin{equation*}
  i\k\widehat G_2=-\k^2\sF[\widetilde G_2]+\sF[\widetilde{\widetilde G}_2]\,,
\end{equation*}
where the Fourier transforms are taken with respect to the first variable only.  Hence
\begin{multline*}
  \sF[Q(G_2)]=
  \int_{-1}^1\Big(
  \text{sgn}(\v)\{\cos(\k_0(1-|\v|))-\cos(\k(1-|\v|))\}\widehat G_2(\k,\v)+\\
  \{\sinc(\k_0(1-|\v|))-\sinc(\k(1-|\v|))\}
  (1-|\v|)\{-\k^2\sF[\widetilde G_2](\k,\v)
  +\sF[\widetilde{\widetilde{G}}_2](\k,\v)\}\Big)d\v.
\end{multline*}
Define
\begin{equation}
  \label{eq: def G_3}
  \widehat G_3(\k):=
  \int_{-1}^1\{\sinc(\k_0(1-|\v|))-\sinc(\k(1-|\v|))\}
  (1-|\v|)\sF[\widetilde G_2](\k,\v)d\v.
\end{equation}
Clearly, $\widehat G_3(\pm\k_0)=0$.
At the end of the proof, we shall check that $G_3\in E_2^\nu(\RR)$.  

To analyse the left-hand side of~\eqref{eq: equation with G}, we consider
\begin{equation*}
L(\z-\gamma\tau^2G_3)=
  \gamma^{-1}\tau^{-2}(\z-\gamma\tau^2G_3)''
  -\Delta_D (\z-\gamma\tau^2G_3)+\gamma^{-1}(\z-\gamma\tau^2G_3),
\end{equation*}
whose Fourier transform, using~\eqref{eq: def G_3}, equals to
\begin{multline*}
  \sF[L(\z-\gamma\tau^2G_3)]=\sF[Q(G_2)]-\gamma\tau^2 D(\k)\widehat G_3(\k)
  =
  \gamma\tau^2(2\cos(\k)-2 -\gamma^{-1})\widehat G_3
  \\
  +\int_{-1}^1\Big(
  \text{sgn}(\v)\{\cos(\k_0(1-|\v|))-\cos(\k(1-|\v|))\}\widehat G_2(\k,\v)
  \\+\{\sinc(\k_0(1-|\v|))-\sinc(\k(1-|\v|))\}
  (1-|\v|)\sF(\widetilde{\widetilde{G}}_2)(k,\v)\}\Big)d\v.
\end{multline*}
Note that by construction, the Fourier transform above vanishes at $\k=\pm\k_0$, the only real roots of the dispersion function
$D$.  Hence, by~\eqref{eq: def G_3},
\begin{multline}
  \sF[\z-\gamma\tau^2G_3](\k)=\int_{-1}^1 \widehat H_1(\k,\v)\widehat G_2(\k,\v)d\v
  +\int_{-1}^1 \widehat H_2(\k,\v)\sF(\widetilde{\widetilde G}_2)(\k,\v)d\v
  \\+\int_{-1}^1 \widehat H_3(\k,\v)\sF(\widetilde G_2)(\k,\v)d\v,
\end{multline}
where $\widehat H_j(\k,\v)$ is continuous in $(\k,\v)$ for $\v\neq 0$, the function $\k\rightarrow \widehat H_j(\k,\v)$ is
analytic in the strip $\{\k\in \CC:\Im\, \k \in(-p_0,p_0)\}$ and $(1+|\k|^2)H_j(\k,\v)$ is bounded in
$\{(\k,\v)\in \CC\times[-1,1]:\Im\, \k \in[-\delta,\delta]\}$ for $j=1,2,3$ and all $\delta\in(0,p_0)$.  For example,
\begin{equation*}
  \widehat H_1(\k,\v)=\frac{
  \text{sgn}(\v)\{\cos(\k_0(1-|\v|))-\cos(\k(1-|\v|))\}}{D(\k)}\,,
\end{equation*}
$k_0$ and $-k_0$ being removable singularities for $s\neq 0$. Again by Proposition~\ref{prop: adaptation}, the map that sends
$G_2\mapsto (G_2,\widetilde{\widetilde G}_2,\widetilde G_2) \mapsto \z-\gamma\tau^2 G_3\in E^\nu_1(\RR)\cap H^2(\RR)$ is well
defined and bounded from $\Big(E^{\nu}_{0}(C([-1,1]))\Big)^3$ to $E^{\nu}_{1}(\RR)$,
\begin{multline}
  \label{eq: eq for z}
  \z-\gamma\tau^2G_3=\int_{-1}^1  H_1(\cdot,\v)\star G_2(\cdot,\v)d\v
  +\int_{-1}^1  H_2(\cdot,\v)\star\widetilde{\widetilde G}_2(\cdot,\v)d\v
  \\+\int_{-1}^1  H_3(\cdot,\v)\star\widetilde G_2(\cdot,\v)d\v,
\end{multline}
where the convolutions are with respect to the first variable only.  Moreover $\z$ is a solution to~\eqref{eq: equation with
  G}. By Proposition~\ref{prop: adaptation},
\begin{multline*}
  ||\z-\gamma\tau^2G_3||_{E^{\nu}_1(\RR)} \\
  \leq \Const\left(||G_2||_{E^{\nu}_0(C([-1,1]))}
  +||\widetilde {\widetilde G}_2||_{E^{\nu}_0(C([-1,1]))}
  +||\widetilde G_2||_{E^{\nu}_0(C([-1,1]))}\right)
\end{multline*}
and
\begin{equation*}
  ||G_3||_{E^{\nu}_1(\RR)}
  \leq \Const_1||\widetilde G_2||_{E^{\nu}_1(C([-1,1]))}
  \leq \Const_2|| G_2||_{E^{\nu}_0(C([-1,1]))}\,
\end{equation*}
uniformly in $\nu$ in compact subsets of $(-p_0,0)$.  To see that $\Const_1$ is finite, rewrite \eqref{eq: def G_3} once more
\begin{multline*}
  \widehat G_3(\k)=
  \int_{-1}^1\sinc(\k_0(1-|\v|))
  (1-|\v|)\sF[\widetilde G_2](\k,\v)d\v
  \\-\int_{-1}^1\sinc(\k(1-|\v|))
  (1-|\v|)\sF[\widetilde G_2](\k,\v)d\v,
\end{multline*}
and observe that 
$|\sinc(\k(1-|\v|))(1-|\v|)|\leq \Const |\k|^{-1}$ on $\{(\k,\v)\in \CC\times[-1,1]: \Im\, \k\in(-p_0,p_0)\}$.  Then
Proposition~\ref{prop: adaptation} allows to transform back
\begin{multline}
  \label{eq: G_3 as G_2}
  G_3=
  \int_{-1}^1\sinc(\k_0(1-|\v|))
  (1-|\v|)\widetilde G_2(\cdot,\v)d\v
  \\-\int_{-1}^1\sF^{-1}\{\sinc(\cdot(1-|\v|))
  (1-|\v|)\}\star\widetilde G_2(\cdot,\v)d\v,
\end{multline}
with $ ||G_3||_{E^{\nu}_1(\RR)} \leq \Const_1||\widetilde G_2||_{E^{\nu}_1(C([-1,1]))}$ for a finite constant $\Const_1$.
Moreover $G_3\in E_2^\nu(\RR)$ as $\widetilde G_2\in E_2^\nu(C([-1,1]))$.
\end{proof}

\subsection*{Proof of Theorem~\protect{\ref{thm: solution U}}}
\label{sec:Proof-Theor-solut}

Proposition~\ref{prop: sln of equation with G} ensures the existence of a solution $\z\in E^\nu_1(\RR)$ to equations~\eqref{eq:
  G_0}, \eqref{eq: G_1}, \eqref{eq: second equivalent} for $G_2\in E^\nu_1(C([-1,1]))$ and $\nu\in (-p_0,0)$. However these
equations also make sense for $G_2\in E^\nu_0(C([-1,1]))$.  By an approximation procedure, the existence of a solution $\z$
to~\eqref{eq: G_0}, \eqref{eq: G_1}, \eqref{eq: second equivalent} and the estimate
\begin{equation}
  \label{eq: norm z}
  ||\z||_{E^{\nu}_1(\RR)}\leq \Const||G_2||_{ E^\nu_0(C([-1,1]))}
\end{equation}
of Proposition~\ref{prop: sln of equation with G} remain true for all $G_2\in E^{\nu}_0(C([-1,1]))$ (uniformly in $\nu$ in compact
subsets of $(-p_0,0)$). The approximation procedure thus defines a bounded linear map $G_2\mapsto \z$.  More explicitly, the
approximation can be done as follows: take $G_2\in E^{\nu}_0(C([-1,1]))$ and approximate it by a sequence
$\{G_{2,n}\}\subset E^{\nu}_1(C([-1,1]))$ that converges to $G_2$ in the Banach space $E^{\widetilde\nu}_0(C([-1,1]))$ for some
fixed $\widetilde \nu\in(\nu,0)$.  We then get a solution $\z_{\widetilde \nu}\in E^{\widetilde \nu}_1(C([-1,1]))$ such that
$||\z_{\widetilde\nu}||_{E^{\widetilde\nu}_1(\RR)}\leq \Const||G_2||_{ E^{\widetilde\nu}_0(C([-1,1]))}$.  For
$\widetilde{\widetilde \nu}\neq \widetilde \nu$, the equation $L(\z_{\widetilde{\widetilde \nu}}-\z_{\widetilde \nu})=0$ implies
that $\z_{\widetilde{\widetilde\nu}}=\z_{\widetilde \nu}$, by Proposition~\ref{prop: on the exponential decay}.  Hence we can
write $\z=\z_{\widetilde\nu}\in \cap_{s\in(\nu,0)}E^{s}_1(C([-1,1]))$. Thanks to the fact that the estimates are uniform in $\nu$
on compact subsets in $(-p_0,0)$, we get that $\z\in E^{\nu}_1(C([-1,1]))$ satisfies~\eqref{eq: norm z}.

This linear map is well-defined also when $\nu\in[0,p_0)$, the constants being in fact uniform in $\nu$ in every compact subset of
$(-p_0,p_0)$ (see~\eqref{eq: eq for z}, \eqref{eq: G_3 as G_2}, Proposition~\ref{prop: adaptation} and Lemma~\ref{lemma: widetilde
  G}) but it must be checked that $\z$ also gives rise to a solution when $\nu\in[0,p_0)$.  This can be done by a truncation that
brings the case $\nu\in[0,p_0)$ back to the case $\nu\in(-p_0,0)$.  Namely let $\nu\in[0,p_0)$, $\nu_+=(\nu+p_0)/2$ and
$\nu_-=(-p_0-\nu)/2$. Let $\zeta\in C_0^\infty(\RR,[0,\infty))$ be equal to $1$ in a neighbourhood of $0$.  Then, for
$G_2\in E_0^{\nu}(C([-1,1]))$, the sequence $\{\zeta(\cdot/n)G_2\}_{n\geq 1}\subset E_0^{\nu_-}(C([-1,1]))$ converges to $G_2$ in
$E^{\nu_+}_0(C([-1,1]))$ and is bounded in $E_0^{\nu}(C([-1,1]))$. Hence it is a Cauchy sequence in $E^{\nu_+}_0(C([-1,1]))$.  The
corresponding sequence $\{\z_n\}_{n\geq 1}\subset E^{\nu_-}_1(\RR)$ therefore converges in $E^{\nu_+}_1(\RR)$ to some
$\z \in E^{\nu}_1(\RR)$.  As each $\z_n$ solves \eqref{eq: G_0}, \eqref{eq: G_1}, \eqref{eq: second equivalent} with the
right-hand sides defined from $\zeta(\cdot/n)G_2$, it follows that $\z$ solves \eqref{eq: G_0}, \eqref{eq: G_1}, \eqref{eq: second
  equivalent} with the right-hand sides defined from $G_2$, giving rise in this way to a bounded linear map $G_2\mapsto \z$.  This
proves the first part of Theorem~\ref{thm: solution U}.

Let us prove the second part of Theorem~\ref{thm: solution U}.  Firstly, assume that $\nu\in(-p_0,0)$.  If $G\in E_0^\nu(Q_h\DD)$,
then~\eqref{eq: equation for U} gives $\partial_t P_1U=L_{\gamma,\tau}P_1U$.  As a consequence $U(t)\not\in Q_h\DD$ for some
$t\in \RR$ would imply that $P_1U$ is a non-trivial periodic solution on the centre manifold, in contradiction with
$\lim_{|t|\rightarrow \infty}||U(t)||_{\DD}=0$ (as $\nu\in(-p_0,0)$).  See the paragraph containing~\eqref{eq:ua} before
Proposition~\ref{prop: G} for the fact that the centre manifold (here for the linear problem) is filled by the equilibrium and
periodic solutions.  The above truncation procedure allows one to conclude that $U\in E^\nu_0(Q_h\DD)$ also when $\nu\in[0,p_0)$.

Finally, we turn to the third part of Theorem~\ref{thm: solution U} about uniqueness.  Let us first study the special case $G=0$
for $\nu\in (-p_0,p_0)$.  As $\Z$ is uniquely determined by $\z$ (see Proposition~\ref{prop: exists and unique}), let us consider
any solution $\z$ in $E^{\nu}_1(\RR)\cap C^2(\RR)$ to
\begin{equation*}
  L\z=\gamma^{-1}\tau^{-2}\z''-\Delta_D \z+\gamma^{-1}\z=0.
\end{equation*}
Observe that $\z''\in E^\nu_0(\RR)$ and let $\nu_-\in (-p_0,-|\nu|)$. For all test functions
$\tilde \z\in H^2(\RR)\cap E_1^{\nu_-}(\RR)$, integrations by parts give
\begin{equation*}
0=\int_\RR L\z \cdot\tilde \z dt=
  \int_\RR \z L\tilde \z dt
.
\end{equation*}
By Proposition~\ref{prop: on the exponential decay}, the map $\tilde \z\mapsto L\tilde \z=:Q$ is surjective from
$H^2(\RR)\cap E_1^{\nu_-}(\RR)$ to
\begin{equation*}
  \left\{Q\in E_0^{\nu_-}(\RR):\int_\RR Q(t)\cos(\k_0t)dt
  =\int_\RR Q(t)\sin(\k_0t)dt=0\right\}.
\end{equation*}
Thus, for a solution $\z$ we get $\int_\RR \z(t)Q(t)dt=0$ for all such $Q$.

Let $\eta_c,\eta_s\in E_0^{\nu_-} (\RR)$ be such that
\begin{equation*}
  \int_{\RR}\eta_c(t)\cos(\k_0t)dt=\int_{\RR}\eta_s(t)\sin(\k_0t)dt=1,
\end{equation*}
\begin{equation*}
  \int_{\RR}\eta_c(t)\sin(\k_0t)dt=\int_{\RR}\eta_s(t)\cos(\k_0t)dt=0.
\end{equation*}
If $Q\in E_0^{\nu_-}(\RR)$, then
\begin{equation*}
  \widetilde Q:=Q-\eta_c\int_\RR Q(\u)\cos(\k_0\u)d\u
  -\eta_s\int_\RR Q(\u)\sin(\k_0\u)d\u
\end{equation*}
satisfies $\int_\RR \widetilde Q(t)\cos(\k_0t)dt=\int_\RR \widetilde Q(t)\sin(\k_0t)dt=0$ and therefore by Fubini
\begin{multline*}
  0=\int_\RR \z(t)\left(Q(t)-\eta_c(t)\int_\RR Q(\u)\cos(\k_0\u)d\u
  -\eta_s(t)\int_\RR Q(\u)\sin(\k_0\u)d\u\right)dt
  \\=\int_\RR\left( \z(t)-\cos(\k_0t)\int_\RR \z(\u)\eta_c(\u)d\u
  -\sin(\k_0t)\int_\RR \z(\u)\eta_s(\u)d\u\right)Q(t)dt.
\end{multline*}
Hence the function $\z$ is in the span of the two functions $\cos(\k_0\cdot)$ and $\sin(\k_0\cdot)$:
\begin{align*}
  \z&=\cos(\k_0\cdot)\int_\RR \z(\u)\eta_c(\u)d\u
  +\sin(\k_0\cdot)\int_\RR \z(\u)\eta_s(\u)d\u 
\\&\in 
  \text{span}\{\cos(\k_0\cdot),\sin(\k_0\cdot)\}
\end{align*}
and, for all $t\in \RR$, we have (see~\eqref{eq:ua} or Lemma 2 in~\cite{Iooss2000a})
\begin{multline*}
(\z(t),\z'(t),\z(t+\cdot))\in
  \text{span}\Big\{\Big(\cos(\k_0t),-\k_0\sin(t),\cos(\k_0(t+\cdot))\Big)\, ,\\
  \Big(\sin(\k_0t),\k_0\cos(t),\sin(\k_0(t+\cdot))\Big)\Big\}
  =P_1\DD,
\end{multline*}
where $\z(t+\cdot)$ denotes the function $\v\rightarrow \z(t+\v)$ for $\v\in[-1,1]$.

We are now ready to check the uniqueness of the solution $U\in E^{\nu}_0(Q_h\DD)\cap C^1(\RR,Q_h\HH)$ for $\nu\in(-p_0,p_0)$. It
is clearly sufficient to check it for $G=0$ only.  Moreover, as $\Z$ is unique for a unique $\z\in E^\nu_1(\RR)$ (see
Prop.~\ref{prop: exists and unique}), it is enough to show that $\z=0$ is the unique solution in $E^{\nu}_1(\RR)\cap C^2(\RR)$ to
the equation
\begin{equation*}
  L\z
  =0
  ~\text{ such that }~
  \forall t\in \RR~~U(t,\cdot):=(\z(t),\z'(t),\z(t+\cdot))\in Q_h\DD,
\end{equation*}
where $U(t,\cdot)$ denotes the function $\v\rightarrow U(t,\v)$ for $\v\in[-1,1]$.  As $L\z=0$, we have just seen that necessarily
$U(t)
\in P_1\DD
$ for all $t\in \RR$. Hence $\z=0$ as desired.

The last part of Theorem~\ref{thm: solution U} results from~\eqref{eq: norm Z} and~\eqref{eq: norm z}, where the various constants
are uniform in $\nu$ on any compact subset of $(-p_0,p_0)$.

\emph{Remark.}  A look into the proofs of the present appendix shows that the arguments work as well for
\begin{multline*}
  Y=\bigl\{G=(G_0,G_1,G_2):\, G_0,G_1 \in E_0^{\nu}(\RR),\, G_2\in E_0^{\nu}(C^1([-1,1])),\, \bigr. \\
  \bigl. \eqref{eq: G_0}\text{ and }\eqref{eq: G_1}\text{ hold}\bigr\}.
\end{multline*}

\paragraph{Acknowledgement} JZ gratefully acknowledges funding by the EPSRC through project EP/K027743/1, the Leverhulme Trust
(RPG-2013-261) and a Royal Society Wolfson Research Merit Award.


\makeatletter
\renewcommand\section{\@startsection {section}{1}{\z@}%
           {18\p@ \@plus 6\p@ \@minus 3\p@}%
           {9\p@ \@plus 6\p@ \@minus 3\p@}%
           {\normalsize\bfseries\boldmath \noindent}}
\makeatother
\bibliography{jz}

\end{document}